\documentclass[lettersize,journal]{IEEEtran}

\makeatletter
\def\endthebibliography{%
	\def\@noitemerr{\@latex@warning{Empty `thebibliography' environment}}%
	\endlist
}
\makeatother

\usepackage{amssymb,amsmath,amsthm}
\usepackage{cases}
\usepackage[linesnumbered,lined,boxed,commentsnumbered,ruled]{algorithm2e}	

\usepackage{algorithm2e}
\usepackage{algorithmic} 
\usepackage{booktabs}
\usepackage{graphicx}
\usepackage{float}
\usepackage{float}
\usepackage{subfigure}
\usepackage{stmaryrd}
\SetSymbolFont{stmry}{bold}{U}{stmry}{m}{n}
\usepackage{multirow,url}
\usepackage{color,cite}
\usepackage{cite}

\usepackage{tikz,pgf}
\usetikzlibrary{fadings,shapes,arrows,shadows}
\usepackage{makecell}
\usepackage{array}
\usepackage{overpic} 
\usepackage{subfigure}
\usepackage{url}
\usepackage{float}
\usepackage{booktabs} 
\usepackage{amsmath}
\usepackage{amsfonts}
\usepackage{mathrsfs}
\usepackage{amsfonts,amsthm,bm,bbm}
\usepackage{pifont}
\usepackage{xcolor}

\newlength{\thickarrayrulewidth}
\newtheorem{lemma}{Lemma}

\newtheorem{definition}{Definition}
\newtheorem{theorem}{Theorem}

\newtheorem{game}{Game}

\theoremstyle{plain}

\newcommand{\com}[1]{} 
\renewcommand{\qedsymbol}%
{\rule{1ex}{1.5ex}}

\def\E{\mathcal{E}}

\def\G{\mathcal{G}}
\def\M{\mathcal{M}}
\def\N{\mathcal{N}}
\def\K{\mathcal{K}}

\definecolor{mycolor1}{RGB}{5,40,150}

\IEEEoverridecommandlockouts

\begin{document}

\title{Joint Communication and Computation Scheduling for MEC-enabled AIGC Services: A Game-Theoretic Stochastic Learning Approach\\
}

\author{\IEEEauthorblockN{Huaizhe Liu, Xinyi Zhuang, Jiaqi Wu, Yuan Luo, Bin Cao, and Lin Gao \vspace{-7mm}}
\thanks{This work was supported in part by the Natural
Science Foundation of Guangdong Province
(Grant No. 2024A1515010178), in part
by the Shenzhen Science and Technology Program
(Grant No. GXWD20231129103946001, 
KJZD20240903095402004, and JCYJ20230807114300001), and in part by the National Natural Science Foundation of China (Grant No. 62472367).}
\thanks{H. Liu, X. Zhuang, J. Wu, B. Cao, and L. Gao are with the School of Information Science and Technology, and the Guangdong Provincial Key Laboratory of Aerospace Communication and Networking Technology, Harbin Institute of Technology, Shenzhen, China. 
Y. Luo is with the School of Science and Engineering, The Chinese University of Hong Kong, Shenzhen, China. 
L. Gao is also with the Shenzhen Loop Area Institute, Shenzhen, China. 
(e-mail: gaol@hit.edu.cn) 
\emph{(Corresponding Author: Bin Cao, Yuan Luo, and Lin Gao)}}
\thanks{Part of the results have been published in IEEE WiOpt 2024 \cite{liu_wiopt2024}.}
}

\maketitle

\begin{abstract}
Artificial Intelligence Generated Content (AIGC) powered by Generative Diffusion Models (GDMs) has emerged as a transformative paradigm for automated content creation. 
To satisfy the stringent latency requirements of AIGC services in many edge intelligence scenarios (e.g., smart cities), Mobile Edge Computing (MEC) provides critical computational support by deploying GDMs at edge servers (ES) close to end users. 
This paper investigates an MEC-enabled AIGC network comprising multiple ES, wireless access points (APs), and mobile users (UEs) with heterogeneous latency and accuracy demands. 
We formulate a \emph{Joint Communication Association and Computation Offloading (JCACO)} game, where each UE strategically selects its serving AP, ES, and inference steps to minimize the overall service completion time while meeting accuracy constraints. 
The problem is challenging due to the network dynamics and the incomplete information. 
We prove that the JCACO game is  a \emph{potential game} under both complete and stochastic information settings, ensuring the existence of Nash Equilibrium (NE) in both cases. 
To derive the NE efficiently, we develop a distributed \emph{Multi-Agent Stochastic Learning} (MASL) algorithm that provably converges to the NE with strict performance guarantees. 
Unlike conventional best-response schemes, MASL requires neither the knowledge of other players' strategies nor global network information, making it fully distributed and adaptive to dynamic environments. 
We further provide a strict theoretical convergence analysis for MASL by using \emph{Ordinary Differential Equations} (ODEs).
Simulation results demonstrate that MASL significantly reduces service completion time compared with benchmark methods while satisfying accuracy constraints, confirming its effectiveness and practicality for real-world MEC-enabled AIGC networks.~~~~~ 
\end{abstract}

\begin{IEEEkeywords}
Artificial intelligence-generated content, mobile edge computing, \textcolor{black}{stochastic game}, multi-agent stochastic learning
\end{IEEEkeywords}

\section{Introduction}\label{section:introduction}
\subsection{Background and Motivations}
Nowadays, AI-Generated Content (AIGC) based on Generative Diffusion Models (GDMs) has emerged as a promising paradigm of content generation, revolutionizing the creation of diverse contents and driving significant technological advancements.
To generate diverse contents, GDMs utilize a stochastic denoising process to transform simple noise distributions into complex data distributions \cite{xu_aigcsurvey2024}.
This process occurs gradually through a series of small, reversible denoising process (called \emph{inference steps}).
At each step, GDMs predict the noise component that needs to be subtracted to move towards the desired data distribution.
Due to the scalability with large datasets and complex data distributions, GDMs are well-suited for various tasks across different domains, such as image synthesis and video generation \cite{croitoru_tpami_2023}.

The iterative denoising process in GDMs is computationally intensive, posing challenges in its  application and deployment in many practical scenarios,  especially concerning latency and resource consumption. 
Typically, GDMs produce content through a reverse diffusion process that requires significant computational power and extended processing time. 
Some prior researches aimed to alleviate this complexity. 
For instance, in \cite{chuang_cvpr_2022} and \cite{rombach_cvpr_2022}, Chung \emph{et al.} and Rombach \emph{et al.} have modified inference mechanisms to enhance performance while reducing inference steps. 
In \cite{xu_iccv_2023}, Xu \emph{et al.} introduced the Versatile Diffusion (VD) network to lower computational demands. 
However, in the above works, GDMs are often deployed on remote cloud servers due to the considerable model size and computational requirements, which introduces substantial communication overhead and potential delays in delivering generated content to end-users.

\emph{Mobile Edge Computing (MEC)} \cite{MEC01, MEC02, MEC03} has emerged as a crucial enabler for GDM-based AIGC services, addressing the challenges of latency and resource consumption by moving computation closer to end-users.
Many existing works in this field have primarily focused on the joint scheduling of service management, model inference, computation task offloading, and resource allocation. 
For example, in \cite{du_tmc2024, zhang_twc2025, xu_twc2025, gao_tmc2025, zhang_iotj2024, yang_tmc2025, feng_tccn2025, zhuang_infocom2025, wujiaqi_tmc2025, liuy_tccn2025,jia_TWC2026,jia_comm2026,mao_jsac2025,mao_iotj2026,duj_tmc2024,tangc_tmc2025}, authors 
employed optimization methods based on complete information, aiming to enhance the Quality of Experience (QoE) for UEs.
However, complete information are difficult to obtain in practice due to the time-varying environment (i.e., UEs' inference tasks request).
To address this, authors in \cite{xux_tccn2025} and \cite{feng_infocom} applied stochastic optimization   or dynamic learning approaches under incomplete information, aiming to optimize overall system performance. 
However, the self-interested behavior of rational UEs leads to competition for limited resources, resulting in degraded performance, service outages, and increased waiting times. 
Therefore, purely optimization-based approaches are impractical in real-world scenarios.

\textcolor{black}{Game-theoretic methods are well-suited for modeling strategic interactions among multiple users in a decentralized MEC environment. 
	In game theoretic frameworks, UEs act as rational game players, strategically decide on resource allocation and task offloading to optimize their individual utility while accounting for others' actions \cite{liu_tmc2025,wen_iotj2024,ji_tmc2025,ye_tvt2025,lai_mwc2024,wang_ton2023,lai_tvt2025,lei_tmc2026,liu_tccn2025}.}
In our previous work \cite{liu_wiopt2024}, we studied a joint communication and computation scheduling problem in MEC-enabled AIGC network, and designed a distributed algorithm to minimize the task completion time for AIGC service users. 
However, the above works ignored the inherent stochasticity of  environment, such as the dynamic states of UEs and the time-varying conditions of wireless channels. 
Furthermore, they typically assume that UEs can fully observe and react to the strategies of others in a best-response manner, a condition that is rarely met in practical scenarios. 

To better capture the dynamic nature of the network environment, we propose a novel \emph{stochastic game-theoretic framework} to model and analyze the decision making process of UEs under incomplete information.
To derive the game equilibrium in such a stochastic environment, we propose a multi-agent \emph{stochastic learning} algorithm, which can achieve the Nash Equilibrium (NE) of the stochastic game, without requiring the knowledge of other players' strategies nor the complete network information. 

\vspace{-2mm}
\subsection{Solution and Contributions}
Specifically, we consider a general MEC-enabled mobile AIGC network scenario as shown in Fig.\ref{fig:system}, which consists of multiple edge computing servers (ES) offering GDM-based AIGC services, multiple access points (APs) providing wireless communication services, 
and multiple UEs requesting AIGC services from ES via APs.
Each ES is deployed with a specific GDM with a particular size and capability.
Different GDMs (on different ES) may have different sizes and capabilities, thus can provide AIGC services with different qualities.
APs are located at different locations and can provide communication services for UEs in a pre-defined coverage area using a limited communication bandwidth.
UEs are running different AIGC applications with diverse latency and accuracy requirements, and can request AIGC services from different ES through different APs, considering the heterogeneous capability of GDMs and the different locations of APs. 
We refer to a UE as an \emph{active} UE if it is
requesting AIGC services, and as an \emph{inactive} UE otherwise.

Fig.~\ref{fig:system} shows an example of such an MEC-enabled mobile AIGC network, 
where UEs 1–4 execute image generation tasks (e.g., using DALL\textperiodcentered E 3\footnote{https://dalle3.ai/} and Midjourney\footnote{https://www.midjourney.com/}), and UEs 5 and 6 perform video generation applications (e.g., ImagineArt\footnote{https://www.imagine.art/}).
ES 1, 2, and 3 are equipped with different GDMs, each capable of generating different types of content with varying quality. 
In this example, UE 1 requests AIGC service from ES 2 via the communication link provided by AP 1, and 
UE 4 requests AIGC service from ES 1 via the communication link provided by AP 3. 
This example also illustrates that the quality of generated contents improves as the number of inference steps increases, demonstrating a direct trade-off between content quality and computational effort.

Clearly, when a UE requests AIGC services, it needs to select an appropriate AP for content transmission and an suitable ES (or GDM) for content generation. 
Note that UEs choosing the same AP or ES will compete with each other for the limited communication bandwidth or computation resource.
Moreover, as shown in many existing studies (e.g.,\cite{yang_tmc2025,zhuang_infocom2025}, and \cite{xux_tccn2025}), the quality of generated content generated by GDMs is highly dependent on the number of inference steps.
Specifically, a larger number of inference steps yields better content quality but incurs higher computational costs, as illustrated in Fig. \ref{fig:system}.
Therefore, UEs must carefully determine the number of inference steps to balance content quality and computational cost. 
To address these concerns, we focus on studying the following two key problems in a decentralized MEC-enabled AIGC network: 
\begin{enumerate}[]
	\item \textit{Communication Association Problem}: How to determine the most appropriate AP for each UE, considering factors such as the locations of UEs and APs, the service latency requirements of UEs, and the communication workloads of APs?  
	\item \textit{Computation Offloading Problem}: How to determine the best GDM (ES) and the best number of inference steps for each UE, considering factors such as the model capabilities of GDMs, the service quality requirements of UEs, and the computation workload of ES?
\end{enumerate}

\begin{figure}[t]
	\centering
	\includegraphics[width=0.9\linewidth]{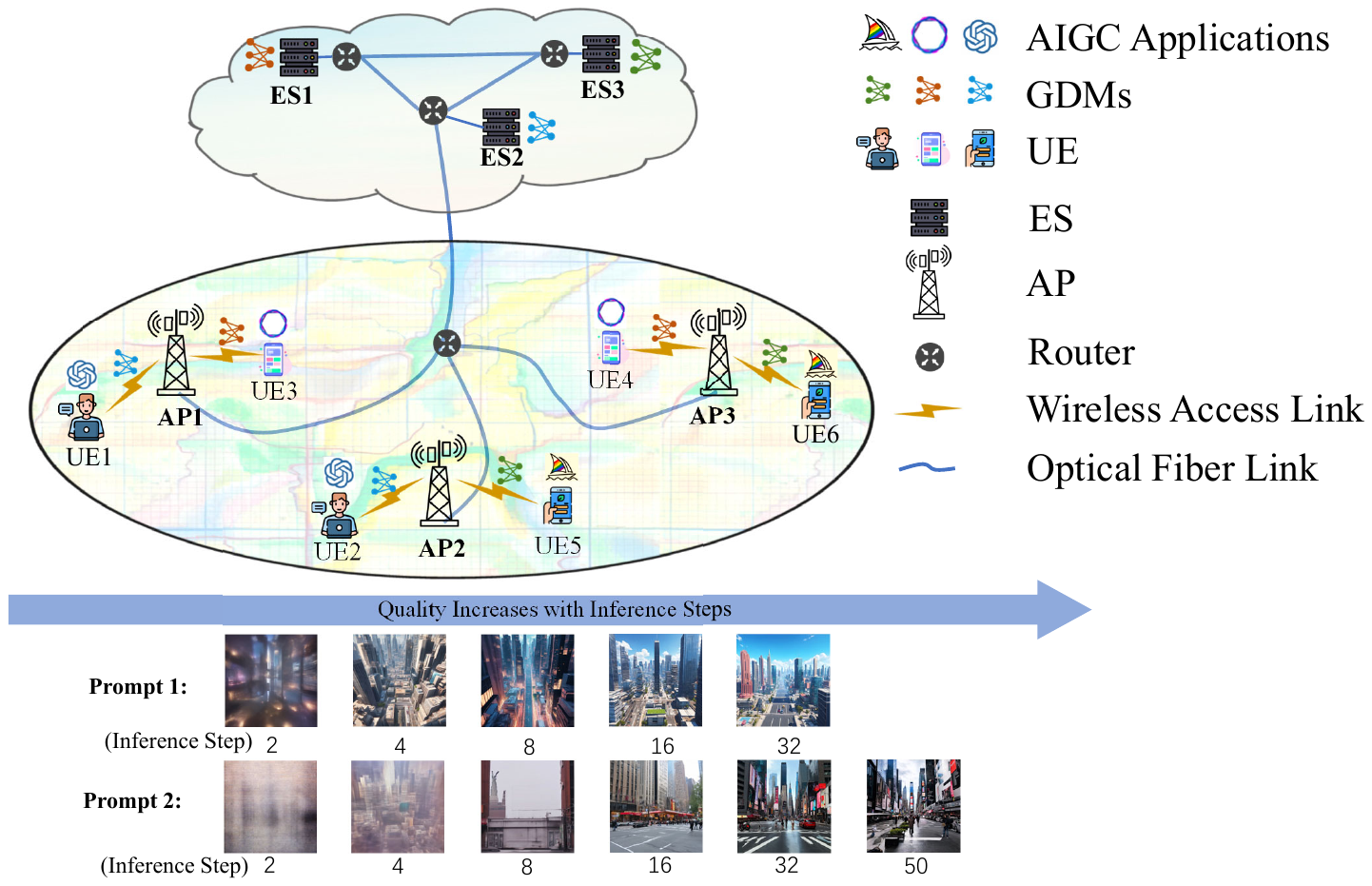}
	\vspace{-3mm}
	\caption{An example of MEC-enabled GDM-based AIGC Network.}
\label{fig:system}
\vspace{-3mm}
\end{figure}

The problems are challenging due to the decentralized nature of the system and the self-interested behavior of UEs,  thereby requiring robust solutions to ensure efficient, stable, and self-enforcing outcomes. 
Game theory \cite{GAME01, GAME02, GAME03} provides a powerful theoretic framework for analyzing the behaviors of self-interested players in decentralized and autonomous environments. 
Therefore, we utilize game theory to tackle the aforementioned problems in decentralized MEC-enabled AIGC networks.  
More specifically, we formulate a non-cooperative game among UEs, called \emph{Joint Communication Associating and Computation Offloading (JCACO)} game. 
In this game, each UE acts as a game player, 
determining  the best AP for communication, the best ES for computation, and the optimal number of inference steps for the GDM (on the chosen ES), aiming to minimize inference time while satisfying accuracy requirements. 

We analyze the JCACO game in two different information scenarios: \emph{complete information} and \emph{stochastic information}. 
In the complete information scenario, UEs can observe the complete activity states of other UEs (i.e., complete information)  during the game. 
In contrast, in the stochastic information scenario, UEs know only the probability distributions of other UEs' activity states (i.e., stochastic information), without knowing the complete activity information. 
We prove that the JCACO game is \emph{potential game} \cite{potential_game} in both information scenarios, thus ensuring the existence of Nash equilibrium (NE) in both scenarios. 
We further propose a distributed \emph{Multi-agent Stochastic Learning (MASL)} algorithm that converges to the NE with strict performance guarantee under stochastic information. 
We further provide a strict theoretical convergence analysis
for MASL by using Ordinary Differential Equations (ODEs). 
Unlike traditional best response algorithms in game theory, our proposed MASL algorithm requires neither the knowledge of other players' strategies, nor the complete system information, making it fully distributed and well-suited for dynamic environments. 
To sum up, the main contributions of this work are summarized as follows:
\begin{itemize}
\item[$\bullet$] \textit{Novel Model Scenario:}
We study a novel MEC-enabled GDM-based AIGC network, which incorporates heterogeneous GDMs with varying capabilities and heterogeneous AIGC users with diverse QoE  requirements.
We explore the joint communication and computation scheduling problem for UEs in a practical scenario with decentralized, autonomous, and self-interested UEs.

\item[$\bullet$] \textit{Novel Game-Theoretic Analysis:}
We formulate a non-cooperative game, \emph{JCACO}, to model the communication and computation competition among UEs. 
We prove that the JCACO game is a \emph{potential game} in both complete and stochastic information scenarios, guaranteeing the existence an NE. 
We further propose a stochastic learning-based algorithm, \emph{MASL}, to achieve NE with rigorous performance guarantees under stochastic conditions.

\item[$\bullet$] \textit{Performance Evaluations and Insights:}
Extensive simulation results show that the proposed MASL algorithm significantly reduces service completion time while meeting accuracy requirements, in comparison to baseline solutions.
These results highlight the effectiveness of MASL in real-world AIGC network scenarios.
\end{itemize}

The rest of the paper is organized as follows. 
In Section \ref{section:System_model}, we present the system model. 
In Sections \ref{section:complete_info_game} and \ref{section:sto_info_game}, we formulate and analyze the JCACO game under complete information and stochastic information, respectively. 
In Section \ref{section:algorithm}, we design the MASL algorithm for the stochastic information scenario. 
We present simulation results in 
Section \ref{section:simulations}, and finally conclude in Section \ref{section:Conclusion}. 
\section{Related Works}\label{section:Related_works}

\begin{table}[t]
	\begin{center}\caption{A comparison with existing works}\label{table:reference_compare}
		{\fontsize{7.2}{7.2}\selectfont
			\begin{tabular}{|c|c|c|c|c|}
\hline
				\multirow{3}{*}{\textbf{References}} & \multicolumn{2}{c|}{\textbf{Decision Variables}} & \multirow{3}{*}{\textbf{Information}} & \multirow{3}{*}{\textbf{Method}}\\
				\cline{2-3}
				 & \makecell[c]{ \emph{Communication} \\ \emph{Scheduling}} & \makecell[c]{\emph{Computation} \\ \emph{Scheduling}} &  &     \\ 
				\hline
                \hline
\cite{du_tmc2024,zhang_twc2025,gao_tmc2025,yang_tmc2025,feng_tccn2025} &  & \checkmark  & \multirow{3}{*}{Complete} &  \multirow{4}{*}{Optimization}    \\ 
				\cline{1-3}
				\cite{liuy_tccn2025,jia_TWC2026,jia_comm2026} & \checkmark &  &  &       \\
				\cline{1-3}
				\cite{mao_jsac2025, mao_iotj2026,duj_tmc2024,tangc_tmc2025,
					 xu_twc2025,zhang_iotj2024,zhuang_infocom2025,wujiaqi_tmc2025} & \checkmark & \checkmark  &  &     \\
				\cline{1-4}
				\cite{xux_tccn2025,feng_infocom} &  & \checkmark & Stochastic  &     \\
						\hline
                \hline
		\cite{liu_tmc2025,wen_iotj2024,ji_tmc2025,lai_mwc2024,ye_tvt2025} &  & \checkmark & \multirow{2}{*}{Complete} &       \\
				\cline{1-3}
				\cite{wang_ton2023, lai_tvt2025,lei_tmc2026,
					liu_wiopt2024,liu_tccn2025} & \checkmark & \checkmark &  &\multirow{3}{*}{Game} \\
				\cline{1-4}
				\cite{SLA1,SLA2} & \checkmark &  & \multirow{2}{*}{Stochastic} & \\
				\cline{1-3}
				\cite{SLA3} &  & \checkmark &  &      \\
				\hline
                \hline
				\textbf{Our work} & \checkmark & \checkmark & $\textbf{Stochastic}$ & \textbf{Game}  \\
		\hline
			\end{tabular}
		}
	\end{center}
\end{table}

Recently, MEC-enabled AIGC services have attracted considerable attention from both industry and academia. 
In this section, we will review existing works in this field from two different modeling perspectives: optimization modeling and game modeling, under two different information scenarios: complete information and stochastic information.
\subsection{Optimization Modeling under Complete Information}
Many studies in MEC-enabled AIGC networks assume the availability of complete information, enabling centralized decision-making for global optimization \cite{zhang_twc2025,xu_twc2025,gao_tmc2025,du_tmc2024,zhang_iotj2024,yang_tmc2025,feng_tccn2025,zhuang_infocom2025,wujiaqi_tmc2025,liuy_tccn2025,jia_TWC2026,jia_comm2026,mao_jsac2025,mao_iotj2026,duj_tmc2024,tangc_tmc2025}.
Within this framework, it is assumed that all relevant system information, including network status and UE requirements, is fully accessible.
\textcolor{black}{In \cite{du_tmc2024,zhang_twc2025,gao_tmc2025,yang_tmc2025,feng_tccn2025}, the authors focus on computation resource scheduling for AIGC services, aiming to optimize model selection, denoising steps, and collaborative offloading to balance service quality with latency.
	In \cite{liuy_tccn2025,jia_TWC2026,jia_comm2026}, the authors focus on communication scheduling, aiming to optimize resource allocation and service quality in emerging wireless networks.
	Meanwhile, in \cite{xu_twc2025,zhang_iotj2024,zhuang_infocom2025,wujiaqi_tmc2025,mao_jsac2025,mao_iotj2026,duj_tmc2024,tangc_tmc2025}, the authors jointly optimize both communication and computation resources, addressing multi-objective optimization problems for AIGC task offloading with goals such as minimizing latency, improving energy efficiency, and achieving load balancing.}

However, the assumption of complete information often fails to capture the complexities of real-world deployments. In practice, system states and UE behaviors are inherently dynamic, making it difficult to guarantee complete visibility. 
As a result, models based on such assumptions may lack practical applicability.

\subsection{Optimization Modeling under Stochastic Information}
To address the limitations of complete information models, \cite{xux_tccn2025} and \cite{feng_infocom} have explored optimization under stochastic information. 
In \cite{xux_tccn2025}, Xu \emph{et al.} studied a content quality maximization problem in MEC-enabled AIGC networks, aiming to maximize content quality while adhering to latency and energy consumption constraints.
In \cite{feng_infocom}, Feng \emph{et al.} studied a novel online text-to-image AIGC request scheduling problem under a heterogeneous edge computing scenario with the objective of striking a balance between tardiness delay of AIGC requests and quality of generated results.

While these studies represent important progress, they largely adopt centralized optimization approaches that may not effectively capture the decentralized and autonomous nature of real-world edge networks. In practical scenarios,  self-interested UEs compete for limited resources. 
Unilateral optimization methods that neglect user interaction dynamics may lead to inefficient or unstable outcomes. 

\subsection{Game Modeling under Complete Information}
To model strategic interactions among UEs in MEC-enabled AIGC networks, game-theoretic methods have been widely adopted \cite{liu_tmc2025,wen_iotj2024,ji_tmc2025,ye_tvt2025,wang_ton2023,lai_tvt2025,lei_tmc2026,liu_tccn2025,lai_mwc2024}. 
To be specific, in \cite{liu_tmc2025}, Liu \emph{et al.} proposed a contest-based incentive mechanism to motivate edge devices to allocate resources effectively in image generation tasks.
In \cite{wen_iotj2024}, Wen \emph{et al.} designed an incentive mechanism for edge AIGC services in 6G IoT networks, incorporating a contract theory model to incentivize ASPs to provide AIGC services to clients.
In \cite{ji_tmc2025}, Ji \emph{et al.} formulated an auction framework to tackle the time-cumulative social cost optimization problem, aiming to achieve the trade-off between inference cost and accuracy.
In \cite{ye_tvt2025}, Ye \emph{et al.} proposed a two-stage resource allocation framework using generative diffusion models and contract theory to jointly optimize prompt, inference steps, computation resources, and transmission rate, in order to improve the quality of AI-generated content and reduce the latency.
In \cite{lai_mwc2024}, Lai \emph{et al.} designed a resource-efficient incentive mechanism for mobile AIGC networks in resource-constrained environments.
\textcolor{black}{In \cite{wang_ton2023}, Wang \emph{et al.} designed a Stackelberg game to model the leader-follower interactions between resource-constrained UAVs and cooperative ground vehicles.
	In \cite{lai_tvt2025}, Lai \emph{et al.} proposed a learning-based Stackelberg game framework between AIGC service providers and vehicular metaverse users, employing a Transformer-enhanced deep reinforcement learning algorithm to derive optimal pricing and computational resource allocation policies that balance service quality and provider profitability. }
In \cite{lei_tmc2026}, Lei \emph{et al.} formulated a joint scheduling problem as a multi-node non-cooperative game, and designed a Federated MADDPG algorithm to incentivize privacy-preserving collaborative offloading among distributed edge servers.
In \cite{liu_tccn2025}, Liu \emph{et al.} ‌proposed a GDM-based multi-modal semantic communication framework (GM-SemCom) to optimize resource allocation in mobile AIGC networks, specifically developing a Stackelberg game model with a GDM-based algorithm to address the resource trading problem.
In our early work \cite{liu_wiopt2024}, we studied a joint communication and computation scheduling problem in MEC-enabled AIGC network, and designed a distributed algorithm to minimize the task completion time for AIGC service users.
These game-theoretic frameworks allow for the analysis of equilibrium behaviors in distributed decision-making environments.

While the above game-theoretic methods offer a more accurate representation of real-world scenarios, they assume complete information about the system, which is not always feasible. 
In practical MEC-enabled AIGC systems, the utility of each user fluctuates due to dynamic network conditions, making static models based on complete information less effective. This highlights the need for models that can handle such uncertainties and adapt to changing environments.

\subsection{Game Modeling under Stochastic Information}
To overcome the limitations of static game-theoretic models under complete information, the authors of \cite{SLA1, SLA2, SLA3} have applied stochastic game theory to analyze strategic behavior in dynamic, uncertain environments. 

In \cite{SLA1} and \cite{SLA2},  authors employed stochastic game frameworks to study channel allocation and interference mitigation in dynamic wireless networks.
In \cite{SLA3}, Zheng \emph{et al.} modeled multi-user computation offloading in mobile cloud computing, where UE activity and wireless conditions exhibit random fluctuations. These studies offer a more realistic representation of dynamic interactions in MEC-enabled AIGC services, highlighting the need for coordinated management of communication and computation resources.

Although these approaches provide theoretical insight, their applicability to AIGC scenarios remains limited. Most of these studies focus either on communication or computation resource management, without considering the interdependencies between the two. Moreover, they typically overlook the trade-offs between inference accuracy and computational latency, which are crucial to the performance of AIGC services. 

\subsection{New Features of Our Work}
In this work, we address the joint communication and computation scheduling problem in MEC-enabled AIGC networks by incorporating stochastic information and self-interested user behaviors. 
We propose a stochastic game-theoretic approach that effectively captures the interdependencies between user decisions and the evolving system dynamics. 
Unlike previous studies, our model jointly optimizes both communication and computation resources, ensuring that all factors influencing AIGC service performance are considered and improved. 
To summarize, our work provides a significant advancement over previous studies by integrating stochastic game theory into the joint scheduling of communication and computation resources for MEC-enabled AIGC services. 
This approach better reflects the complexities of real-world, decentralized environments and offers a novel solution for optimizing both task completion time and service quality in dynamic environments. 
For clarity, we summarize the key features of our approach and existing methods in Table \ref{table:reference_compare}. 
\section{System Model}\label{section:System_model}
\begin{table}[t]
	\caption{Main notations in this work}
	\label{table:main_notations} 
	\hspace{-3mm}
	\begin{tabular}{l|p{20em}}
		\hline
		\textbf{Symbols} & \textbf{Description} \\
		\hline
		\hline
		$\mathcal{M}$ & The set of APs, with $m \in \mathcal{M}$ denoting an particular AP\\
		$\mathcal{K}$ & The set of ES (GDMs), with $k \in \mathcal{K}$ denoting an particular ES (GDM)\\
		$\mathcal{N}$ & The set of UEs,   $n \in \mathcal{N}$ denoting a particular UE \\
		$R_{nm}$ & The channel capacity of the wireless link between UE $n$ and AP $m$ \\
		$T^{\mathrm{Acc}}_{nm}$ & The transmission time required for transmitting the
		generated content to UE $n$ from AP $m$ \\
		$T^{\mathrm{Comp}}_{nm}$ & The computation time required by the GDM on ES $k$ for generating content of UE $n$ \\
		$\boldsymbol{\omega}  $ & The complete information realization 
		$\boldsymbol{\omega} \triangleq (\omega_n \in \{0,1\}, \forall n\in \N) $ \\
		$\mathbf{X} \triangleq(x_{nm},\forall n,m)$ & The UE-AP association relationships between all APs and all UEs\\
		$\mathbf{Y}\triangleq(y_{nk},\forall n,k)$ & The offloading matrix consisting
		of the offloading indicators between all ES and all
		UEs  \\
		$\mathbf{D}\triangleq(d_{nk},\forall n,k)$ & The inference step matrix consisting of the   inference steps used by all GDMs for all UEs. \\
		$T^{\mathrm{Acc}}_{n}(\mathbf{X};\boldsymbol{\omega})$ & The actual transmitting completion time of UE $n$ under complete information realization \\ 
		$T^{\mathrm{Comp}}_{n}(\mathbf{Y,D};\boldsymbol{\omega})$ & The actual computation completion time of UE $n$ under complete information realization\\
		$T^{\mathrm{Total}}_{n}(\mathbf{X,Y,D};\boldsymbol{\omega})$ & The actual computation completion time of UE $n$ under complete information realization\\
		$\overline{T}^{\mathrm{Acc}}_{n}(\mathbf{X})$ & The expected transmitting completion time of UE $n$ under stochastic information \\ 
		$\overline{T}^{\mathrm{Comp}}_{n}(\mathbf{Y,D})$ & The expected computation completion time of UE $n$ under stochastic information \\
		$\overline{T}^{\mathrm{Total}}_{n}(\mathbf{X,Y,D})$ & The expected computation completion time of UE $n$ under stochastic information \\
		\hline
	\end{tabular}
\end{table}
\subsection{Network Model}
As shown in Fig.~\ref{fig:system}, we consider an MEC network consisting of a set $\M=\{1,2,...,M\}$ of wireless APs, a set $\K=\{1,2,...,K\}$ of ES, and a set $\N= \{1,2,...,N\}$ of UEs.
Each AP is located at a specific location, providing network access service for UEs within its signal coverage area. 
Each ES is deployed with a trained GDM and can offer GDM-based AIGC services. 
For notational convenience, we refer to the GDM deployed on ES $k$ as GDM $k$.
APs and ES are connected via routers and optical fiber links, enabling efficient communication.
UEs can request AIGC services from ES through nearby APs. 
Different from cloud computing servers(e.g., Google Data Centers), ES are deployed at the network edge that approximates to UEs, and thus can provide AIGC services for UEs with low round-trip time (RTT). \par
As in many existing studies (e.g., \cite{SLA1,SLA2,SLA3}), we adopt a time-slotted structure, where the entire time period is divided into multiple \emph{time slots} of equal time length (e.g., 1 second). 
We assume a \emph{quasi-static} network, where the network state, such as the locations of UEs and wireless channel conditions, keeps unchanged within each time slot, but varies randomly across different time slots. 
For convenience, the key notations used in this work are summarized in Table~\ref{table:main_notations}.

\subsection{UE Activity State Model}
For clarity, we refer to a UE as an \emph{active} UE if it is requesting AIGC services, and as an \emph{inactive} UE otherwise. The activity state of each UE, which indicates whether the UE is actively requesting AIGC services, plays a crucial role in determining its demand for services from ESs. 

It is important to note that, in practice, UEs may not always be active in every time slot. 
Let $\omega_{n} \in \{0,1\}$ represent the activity state of UE $n$ during a specific time slot.
Specifically,  $\omega_{n} = 1$ indicates that UE $n$ is active, meaning it is requesting AIGC services during that time slot, while $\omega_{n} = 0$ indicates that UE $n$ is inactive and not making any service request. 
This binary representation of the activity state simplifies the modeling of service requests and facilitates future analysis of network behavior and resource allocation. 
For notational convenience, we define $\N_A$ as the set of active UEs, i.e., $\N_A \triangleq \{n \in \N | \omega_{n} = 1 \}$. 

\textcolor{black}{In practice, UE requests for AIGC services may exhibit temporal and spatial correlations, especially in hotspot scenarios or during specific peak periods. To maintain analytical tractability, and under the quasi-static setting considered in this work, we assume that within the considered decision horizon the distribution of UE activity states remains stationary. Following \cite{SLA2}, we then model the activity state $\omega_n$ of each UE $n$ as an on-off \emph{Bernoulli random variable} with parameter $p_n$.}
That is~\footnote{
	From a mathematical perspective, $p_n$ is continuous. However, in practical network systems, there exists a minimum non-zero measurement, which we denote as $\Delta^{\mathrm{prob}}$.}:
\begin{equation} \label{active_UE} \small
	\begin{cases}
		\Pr(\omega_n = 1) = p_n, \\
		\Pr(\omega_n = 0) = 1 - p_n,
	\end{cases}
\end{equation}
where $p_n \in [0,1]$ is the active probability of UE $n$.
For notational convenience, we introduce a random vector $\boldsymbol{\omega}$ to denote the activity states of all UEs, that is,
\begin{equation} \label{active_state_UE}
	\boldsymbol{\omega} \triangleq (\omega_n)_{n \in \mathcal{N}}.
\end{equation}

A specific realization of $\boldsymbol{\omega}$ is referred to as a \emph{sample}, and the set of all possible realizations constitutes the sample space, denoted by $\Omega$. 
We denote by $p(\boldsymbol{\omega}) = \Pr(\boldsymbol{\omega})$ the joint probability mass function of the activity state $\boldsymbol{\omega}$. 
Under the independence assumption across UEs, this joint probability factorizes as:
\begin{equation}
	\begin{small}
		p(\boldsymbol{\omega}) = \prod_{n \in N_A} p_n  \cdot \prod_{n \in N/N_A} (1-p_n).
	\end{small}
\end{equation}

\subsection{AIGC Service Model}
We consider a general AIGC scenario where UEs request different AIGC services, and ES deploy different GDMs to fulfill these demands.
Each GDM is capable of providing a range of AIGC services, but the quality and cost of these services vary depending on the model’s capabilities and characteristics. 
For instance, a GDM specifically optimized for certain data modalities (e.g., images) may deliver superior output quality and lower computational overhead within its domain of specialization. 
However, when applied to content generation beyond its intended modality, its performance may degrade due to reduced output fidelity and diminished processing stability.

The performance of GDM-based AIGC services is influenced by several factors, including model size, training data, and the computational resources required for operation \cite{croitoru_tpami_2023,chuang_cvpr_2022,rombach_cvpr_2022}. 
Generally, larger models with more extensive training on diverse datasets tend to produce higher-quality outputs but also demand greater computational power and storage, leading to higher operational costs. 
In contrast, smaller, more specialized models may offer more cost-effective solutions for specific applications, although they often lack the versatility and performance of their larger counterparts.

As discussed in \cite{yang_tmc2025, feng_tccn2025}, the quality of the AIGC services provided by GDMs is typically evaluated by the \emph{average error of generated contents} (AEC). AEC is heavily influenced by the number of inference steps used by the GDMs in their reverse diffusion process for iterative denoising.
Specifically, let $E_{nk}$ denote the AEC of UE $n$'s content generated by GDM $k$, and let $d_{nk}$ denote the number of inference steps used by GDM $k$ for UE $n$'s content.
According to \cite{zhuang_infocom2025} and \cite{wujiaqi_tmc2025}, the AEC of UE $n$'s content generated by GDM $k$ can be evaluated through the following function: 
\begin{equation} \label{AEC}
E_{nk} = \epsilon_k^{\mathrm{fwd}} \cdot e^{-\gamma_{nk} \cdot d_{nk}}, \forall k\in \K, n\in \N,
\end{equation}
where $\epsilon_k^{\mathrm{fwd}}$ is a model-specific scaling factor evaluating the inherit inference capability of GDM $k$, and $\gamma_{nk}$ is an attenuation factor that characterizes the fitness of GDM $k$ for UE $n$'s content requirement. 
A larger $\gamma_{nk}$ leads to a smaller AEC $E_{nk}$, implying that GDM $k$ is more effective in completing the AIGC task for UE $n$.
Clearly, we consider a general scenario where different GDMs have different strengths in generating different types of contents.

The computational cost of generating AIGC content with a GDM is quantified by the total number of Floating Point Operations (FLOPs), which is directly proportional to the number of inference steps used.
Let $\xi_k$ represent the number of FLOPs required for each inference step of GDM $k$ (i.e., FLOPs per step).
The value of $\xi_k$ is related to the model size and architecture, with larger models and more complex architectures typically leading to a higher $\xi_k$.
The total number of FLOPs required by GDM $k$ to generate UE $n$'s AIGC content, denoted as $F_{nk}$, is then given by:
\begin{equation}
F_{nk} = \xi_{k} \cdot d_{nk}, \forall k\in \K, n\in \N.
\end{equation}

\subsection{Communication Model}\label{section:communication_model}
As shown in Fig.\ref{fig:system}, when a UE requests an AIGC service,
it needs to select an ES or GDM for content generation and an AP for data transmission.
Next, we will discuss the communication process, which mainly consists of the wireless communications between UEs and APs and the wired communications between APs and ES.

\subsubsection{Communication links between UEs and APs}
To capture the time-varying wireless channels, we assume that the channels between UEs and APs experience Rayleigh fading. 
Specifically, the instantaneous channel gain from UE $n$ to AP $m$ is given by $h_{nm} = (d_{nm})^{-\beta}\alpha_{nm}$, where $d_{nm}$ represents the distance between UE $n$ and AP $m$, $\beta$ denotes the path loss exponent, and $\alpha_{nm}$ represents the Rayleigh fading factor.
It is important to note that while $h_{nm}$ varies over time, it remains constant within a given time slot.
Let $R_{nm}$ denote the channel capacity of the wireless link between UE $n$ and AP $m$.
According to the Shannon-Hartley theorem, the channel capacity can be expressed as:
\begin{equation} \small \label{channel_capacity}
\begin{aligned}
	R_{nm} = W \cdot \log_{2}(1+\frac{\rho_{nm} \cdot h_{nm}}{N_0}),\forall m\in \M, n\in \N,
\end{aligned}
\end{equation}
where $\rho_{nm}$ denotes the transmission power for transmiting UE $n$'s data from AP $m$, and  $h_{nm}\triangleq \mathtt{d}_{nm} ^{-\theta}$ denotes the wireless channel gain. 
Here $\mathtt{d}_{nm}$ denotes the distance between UE $n$ and AP $m$, while $\theta$ denotes the path loss factor.
Next, the total transmission time required to transmit the generated content to UE $n$ from AP $m$  is derived as follows: 

\begin{equation} \small \label{transmission_delay}
\begin{aligned}
	T_{nm}^{\mathrm{Acc}} \triangleq  \frac{D_{n}}{R_{nm}},\forall m\in \M, n\in \N.
\end{aligned}
\end{equation}
where $D_n$ denotes the data size of the generated content for UE $n$. 
It is worth noting that the transmission time derived in (\ref{transmission_delay}) assumes that UE $n$ occupies the entire wireless transmission capacity of AP $m$.
In reality, when multiple UEs are connected to the same AP, they share the limited communication resources (e.g., bandwidth). 
In this work, we adopt a proportional fairness (PF) resource-sharing mechanism for communication resource, where all UEs connected to the same AP share the total communication resources in proportion to their transmission demands. 
Under this scheme, all UEs connected to the same AP will complete data transmission at approximately the same time.

Let  $x_{nm} \in \{0,1\}$ denote the UE-AP association indicator, where $x_{nm}=1$ indicates that UE $n$ is connected to AP $m$.
The actual transmission completion time for active UEs connected to AP $m$ is given by:
\begin{equation} \small \label{load_AP}
\begin{aligned}
	L_{m}(\mathbf{X};\boldsymbol{\omega}) = \sum_{n\in\N_{A}} x_{nm}\cdot  T_{nm}^{\mathrm{Acc}}, \forall m\in \M,
\end{aligned}
\end{equation}
where $\mathbf{X} \triangleq (x_{nm}, \forall n, m) $.
This completion time is equivalent to the total time required for AP $m$ to sequentially transmit data to all of its connected UEs.
Therefore, we refer to $L_m(\mathbf{X};\boldsymbol{\omega})$ as the \emph{load} of AP $m$, which accounts for the activeness of each connected UE.
The actual transmitting completion time of UE $n\in\N_{A}$ can be expressed as:
\begin{equation} \small \label{acc_delay}
\begin{aligned}
	T_{n}^{\mathrm{Acc}}(\mathbf{X};\boldsymbol{\omega}) = \sum_{m\in \M} x_{nm} \cdot L_m (\mathbf{X};\boldsymbol{\omega}), \forall  n\in \N_{A}.
\end{aligned}
\end{equation}

\subsubsection{Communication Links between APs and ES}
Following the approach used in previous studies(e.g., \cite{M.Tang_TMC2022} and \cite{greedy1}), we assume that the wired optical fiber links between APs and ES have infinite capacity. 
As a result, the transmission time between APs and ES is neglected.

\subsection{Computation Model}\label{section:computation_model}
We now discuss the computation process for each GDM on each ES.
Let $f_k$ denote the number of FLOPs (per unit of time) of ES $k$.
The total computation time required by the GDM on ES $k$ to generate content for UE $n$ can then be expressed as:
\begin{equation} \small \label{compute_delay}
\begin{aligned}
	T_{nk}^{\mathrm{Comp}} =  \frac{F_{nk}}{f_k},  \forall k \in \K, n \in \N,
\end{aligned}
\end{equation}
where $F_{nk} \triangleq \xi_k \cdot d_{nk}$ denotes the total number of FLOPs used by GDM $k$ to generate content for UE $n$.
Here, $d_{nk}$ is the number of inference steps that UE $n$ requires when using GDM $k$ to generate its content, and $\xi_{k}$ is the number of FLOPs consumed per inference step for GDM $k$.

Similar as in Section \ref{section:System_model}.D, we adopt a proportional fairness (PF)-based mechanism for computation resource. In this mechanism, all UEs offloaded to the same ES share the available computation resources in proportion to their individual computation demands, and thus will complete their content generation tasks at approximately the same time.

Let  $y_{nk} \in \{0,1\}$ denote the computation offloading indicator, where $y_{nk}=1$ indicates that UE $n$ offloads its task to ES $k$.
The actual computation completion time for active UEs offloaded to ES $k$ can then be written as:
\begin{align} \small \label{load_ES}
I_{k}(\mathbf{Y,D};\boldsymbol{\omega})	= \sum_{n\in \N_{A}} y_{nk} \cdot T_{nk}^{\mathrm{Comp}}, \forall k\in \K,
\end{align}
where $\mathbf{Y} \triangleq (y_{nk}, \forall n, k) $, and $\mathbf{D} \triangleq (d_{nk}, \forall n, k)$.

It follows that the completion time described above represents the total time required for ES $k$ to sequentially generate content for all offloaded UEs. 
Consequently, we refer to $I_k(\mathbf{Y,D;\boldsymbol{\omega}})$ as the \emph{load} of ES $k$, which depends on the activity of each UE.
Given the offloading vector $\boldsymbol{y}_n = (y_{n1},...,y_{nK})$
and the inference step vector $\boldsymbol{d}_n = (d_{n1},...,d_{nK})$, the actual computation completion time for UE $n \in \N_{A}$ can be expressed as:
\begin{equation} \small \label{comp_delay}
\begin{aligned}
	T_{n}^{\mathrm{Comp}}(\mathbf{Y,D;\boldsymbol{\omega}})  
	= \sum_{k\in \K} y_{nk} \cdot I_{k}(\mathbf{Y,D};\boldsymbol{\omega}), \forall n \in \N_{A}.
\end{aligned}
\end{equation}

\subsection{Problem Formulation}
To minimize the total service time for UEs while meeting the specified inference accuracy requirements, we formulate the Joint Computation Association and Computation Offloading (JCACO) problem, taking the activity states of all UEs into consideration.
The total service time for each UE consists of both the transmission completion time and the computation completion time.
Based on   (\ref{acc_delay}) and (\ref{comp_delay}), the total service latency for an active UE $n \in \N_{A}$, considering complete information about the activity of UEs, can be expressed as:
\begin{equation} \small \label{objective}
	\begin{aligned}
		T_{n}^{\mathrm{Total}}(\mathbf{X,Y,D;\boldsymbol{\omega}}) 
		= T_{n}^{\mathrm{Acc}}(\mathbf{X};\boldsymbol{\omega}) + T_{n}^{\mathrm{Comp}}(\mathbf{Y,D;\boldsymbol{\omega}}).
	\end{aligned}
\end{equation}

\subsubsection{Problem Formulation under Complete Information}
In the complete information scenario, we assume that the whole system information is available. 
Based on these information, we can formulate the JCACO problem, denoted as $\mathbf{P0}$:
\begin{align}
\mathbf{P0}\quad&\underset{\{ \mathbf{X,Y,D} \}}{\text{min}}~\sum_{n\in \N_{A}}T_{n}^{\mathrm{Total}}(\mathbf{X,Y,D};\boldsymbol{\omega}) \label{P0} \\ 	
\text {s.t.} \notag
&\textstyle ~\sum_{k=1}^{K}y_{nk} \cdot E_{nk}\leq \E_{n}^{0} \tag{\ref{P0}{a}}\label{P0a}, \forall n\in \N_{A}, \\
&\textstyle ~\sum_{m=1}^{M} x_{nm} = 1, \forall n\in \N_{A}, \tag{\ref{P0}{b}} \label{P0b}\\
&\textstyle ~\sum_{k=1}^{K} y_{nk}=1, \forall n\in \N_{A}. \tag{\ref{P0}{c}} \label{P0c} 
\end{align}

In $\mathbf{P0}$, the constraint in (\ref{P0a}) ensures that the achieved AEC for UE $n \in \N_{A}$ does not exceed the threshold $\E_{n}^{0}$, which reflects the QoE requirement.
The constraints in (\ref{P0b}) and (\ref{P0c}) ensure that each active UE is associated with only one AP and offloaded to only one ES.
Thus, $\mathbf{P0}$ represents a centralized optimization problem, aimed at optimizing the total service time for active UEs while satisfying their QoE requirements. \par
\subsubsection{Problem Formulation under Stochastic Information}
In the stochastic information scenario, we assume that the whole system information is not available, but the stochastic distribution information of system state is available. 
To account for the stochastic activeness dynamics of the UEs, we further reformulate $\mathbf{P0}$ as a stochastic optimization problem $\mathbf{P1}$:
\begin{align}
\mathbf{P1} 
&\underset{\{ \mathbf{X,Y,D} \}}{\text{min}}~ \mathbb{E}_{\boldsymbol{\omega}\in \Omega}\bigg[ \sum_{n\in \N}T_{n}^{\mathrm{Total}}(\mathbf{X,Y,D;\boldsymbol{\omega}})\bigg] \label{P1} \\ 	
\text{s.t.} \notag
&\textstyle ~\sum_{k=1}^{K}y_{nk} \cdot E_{nk}\leq \E_{n}^{0} \tag{\ref{P1}{a}}\label{P1a}, \forall n\in \N, \\
&\textstyle ~\sum_{m=1}^{M} x_{nm} = 1, \forall n\in \N, \tag{\ref{P1}{b}} \label{P1b}\\
&\textstyle ~\sum_{k=1}^{K} y_{nk}=1, \forall n\in \N. \tag{\ref{P1}{c}} \label{P1c} 
\end{align} 

In $\mathbf{P1}$, the objective is to minimize the expected total service time for all UEs under stochastic conditions, taking into account the variability of UE activities.

However, both $\mathbf{P0}$ and $\mathbf{P1}$ are centralized optimization problems, which may not be practical for implementation in decentralized, autonomous, and self-interested systems, especially considering the dynamic nature of UE activity. 
In a practical decentralized system, each UE seeks to minimize its own service time at the expense of global system efficiency. This self-interested behavior complicates the application of centralized optimization methods, as they require complete system-wide information.

To address these challenges, we will study both $\mathbf{P0}$ and $\mathbf{P1}$ from the game-theoretic perspective, where UEs are modeled as players in a game, adopting strategies that balance their individual objectives with the overall system efficiency. 
It's noteworthy that, $\mathbf{P0}$ and $\mathbf{P1}$ characterize two distinct information structures.

\textcolor{black}{In the complete information case ($\mathbf{P0}$), UE activity is known, leading to a static game under complete information. 
Conversely, the stochastic information case ($\mathbf{P1}$) relies on UE activity distribution, resulting in a stochastic game over system dynamics. 
Thus, the analysis of $\mathbf{P0}$ serves as a theoretical stepping stone to facilitate the analysis of the more general $\mathbf{P1}$.}
In the following, we will first analyze the game under complete information in Section \ref{section:complete_info_game}, and then analyze the stochastic game under stochastic information in Section \ref{section:sto_info_game}. 
\section{Game Formulation and Analysis under Complete Information} \label{section:complete_info_game}
In this section, we will formulate and analyze the \textbf{Joint Communication Association and Computation Offloading (JCACO) game} in the complete information scenario, denoted by $\G^{\mathrm{C0}}$.
In what follows, we first formulate the JCACO game $\G^{\mathrm{C0}}$, which is subsequentially decoupled into two subgames, i.e., a communication association game $\G^{\mathrm{C1}}$ and a computation offloading game $\G^{\mathrm{C2}}$. 
We then demonstrate that both $\G^{\mathrm{C1}}$ and $\G^{\mathrm{C2}}$ are potential games, thus guaranteeing  the existence of Nash equilibrium. 

\subsection{Game Formulation under Complete Information \label{subsection:complete_info_JCACO_game}}
The JCACO game under complete information, denoted as $\G^{\mathrm{C0}}$, can be defined as follows:

\begin{game}[JCACO Game under complete information]
	The JCACO game under complete information, can be defined as:
	\begin{equation}\small \label{game_C0}
		\begin{aligned}
			\G^{\mathrm{C0}}\triangleq \{ \N_A,\{\boldsymbol{s}_n\}_{n\in \N_A}, \{  T_{n}^{\mathrm{Total}}(\boldsymbol{s}_n,\mathbf{S}_{-n};\boldsymbol{\omega}) \}_{n\in \N_A} \}
		\end{aligned}
	\end{equation}
	where the key elements are defined as follows: 
	\begin{itemize}
		\item[$\bullet$] \textbf{Players:}
		All active UEs, indexed by $n \in \N_{A}$;
		
		\item[$\bullet$] \textbf{Strategies:}
		The strategy of each active UE $n$ is  $\boldsymbol{s}_n \triangleq (\boldsymbol{x}_n, \boldsymbol{y}_n, \boldsymbol{d}_n)$, representing the UE-AP association strategy, computation offloading strategy, and inference step selection strategy, respectively.
		
		\item[$\bullet$] \textbf{Payoffs:} The payoff for UE $n \in \N_{A}$ is the total service time, denoted by $T_{n}^{\mathrm{Total}}(\boldsymbol{s}_n,\mathbf{S}_{-n};\boldsymbol{\omega})$, which depends not only on its own strategy $\boldsymbol{s}_{n}$, but also the strategies of all other UEs except $n$,   i.e., $\mathbf{S}_{-n}=(\boldsymbol{s}_1,...,\boldsymbol{s}_{n-1},\boldsymbol{s}_{n+1},...,\boldsymbol{s}_{N})$.
	\end{itemize}
\end{game}
The formal definition of the NE for $\G^{\mathrm{C0}}$ is given as follows:
\begin{definition} [{NE Under Complete Information}]
	A strategy profile $\mathbf{S}^{*}=(\boldsymbol{s}_{1}^{*},...,\boldsymbol{s}_{N}^{*})$
	 an \textbf{NE} of $\G^{\mathrm{C0}}$, if for every active player $n \in \N_A$, the following condition holds:
	\begin{equation} \small \label{def_NE}
		\begin{aligned}
			T_{n}^{\mathrm{Total}}(\boldsymbol{s}^{*}_n,\mathbf{S}^{*}_{-n};\boldsymbol{\omega}) \leq T_{n}^{\mathrm{Total}}(\boldsymbol{s}_n,\mathbf{S}^{*}_{-n};\boldsymbol{\omega}), \forall n\in \N_{A},
		\end{aligned}
	\end{equation}
	where $\mathbf{S}_{-n}^{*}=(\boldsymbol{s}^{*}_{1},\dots,\boldsymbol{s}^{*}_{n-1},\boldsymbol{s}^{*}_{n+1}\dots,\boldsymbol{s}^{*}_{N})$.
\end{definition}
To simplify the analysis, we decouple $\G^{\mathrm{CO}}$ into two subgames: the communication associating game $\G^{\mathrm{C1}}$ and the computation offloading game $\G^{\mathrm{C2}}$. Formally, the communication associating game $\G^{\mathrm{C1}}$, is defined as follows:
\begin{game}[Communication Association Game under Complete Information]
	The communication association game under complete information, can be defined as:
	\begin{equation}\small \label{game_C1}
		\begin{aligned}
			\G^{\mathrm{C1}}\triangleq \{ \N_A, \{\boldsymbol{x}_n\}_{n\in \N_A}, \{T_{n}^{\mathrm{Acc}}(\boldsymbol{x}_n,\mathbf{X}_{-n};\boldsymbol{\omega}) \}_{n\in \N_A} \},
		\end{aligned}
	\end{equation}	
	where the key elements are defined as follows: 
	\begin{itemize}
		\item[$\bullet$] \textbf{Players:}
		All active UEs, $n \in \N_{A}$ in the system;
		
		\item[$\bullet$] \textbf{Strategies:}
		The strategy of each active UE $n$ is $\boldsymbol{x}_n$.
		
		\item[$\bullet$] \textbf{Payoffs:} The payoff for UE $n \in \N_A$ is its transmission completion time 		$T_{n}^{\mathrm{Acc}}(\boldsymbol{x}_n,\mathbf{X}_{-n};\boldsymbol{\omega})$, which depends  on its own strategy $\boldsymbol{x}_{n}$ and the strategies of all the other UEs, i.e., $\mathbf{X}_{-n}=(\boldsymbol{x}_1,...,\boldsymbol{x}_{n-1},\boldsymbol{x}_{n+1},...,\boldsymbol{x}_{N})$.
	\end{itemize}
\end{game}
The computation offloading game $\G^{\mathrm{C2}}$ under complete information, can be defined as follows:
\begin{game}[Computation Offloading Game under Complete Information]
	The Computation Offloading Game under Complete Information, can be defined as:
	\begin{equation} \small \label{game_C2}
		\begin{aligned}
			\G^{\mathrm{C2}}\triangleq \{ \N_A,\{\boldsymbol{z}_n\}_{n\in \N_A}, \{ T_{n}^{\mathrm{Comp}}(\boldsymbol{z}_n,\mathbf{Z}_{-n};\boldsymbol{\omega}) \}_{n\in \N_A} \},
		\end{aligned}
	\end{equation}
	where the key elements are defined as follows: 
	\begin{itemize}
		\item[$\bullet$] \textbf{Players:}
		All active UEs, $n \in \N_{A}$ in the system;
		
		\item[$\bullet$] \textbf{Strategies:}
		The strategy of each active UE $n$ is  $\boldsymbol{z}_{n}\triangleq (\boldsymbol{y}_{n},\boldsymbol{d}_{n})$, where $\boldsymbol{y}_n$ is the computation offloading strategy, $\boldsymbol{d}_{n}$ is the inference step strategy.
		
		\item[$\bullet$] \textbf{Payoffs:} The payoff for UE $n \in \N_A$ is its computation completion time, i.e.,
		$T_{n}^{\mathrm{Comp}}(\boldsymbol{z}_n,\mathbf{Z}_{-n};\boldsymbol{\omega})$, which depends on its strategy $\boldsymbol{z}_{n}$, and the strategies of all the other UEs, i.e., $\mathbf{Z}_{-n}=(\boldsymbol{z}_1,...,\boldsymbol{z}_{n-1},\boldsymbol{z}_{n+1},...,\boldsymbol{z}_{N})$.
	\end{itemize}
\end{game}
The NE of subgames $\G^{\mathrm{C1}}$ and $\G^{\mathrm{C2}}$ can be defined in the same way as that for $\G^{\mathrm{C0}}$. Due to space limit, we do not present the detailed definitions. 

\subsection{Analysis of Communication Association Game $\G^{\mathrm{C1}}$}
In this subsection, we will analyze the NE of $\G^{\mathrm{C1}}$ through the property of potential game \cite{potential_game}. 

\begin{definition}[{Potential Game}]
	A game is a potential game if and only if there exists a potential function  $\Phi(\mathbf{X})$ such that:
	\begin{equation} \small \label{potential_game_def}
		\begin{aligned}
			&T_{n}^{\mathrm{Acc}}(\boldsymbol{x}'_{n},\mathbf{X}_{-n}) < T_{n}^{\mathrm{Acc}}(\boldsymbol{x}_{n},\mathbf{X}_{-n}) \\
			&\Leftrightarrow
			\Phi(\boldsymbol{x}'_{n},\mathbf{X}_{-n}) < \Phi(\boldsymbol{x}_{n},\mathbf{X}_{-n}), 
			\forall n\in \N, \boldsymbol{x}_{n},\boldsymbol{x}'_{n} \in \mathbf{X}.
		\end{aligned}
	\end{equation}	
\end{definition}
Eq.(\ref{potential_game_def}) implies that, in a potential game, the change in the potential function $\Phi(\cdot)$, due to an unilateral strategy deviation, must share the same sign as the change in the payoff function.
This implies that any unilateral change by a player that improves its payoff also leads to an improvement in the potential function.
We then show that $\G^{\mathrm{C1}}$ is a potential game through the following theorem.

\begin{theorem}\label{theorem:UA}
	The game $\G^{\mathrm{C1}}$ is a potential game with the following potential function:
	\begin{equation} \small \label{potential_func_AP}
		\begin{aligned}
			\Phi(\mathbf{X};\boldsymbol{\omega}) =\sum_{m=1}^{M} \varphi^{L_{m}(\mathbf{X};\boldsymbol{\omega})},
		\end{aligned}
	\end{equation}
	where $\varphi$ is any value larger than a threshold $\varphi_0 \triangleq \sqrt[\epsilon]{2}$, where $\epsilon$ is the minimum time scale of communication system.	
\end{theorem}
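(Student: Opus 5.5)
The plan is to verify the ordinal-potential condition (\ref{potential_game_def}) directly, by comparing $\Phi$ before and after a single unilateral deviation. Fix $\boldsymbol{\omega}$ and an active UE $n\in\N_A$ that is currently associated with AP $a$, and let it deviate to AP $b\neq a$ while $\mathbf{X}_{-n}$ stays fixed. Write $L_a,L_b$ for the loads (\ref{load_AP}) before the move. After the move the loads are $L_a-T^{\mathrm{Acc}}_{na}$ and $L_b+T^{\mathrm{Acc}}_{nb}$, and every other AP load is unchanged. By (\ref{acc_delay}), UE $n$'s payoff goes from $L_a$ to $L_b+T^{\mathrm{Acc}}_{nb}$. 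The potential changes only in the two summands for $a$ and $b$, so the whole question reduces to comparing
\[
\varphi^{L_a-T^{\mathrm{Acc}}_{na}}+\varphi^{L_b+T^{\mathrm{Acc}}_{nb}} \quad\text{with}\quad \varphi^{L_a}+\varphi^{L_b}.
\]

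I will use the standing assumption that all transmission times, and hence all loads, are integer multiples of the minimum time scale $\epsilon$. Then every nonzero $T^{\mathrm{Acc}}_{nm}$ is at least $\epsilon$, and a strict inequality between loads always holds with a gap of at least $\epsilon$. This discreteness is exactly why a threshold $\varphi_0=\sqrt[\epsilon]{2}$ suffices: for $\varphi>\varphi_0$ we have $\varphi^{\epsilon}>2$.

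\emph{Forward direction (payoff improves $\Rightarrow$ potential decreases).} Suppose $L_b+T^{\mathrm{Acc}}_{nb}<L_a$, so $L_b+T^{\mathrm{Acc}}_{nb}\le L_a-\epsilon$. Since also $T^{\mathrm{Acc}}_{na}\ge\epsilon$, both new exponents are at most $L_a-\epsilon$. Hence
\[
\varphi^{L_a-T^{\mathrm{Acc}}_{na}}+\varphi^{L_b+T^{\mathrm{Acc}}_{nb}}\le 2\varphi^{L_a-\epsilon}<\varphi^{L_a}<\varphi^{L_a}+\varphi^{L_b},
\]
where the middle inequality is $\varphi^{\epsilon}>2$ and the last uses $\varphi^{L_b}>0$. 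So $\Phi$ strictly decreases.

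\emph{Reverse direction.} Suppose instead the move strictly worsens $n$'s payoff, i.e.\ $L_b+T^{\mathrm{Acc}}_{nb}>L_a$. Consider the reverse move, from the post-deviation profile back to $a$. That move strictly improves $n$'s payoff, so applying the forward argument with the roles of the two profiles swapped shows the original profile has strictly smaller $\Phi$. Thus a strict payoff increase implies a strict potential increase.

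The step I expect to be the main obstacle is the tie case $L_b+T^{\mathrm{Acc}}_{nb}=L_a$. There the payoff is unchanged, but $\Phi$ may still move, because $\varphi^{L_a-T^{\mathrm{Acc}}_{na}}$ need not equal $\varphi^{L_b}$. So the literal ``$\Leftarrow$'' of (\ref{potential_game_def}) can fail, and I would treat this case carefully. My plan is to establish the correspondence on strict improvements and strict worsenings, as above, which makes $\Phi$ a generalized ordinal potential. That is what is needed downstream: since the strategy space is finite, every sequence of strict better responses terminates, so the game has the finite improvement property and a minimizer of $\Phi$ is an NE. If the exact ``iff'' must hold, I would additionally break ties (for example lexicographically by AP index) or note that a tie is not a profitable deviation. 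The remaining ingredients, namely that only two summands change and that loads lie on the $\epsilon$-grid, are routine.
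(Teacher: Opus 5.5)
Your forward step is the paper's argument. The paper also uses the $\epsilon$-gap to bound both new exponents, $L_a-T^{\mathrm{Acc}}_{na}$ and $L_b+T^{\mathrm{Acc}}_{nb}$, by $L_a-\epsilon$, and then uses $\varphi^{\epsilon}\ge 2$. It keeps the $-\varphi^{L_b}$ term so that strictness survives even at $\varphi=\varphi_0$. Your explicit $\epsilon$-grid assumption is a precise form of its informal ``minimum time scale''.

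The paper then simply asserts equality of signs. Your reverse step, which is the forward step applied to the swapped pair, makes explicit what that assertion tacitly uses. Your tie observation is also correct: at a payoff tie $\Phi$ can strictly move, so the paper's sign equality, and with it the literal $\Leftrightarrow$ in (\ref{potential_game_def}), fails in general. What both arguments actually establish is a generalized ordinal potential. That is all the finite improvement property and the NE-existence lemma need, and your write-up is the more accurate one on this point.

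One correction to your fallback: breaking ties by AP index would not restore the equivalence. At a tie $L_b+T^{\mathrm{Acc}}_{nb}=L_a$ one gets $\Delta\Phi=\varphi^{L_b+T^{\mathrm{Acc}}_{nb}-T^{\mathrm{Acc}}_{na}}-\varphi^{L_b}$. Its sign is that of $T^{\mathrm{Acc}}_{nb}-T^{\mathrm{Acc}}_{na}$, which has nothing to do with AP labels. For instance, $L_a=T^{\mathrm{Acc}}_{na}=3\epsilon$, $L_b=\epsilon$, $T^{\mathrm{Acc}}_{nb}=2\epsilon$ gives equal payoffs but $\Delta\Phi=1-\varphi^{\epsilon}<0$. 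The refinement that does make $\Phi$ an exact ordinal potential is to break payoff ties in favor of the AP with the smaller own transmission time $T^{\mathrm{Acc}}_{nm}$.
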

\begin{IEEEproof}
	Please refer to Appendix A.
\end{IEEEproof}
\begin{lemma}
	The game $\G^{\mathrm{C1}}$ has at least one Nash equilibrium.
\end{lemma}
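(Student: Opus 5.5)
The argument is the standard finite-improvement-path argument for potential games, using Theorem~\ref{theorem:UA}. First, the strategy space of $\G^{\mathrm{C1}}$ is finite: each active UE $n\in\N_A$ chooses $\boldsymbol{x}_n$ subject to (\ref{P0b}), i.e., exactly one of the $M$ APs. The set of joint association profiles $\mathbf{X}$ therefore has at most $M^{|\N_A|}$ elements. Consequently the potential $\Phi(\mathbf{X};\boldsymbol{\omega})=\sum_{m}\varphi^{L_m(\mathbf{X};\boldsymbol{\omega})}$, for fixed $\boldsymbol{\omega}$ and a fixed admissible $\varphi>\varphi_0$, takes finitely many real values and attains a global minimum at some profile $\mathbf{X}^*$.

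Next, $\mathbf{X}^*$ is an NE. Suppose not. Then some active UE $n$ has a strategy $\boldsymbol{x}_n'$ with
\[
T_n^{\mathrm{Acc}}(\boldsymbol{x}_n',\mathbf{X}^*_{-n};\boldsymbol{\omega}) < T_n^{\mathrm{Acc}}(\boldsymbol{x}_n^*,\mathbf{X}^*_{-n};\boldsymbol{\omega}).
\]
By the ordinal potential property (\ref{potential_game_def}) guaranteed in Theorem~\ref{theorem:UA}, this gives $\Phi(\boldsymbol{x}_n',\mathbf{X}^*_{-n};\boldsymbol{\omega})<\Phi(\mathbf{X}^*;\boldsymbol{\omega})$, contradicting the minimality of $\mathbf{X}^*$. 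Hence no unilateral profitable deviation exists, and $\mathbf{X}^*$ satisfies the NE definition (\ref{def_NE}) restricted to $\G^{\mathrm{C1}}$.

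Equivalently, one can frame this constructively: any sequence of strictly improving unilateral deviations strictly decreases $\Phi$. Since $\Phi$ takes finitely many values, every such sequence is finite, and it terminates at an NE (the finite improvement property).

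The only delicate point is that we need the forward implication of (\ref{potential_game_def}), namely that a strict decrease in $T_n^{\mathrm{Acc}}$ forces a strict decrease in $\Phi$. That implication is exactly what Theorem~\ref{theorem:UA} supplies, and it is where the threshold $\varphi_0=\sqrt[\epsilon]{2}$ and the minimum time scale $\epsilon$ enter. Since we assume that theorem, the remaining steps are just finiteness and the extremal argument. No convexity or continuity is needed, because the game is finite.
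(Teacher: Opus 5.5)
Your proposal is correct and is essentially the paper's argument: the paper justifies the lemma by observing that at a minimizer of the potential $\Phi$ from Theorem~\ref{theorem:UA} no player can unilaterally reduce $T_n^{\mathrm{Acc}}$, and it also appeals to the finite improvement property of potential games. Your version makes this rigorous by spelling out that the profile set is finite and that only the forward implication of (\ref{potential_game_def}) is used, which is a good point since the reverse implication can fail in ties (for $\Delta T_i^{\mathrm{Acc}}=0$ with $T_{ij}>T_{ij'}$ one still gets $\Delta\Phi<0$).
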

Intuitively, at the NE point $(\boldsymbol{x}_{n}^{*},\mathbf{X}_{-n}^{*})_{n\in \N_{A}}$, the potential
function cannot be further minimized. According to (\ref{potential_game_def}), no player can unilaterally reduce its service completion time, which implies that the corresponding strategy profile constitutes an NE.

\subsection{Analysis of Computation Offloading Game $\G^{\mathrm{C2}}$}
\begin{theorem}\label{theorem:CO}
	The game $\G^{\mathrm{C2}}$ is a potential game with the following potential function:
	\begin{equation} \small \label{potential_func_offload}
		\begin{aligned}
			\Psi(\mathbf{Y,D};\boldsymbol{\omega}) =\sum_{k=1}^{K}\phi^{I_{k}(\mathbf{Y,D};\boldsymbol{\omega})}, 			
		\end{aligned}
	\end{equation}
	where $\phi$ is any value larger than a threshold $\phi_0 \triangleq \sqrt[l]{2}$, where $l$ is the minimum time scale of computation system.
\end{theorem}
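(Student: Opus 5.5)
The argument mirrors the proof of Theorem~\ref{theorem:UA}, with ES loads $I_k$ playing the role of AP loads $L_m$. We verify the ordinal-potential condition (\ref{potential_game_def}) for $\G^{\mathrm{C2}}$. Fix an active UE $n\in\N_A$ and the strategies $\mathbf{Z}_{-n}$ of all other UEs. Suppose $n$ unilaterally deviates from $\boldsymbol{z}_n=(\boldsymbol{y}_n,\boldsymbol{d}_n)$, offloading to ES $a$ with $d_{na}$ steps, to $\boldsymbol{z}'_n$, offloading to ES $b$ with $d'_{nb}$ steps. Both choices satisfy the accuracy constraint (\ref{P0a}). Write $I_k$ and $I'_k$ for the loads before and after the deviation, and let $\tau\triangleq T^{\mathrm{Comp}}_{na}=\xi_a d_{na}/f_a$ and $\tau'\triangleq \xi_b d'_{nb}/f_b$ be UE $n$'s own computation times. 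By (\ref{comp_delay}), $T^{\mathrm{Comp}}_n(\boldsymbol{z}_n,\mathbf{Z}_{-n})=I_a$ and $T^{\mathrm{Comp}}_n(\boldsymbol{z}'_n,\mathbf{Z}_{-n})=I'_b$. All other loads are unchanged, so only the terms for ES $a$ and ES $b$ enter $\Psi'-\Psi$.

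\emph{Case $a\neq b$.} Here $I'_a=I_a-\tau$ and $I'_b=I_b+\tau'$, so
\begin{equation*}
\Psi'-\Psi=\phi^{I_b+\tau'}+\phi^{I_a-\tau}-\phi^{I_a}-\phi^{I_b}.
\end{equation*}
Suppose the deviation strictly improves UE $n$'s payoff, i.e.\ $I_b+\tau'<I_a$. We invoke the discretization assumption analogous to that behind $\varphi_0$: all times are integer multiples of the minimal time scale $l$. Then $I_b+\tau'\le I_a-l$, so
\begin{equation*}
\phi^{I_b+\tau'}+\phi^{I_a-\tau}\le 2\phi^{I_a-l}<\phi^{I_a},
\end{equation*}
where the last step uses $\phi^{l}>2$, i.e.\ $\phi>\phi_0=\sqrt[l]{2}$. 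The first inequality needs $I_a-\tau\le I_a-l$, which holds because $\tau\ge l$ whenever $d_{na}\ge1$. Dropping the nonnegative term $\phi^{I_b}$ gives $\Psi'<\Psi$.

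For the converse direction, suppose $\Psi'<\Psi$ but $I'_b\ge I_a$. Then $\phi^{I'_b}\ge\phi^{I_a}$, and $\phi^{I'_a}>0$ while $I_b<I'_b$. I expect to show $\Psi'\ge\Psi$ by a symmetric estimate, reading the above argument with the roles of the two profiles swapped. Equality of payoffs then has to be handled so that the two strict relations in (\ref{potential_game_def}) match. If this step fails, I would state the result as a generalized ordinal potential (improvement implies potential decrease), which suffices for NE existence and convergence of better-response dynamics. I expect this converse direction to be the main obstacle.

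\emph{Case $a=b$.} Only $d$ changes, so $\Psi'-\Psi=\phi^{I_a-\tau+\tau'}-\phi^{I_a}$. Since $x\mapsto\phi^x$ is strictly increasing, this has the same sign as $\tau'-\tau=I'_a-I_a$, which is exactly the payoff change. This case is trivial.

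Combining the two cases shows that $\Psi$ satisfies the potential condition, so $\G^{\mathrm{C2}}$ is a potential game. Since the strategy set is finite (finitely many ES and bounded step counts), the minimizer of $\Psi$ is an NE.
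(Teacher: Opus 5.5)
Your argument for the forward direction is the paper's argument. The paper splits the deviation into three cases: switch ES with the steps fixed, change the steps at the same ES, or change both. Your case $a\neq b$ merges its Cases 1 and 3, and your case $a=b$ is its Case 2. The key estimate $\phi^{I_b+\tau'}+\phi^{I_a-\tau}\le 2\phi^{I_a-l}\le\phi^{I_a}$ under the minimal-time-scale assumption is identical. The paper never proves the converse either. After showing that a strict improvement lowers $\Psi$, it simply asserts $\mathrm{sgn}(\Delta T_i^{\mathrm{Comp}})=\mathrm{sgn}(\Delta\Psi)$ and dismisses Case 3 as ``similar''.

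The step you left open can be settled, and it splits cleanly into two cases.

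\emph{Strict case.} If $I'_b>I_a$, apply your own forward argument to the reverse deviation $\boldsymbol{z}'_n\to\boldsymbol{z}_n$, starting from the post-move profile. That deviation is a strict improvement from $I'_b$ to $I_a$, and $\tau'\ge l$ plays the role that $\tau\ge l$ played before. This gives $\Psi<\Psi'$, as you hoped.

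\emph{Tie case.} If $I'_b=I_a$, then $I_b=I_a-\tau'$, so $\Psi'-\Psi=\phi^{I_a-\tau}-\phi^{I_a-\tau'}$, which is nonzero whenever $\tau\neq\tau'$. Concretely, take $l=1$ and $I_a=5$, of which UE $n$'s own time is $\tau=3$, with $I_b=4$ and $\tau'=1$. The payoff stays at $5$, while $\Psi$ drops by $\phi^4-\phi^2>0$. So the ``$\Leftarrow$'' half of (\ref{potential_game_def}) fails, even with the inference steps held fixed.

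Hence the theorem is false as literally stated under the paper's two-sided ordinal-potential definition. What both your argument and the paper's actually establish is that $\Psi$ is a generalized ordinal potential, so your fallback is the correct statement. It is also all that is needed: it gives the finite improvement property and NE existence. Your closing observation that a minimizer of $\Psi$ is an NE uses only the forward direction.
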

\begin{IEEEproof}
	Please refer to Appendix B.
\end{IEEEproof}

According to the finite improvement property (FIP) of potential games \cite{potential_game},
we can show that in both $\G^\mathrm{C1}$ and $\G^\mathrm{C2}$,  any best-response (BR) dynamics will terminate at an NE. 
The detailed BR algorithms for $\mathcal{G}^{\mathrm{C1}}$ and $\mathcal{G}^{\mathrm{C2}}$ have been studied in our previous work \cite{liu_wiopt2024}.

\section{Game Formulation and Analysis under Stochastic Information} \label{section:sto_info_game}
In this section, we analyze the JCACO game in the stochastic environment. 
We begin by presenting the game formulation under stochastic information and then proceed to analyze the properties of this game.

\subsection{Game Formulation under Stochastic Information}\label{subsection:stochastic_JCACO_game}
To better capture the behavior of UEs and the system performance under the dynamic activity of UEs, we reformulate the game $\G^{\mathrm{C0}}$ as a stochastic game, denoted as $\G^{\mathrm{S0}}$.
The key difference between $\G^{\mathrm{C0}}$ and $\G^{\mathrm{S0}}$, lies in the dynamics of the UEs' activeness.
In $\G^{\mathrm{C0}}$, strategies are determined under complete information about the system state. In contrast, in $\G^{\mathrm{S0}}$, each UE selects its strategy based on the dynamic activity patterns over the sample space $\Omega$. Formally, 

\begin{game}
	[JCACO game under stochastic information]	
	The JCACO game under stochastic information can be defined as:
	\begin{equation}\small 
		\begin{aligned}
			\G^{\mathrm{S0}}\triangleq \{ \N,\{\boldsymbol{s}_n\}_{n\in \N}, \{ \overline{T} _{n}^{\mathrm{Total}}(\boldsymbol{s}_n,\mathbf{S}_{-n}) \}_{n\in \N} \},
		\end{aligned}
	\end{equation}
where the key elements are defined as follows:
		\begin{itemize}
		\item[$\bullet$] \textbf{Players:}
		All UEs, indexed by $n \in \N$;
		
		\item[$\bullet$] \textbf{Strategies:}
		The strategy of each UE $n$ is  $\boldsymbol{s}_n \triangleq (\boldsymbol{x}_n, \boldsymbol{y}_n, \boldsymbol{d}_n)$, representing the communication association, computation offloading strategy, and inference step selection strategy, respectively.
		
		\item[$\bullet$] \textbf{Payoffs:} The expected payoff for UE $n \in \N$ is the expected total service time, denoted by $\overline{T} _{n}^{\mathrm{Total}}(\boldsymbol{s}_n,\mathbf{S}_{-n})$, which depends not only on its own strategy $\boldsymbol{s}_{n}$, but also the strategies of all other UEs except $n$,   i.e., $\mathbf{S}_{-n}=(\boldsymbol{s}_1,...,\boldsymbol{s}_{n-1},\boldsymbol{s}_{n+1},...,\boldsymbol{s}_{N})$.
	\end{itemize}	
\end{game}

The expected payoff function for UE $n \in \N$ is defined as:
	\begin{equation}\label{avg_obj} \small
		\begin{aligned}
			\overline{T} _{n}^{\mathrm{Total}}(\boldsymbol{s}_n,\mathbf{S}_{-n}) &\triangleq \mathbb{E}_{\boldsymbol{\omega}\in \Omega}[T_{n}^{\mathrm{Total}}(\boldsymbol{s},\mathbf{S}_{-n};\boldsymbol{\omega})]  \\
			&= \overline{T}_{n}^{\mathrm{Acc}}(\mathbf{X}) +  \overline{T}_{n}^{\mathrm{Comp}}(\mathbf{Y,D}), 
		\end{aligned}
	\end{equation}
which consists of the expected transmission   time $\overline{T}_{n}^{\mathrm{Acc}}(\mathbf{X})$ and the expected computation   time $\overline{T}_{n}^{\mathrm{Comp}}(\mathbf{Y,D})$: 
\begin{equation} \label{exp_acc} \small
	\begin{aligned}
		\overline{T}_{n}^{\mathrm{Acc}}(\mathbf{X}) &= \mathbb{E}_{\boldsymbol{\omega}\in \Omega}[T_{n}^{\mathrm{Acc}}(\mathbf{X};\boldsymbol{\omega})] \\ 
		&= \sum_{m\in \M} x_{nm} \cdot \overline{L}_m (\mathbf{X}) ,\forall n \in \N,
	\end{aligned}
\end{equation}
\begin{equation} \label{exp_comp} \small 
	\begin{aligned}
		\overline{T}_{n}^{\mathrm{Comp}}(\mathbf{Y,D}) &=  \mathbb{E}_{\boldsymbol{\omega}\in\Omega}[T_{n}^{\mathrm{Comp}}(\mathbf{Y,D};\boldsymbol{\omega}) ] \\
		&= \sum_{k\in \K} y_{nk} \cdot \overline{I}_{k}(\mathbf{Y,D}), \forall n\in \N.
	\end{aligned}
\end{equation}
In the   equation \eqref{exp_acc}, $\overline{L}_m (\mathbf{X})$ represents the expected load of AP $m$, which is given by:
\begin{equation} \label{def_exp_load_AP} \small
	\begin{aligned}
		\overline{L}_m (\mathbf{X}) = \mathbb{E}_{\boldsymbol{\omega}\in\Omega}[L_{m}(\mathbf{X};\boldsymbol{\omega})] 
		= \sum_{\boldsymbol{\omega}\in \Omega} p({\boldsymbol{\omega}})\cdot L_{m}(\mathbf{X};\boldsymbol{\omega}), \forall m\in \M. \\
	\end{aligned}
\end{equation}
In the   equation \eqref{exp_comp}, $\overline{I}_{k}(\mathbf{Y,D})$ represents the expected load of ES $k$, which is  given by:
\begin{equation} \label{exp_load_ES} \small
	\begin{aligned}
		\overline{I}_{k}(\mathbf{Y,D})&=\mathbb{E}_{\boldsymbol{\omega}\in\Omega}[I_{k}(\mathbf{Y,D};\boldsymbol{\omega})]\\
		&=\sum_{\boldsymbol{\omega}\in \Omega} p({\boldsymbol{\omega}})\cdot
		I_k(\mathbf{Y,D};\boldsymbol{\omega}), \forall k \in \K.
	\end{aligned}
\end{equation}

Similar as in Section \ref{subsection:stochastic_JCACO_game}, $\G^{\mathrm{S0}}$ can be decoupled into two subgames: $\G^{\mathrm{S1}}$ and $\G^{\mathrm{S2}}$, defined below. 

\begin{game}[Stochastic Communication Association Game]
	The communication association game under stochastic information of UE activeness dynamics, is defined as: 
	\begin{equation} \small
		\begin{aligned}
			\G^{\mathrm{S1}}\triangleq \{ \N,\{\boldsymbol{x}_n\}_{n\in \N}, \{\overline{T}_{n}^{\mathrm{Acc}}\{\boldsymbol{x}_n,\mathbf{X}_{-n} \}_{n\in \N} \},
		\end{aligned}
	\end{equation}
where the key elements are defined as follows:
	\begin{itemize}
		\item[$\bullet$] \textbf{Players:}
		All UEs, indexed by $n \in \N$;
		
		\item[$\bullet$] \textbf{Strategies:}
		The strategy of each active UE $n$ is $\boldsymbol{x}_{n}$.
		
		\item[$\bullet$] \textbf{Payoffs:} The payoff for UE $n \in \N$ is its expected transmission completion time, i.e.,
		$\overline{T}_{n}^{\mathrm{Acc}}(\boldsymbol{x}_n,\mathbf{X}_{-n})$, which depends on its own strategy $\boldsymbol{x}_{n}$, and the strategies of all the other UEs, i.e., $\mathbf{X}_{-n}=(\boldsymbol{x}_1,...,\boldsymbol{x}_{n-1},\boldsymbol{x}_{n+1},...,\boldsymbol{x}_{N})$.
	\end{itemize}
	
\end{game}

\begin{game}[Stochastic Computation Offloading Game]
	The Computation Offloading Game, under stochastic information of UE activeness dynamics, is defined as:
	\begin{equation} \small
		\begin{aligned}
			\G^{\mathrm{S2}}\triangleq \{ \N,\{\boldsymbol{z}_n\}_{n\in \N}, \{ \overline{T}_{n}^{\mathrm{Comp}}(\boldsymbol{z}_n,\mathbf{Z}_{-n}) \}_{n\in \N} \}, 
		\end{aligned}
	\end{equation}
where the key elements are defined as follows:
	\begin{itemize}
		\item[$\bullet$] \textbf{Players:}
		All UEs, indexed by $n \in \N$;
		
		\item[$\bullet$] \textbf{Strategies:}
		The strategy of each active UE $n$ is $\boldsymbol{z}_{n} \triangleq (\boldsymbol{y}_{n}, \boldsymbol{d}_{n})$, where $\boldsymbol{y}_{n}$ is the computation offloading strategy, $\boldsymbol{d}_{n}$ is the inference step strategy.
		
		\item[$\bullet$] \textbf{Payoffs:} The payoff for UE $n \in \N$ is its expected computation completion time, i.e.,
		$\overline{T}_{n}^{\mathrm{Comp}}(\boldsymbol{z}_n,\mathbf{Z}_{-n})$, which depends on its own strategy $\boldsymbol{z}_{n}$, and the strategies of all the other UEs, i.e., $\mathbf{Z}_{-n}=(\boldsymbol{z}_1,...,\boldsymbol{z}_{n-1},\boldsymbol{z}_{n+1},...,\boldsymbol{z}_{N})$.
	\end{itemize}
\end{game}

\subsection{Analysis of Stochastic Communication Association Game $\G^{\mathrm{S1}}$}
Now we analyze the NE of stochastic communication associating game $\G^{\mathrm{S1}}$ through the property of potential game.
\begin{theorem}
	The game $\G^{\mathrm{S1}}$ is a stochastic potential game with the following potential function:
	\begin{align}\label{exp_potential_func_trans}
		\overline{\Phi}(\mathbf{X}) = \sum_{m=1}^{M}\tilde{\varphi}^{\overline{L}_{m}(\mathbf{X})},
	\end{align}
	where $\tilde{\varphi}$ is any value larger than a threshold $\tilde{\varphi}=\sqrt[\epsilon]{2}$, where $\epsilon$ is the minimum time scale of communication system.
\end{theorem}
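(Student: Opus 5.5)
The plan is to reduce the stochastic game $\G^{\mathrm{S1}}$ to a deterministic load-balancing congestion game with weighted jobs and then reuse the argument of Theorem~\ref{theorem:UA}. First compute the expected load explicitly. By \eqref{load_AP} and linearity of expectation,
\begin{equation*}
\overline{L}_m(\mathbf{X})=\sum_{n\in\N} x_{nm}\, \mathbb{E}[\omega_n]\, T^{\mathrm{Acc}}_{nm}=\sum_{n\in\N} x_{nm}\, p_n T^{\mathrm{Acc}}_{nm}.
\end{equation*}
Hence $\overline{L}_m$ has exactly the structure of $L_m$, with active UEs replaced by all UEs and weights $T^{\mathrm{Acc}}_{nm}$ replaced by $\tilde w_{nm}\triangleq p_nT^{\mathrm{Acc}}_{nm}$. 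Also, $\overline{T}^{\mathrm{Acc}}_n(\mathbf{X})=\overline{L}_{m(n)}(\mathbf{X})$, where $m(n)$ is the AP chosen by $n$. One caveat is that $T^{\mathrm{Acc}}_{nm}$ itself depends on the channel. I will treat it as its per-slot (or expected) value, consistent with \eqref{exp_acc}.

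Next, verify the ordinal potential property for a unilateral deviation of UE $n$ from AP $a$ to AP $b$. Let $\overline{L}_a,\overline{L}_b$ be the loads before the move. After the move, $n$'s cost is $\overline{L}_b+\tilde w_{nb}$, and $\overline{L}_a$ drops to $\overline{L}_a-\tilde w_{na}$. Only the terms for $a$ and $b$ in $\overline{\Phi}$ change, so
\begin{equation*}
\Delta\overline{\Phi}=\tilde\varphi^{\overline{L}_a-\tilde w_{na}}-\tilde\varphi^{\overline{L}_a}+\tilde\varphi^{\overline{L}_b+\tilde w_{nb}}-\tilde\varphi^{\overline{L}_b}.
\end{equation*}
For the ``$\Rightarrow$'' direction, suppose $\overline{L}_b+\tilde w_{nb}<\overline{L}_a$. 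Then
\begin{equation*}
\tilde\varphi^{\overline{L}_b+\tilde w_{nb}}-\tilde\varphi^{\overline{L}_b}<\tilde\varphi^{\overline{L}_b+\tilde w_{nb}}\le \tilde\varphi^{\overline{L}_a-\epsilon'}.
\end{equation*}
Here $\epsilon'$ is the minimum resolution gap between distinct load values. The decrease $\tilde\varphi^{\overline{L}_a}-\tilde\varphi^{\overline{L}_a-\tilde w_{na}}$ is at least $\tilde\varphi^{\overline{L}_a}(1-\tilde\varphi^{-\epsilon'})$. It therefore suffices that $\tilde\varphi^{\epsilon'}\ge 2$, i.e. $\tilde\varphi\ge\sqrt[\epsilon']{2}$, which is exactly the form of the stated threshold.

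The converse direction follows the same inequality chain in reverse. If $\overline{L}_b+\tilde w_{nb}\ge\overline{L}_a$, then the increase at $b$ dominates the decrease at $a$, using $\tilde w_{na}>0$ and the same exponential separation. The tie case gives $\Delta\overline{\Phi}\ge 0$ by monotonicity. The same computation goes through for the simultaneous change of both terms and for a UE with $p_n=0$, which changes nothing.

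The main obstacle is the discretization: expected loads $p_nT^{\mathrm{Acc}}_{nm}$ are no longer multiples of the communication time scale $\epsilon$. I would handle this with the footnote's minimum probability unit $\Delta^{\mathrm{prob}}$, so that all expected loads lie on a grid of spacing $\epsilon\Delta^{\mathrm{prob}}$ (or the effective $\epsilon$ is reinterpreted as the minimum expected-time scale). With this, the threshold $\sqrt[\epsilon]{2}$ holds with $\epsilon$ understood as that effective minimum scale. Finally, existence of an NE follows as in the lemma after Theorem~\ref{theorem:UA}: the strategy space is finite and any minimizer of $\overline{\Phi}$ admits no improving deviation.
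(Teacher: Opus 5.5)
Your forward direction is essentially the paper's argument in Appendix C. A strict gain $\overline{L}_b+\tilde w_{nb}<\overline{L}_a$ is upgraded to a gap of one resolution unit, both post-move exponentials are bounded by $\tilde\varphi^{\overline{L}_a-\epsilon'}$, and $\tilde\varphi^{\epsilon'}\ge 2$ closes it. Your explicit linearity computation of $\overline{L}_m$ is sound, and so is your remark that $p_nT^{\mathrm{Acc}}_{nm}$ need not lie on the $\epsilon$-grid, so the threshold really involves an effective resolution such as $\epsilon\Delta^{\mathrm{prob}}$. The paper only asserts these steps.

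\textbf{The converse fails.} You claim that $\overline{L}_b+\tilde w_{nb}\ge\overline{L}_a$ forces $\Delta\overline{\Phi}\ge 0$, and in particular that ties give $\Delta\overline{\Phi}\ge 0$ ``by monotonicity''. This is false, because the weights $\tilde w_{na}$ and $\tilde w_{nb}$ can differ. Let $n$ be alone on AP $a$, so $\overline{L}_a=\tilde w_{na}$, and take $\overline{L}_b>0$ with $\overline{L}_b+\tilde w_{nb}=\overline{L}_a$. For instance, on a grid with $\epsilon'=1$, take $\tilde w_{na}=10$, $\overline{L}_b=9$, $\tilde w_{nb}=1$, i.e.\ $T^{\mathrm{Acc}}_{na}=10\,T^{\mathrm{Acc}}_{nb}$. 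Then $\Delta\overline{T}^{\mathrm{Acc}}_n=0$, yet
\[
\Delta\overline{\Phi}=\tilde\varphi^{0}-\tilde\varphi^{\overline{L}_a}+\tilde\varphi^{\overline{L}_a}-\tilde\varphi^{\overline{L}_b}=1-\tilde\varphi^{\overline{L}_b}<0
\]
for every $\tilde\varphi>1$. This violates the ``$\Leftarrow$'' half of (\ref{potential_game_def}), so the biconditional cannot be proved, only weakened.

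Strict losses are mirrored. If $\overline{L}_b+\tilde w_{nb}\ge\overline{L}_a+\epsilon'$ and $\tilde w_{nb}\ge\epsilon'$, the increase at $b$ is at least $\tilde\varphi^{\overline{L}_b+\tilde w_{nb}}(1-\tilde\varphi^{-\epsilon'})\ge\tilde\varphi^{\overline{L}_a}$, while the decrease at $a$ is below $\tilde\varphi^{\overline{L}_a}$. Ties, however, are not mirrored. The paper's proof likewise establishes only the forward implication and then asserts sign agreement without argument. What either argument actually delivers is that $\overline{\Phi}$ is a generalized ordinal potential: every strictly improving unilateral move strictly lowers it. That is all NE existence and the finite improvement property require.

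\textbf{Second, smaller error.} A UE with $p_n=0$ does not ``change nothing''. Its move leaves $\overline{\Phi}$ unchanged but changes its own cost from $\overline{L}_a$ to $\overline{L}_b$, so for such a UE even the forward implication fails. You must assume $p_n>0$, and indeed $\tilde w_{na}\ge\epsilon'$. Your own lower bound $\tilde\varphi^{\overline{L}_a}(1-\tilde\varphi^{-\epsilon'})$ on the decrease at $a$ already uses this. The paper makes the same tacit assumption in its inequality $\overline{L}_j-p_iT^{\mathrm{Acc}}_{ij}+\epsilon\le\overline{L}_j$.
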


\begin{IEEEproof}
	Please refer to Appendix C.
\end{IEEEproof}

\subsection{Analysis of Stochastic Computation Offloading Game $\G^{\mathrm{S2}}$}
Now  we analyze the stochastic computation offloading game $\G^{\mathrm{S2}}$ through the property of potential game. 

\begin{theorem}
	The game $\G^{\mathrm{S2}}$ is a stochastic potential game with the following expected potential function $\overline{\Psi}(\mathbf{Y,D})$:
	\begin{equation} \small \label{exp_potential_func_offload}
		\begin{aligned}
			\overline{\Psi}(\mathbf{Y,D}) =\sum_{k=1}^{K}\tilde{\phi}^{\overline{I}_{k}(\mathbf{Y,D})}, 			
		\end{aligned}
	\end{equation}
	where $\tilde{\phi}$ is any value larger than a threshold $\tilde{\phi_0}=\sqrt[l]{2}$, where $l$ is the minimum time scale of computation system.
\end{theorem}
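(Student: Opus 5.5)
We will show that a unilateral deviation by any UE $n$ that strictly lowers its expected computation time $\overline{T}_n^{\mathrm{Comp}}$ also strictly lowers $\overline{\Psi}$, and conversely. This is the ordinal-potential condition (\ref{potential_game_def}) with $\overline{T}_n^{\mathrm{Comp}}$ and $\overline{\Psi}$ in place of $T_n^{\mathrm{Acc}}$ and $\Phi$. The argument copies the one for Theorem~\ref{theorem:CO}, applied to expected loads.

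\textbf{Step 1: expected loads.} First we simplify the expected load. By linearity of expectation and $\Pr(\omega_n=1)=p_n$, (\ref{exp_load_ES}) becomes
\[
\overline{I}_k(\mathbf{Y,D})=\sum_{n\in\N} p_n\, y_{nk}\,\xi_k d_{nk}/f_k .
\]
So UE $n$ contributes the fixed amount $c_{nk}\triangleq p_n\xi_k d_{nk}/f_k$ to whichever ES it selects. Its own payoff is $\overline{T}_n^{\mathrm{Comp}}=\overline{I}_{k}$, where $k$ is its chosen ES.

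\textbf{Step 2: the deviation.} Suppose UE $n$ moves from $(k,d_{nk})$ to $(k',d'_{nk'})$, with all other UEs fixed. Write $A=\overline{I}_k$ and $B=\overline{I}_{k'}$ before the move (excluding UE $n$'s new contribution), $c=c_{nk}$ and $c'=c_{nk'}'$. The payoff goes from $A$ to $B+c'$, so the deviation is improving iff $B+c'<A$. The potential changes only in the $k$ and $k'$ terms. It goes from $\tilde\phi^{A}+\tilde\phi^{B}$ to $\tilde\phi^{A-c}+\tilde\phi^{B+c'}$. The case $k=k'$, where only the step count changes, is trivial: the single term $\tilde\phi^{A}$ moves monotonically with the load.

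\textbf{Step 3 (main obstacle): one-way sign matching.} The key step is showing that $B+c'<A$ implies
\[
\tilde\phi^{A-c}+\tilde\phi^{B+c'}<\tilde\phi^{A}+\tilde\phi^{B}.
\]
Both loads are multiples of the minimum time scale $l$; the footnote's minimum probability granularity $\Delta^{\mathrm{prob}}$ should be folded into $l$ here. Under that granularity, $B+c'<A$ gives $B+c'\le A-l$. Then
\[
\tilde\phi^{A-c}+\tilde\phi^{B+c'}\le \tilde\phi^{A-c}+\tilde\phi^{A-l}.
\]
When the contribution satisfies $c\ge l$, this is at most $2\tilde\phi^{A-l}=2\tilde\phi^{-l}\tilde\phi^{A}<\tilde\phi^{A}$, because $\tilde\phi>\tilde\phi_0=\sqrt[l]{2}$ gives $\tilde\phi^{l}>2$. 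Dropping $\tilde\phi^{B}>0$ only weakens the other side, so the inequality holds. I expect this discretization step to be the delicate point. In the stochastic case one must justify that the expected loads $\overline{I}_k$ and the contributions $c_{nk}$ (which carry the factor $p_n$) still lie on a lattice of spacing at least $l$. I would handle this by defining the "minimum time scale" $l$ for the expected system, using $\Delta^{\mathrm{prob}}$, exactly as the threshold in the statement suggests.

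\textbf{Step 4: converse and conclusion.} For the converse, suppose $B+c'\ge A$. Then $B\ge A-c'$, and the same exponential dominance argument, applied with the roles of the two ES swapped, shows the potential does not decrease. Together with Step 3 this gives the equivalence, so $\G^{\mathrm{S2}}$ is an (ordinal) potential game with potential $\overline{\Psi}$. Since the strategy space is finite, a minimizer of $\overline{\Psi}$ exists and is an NE.
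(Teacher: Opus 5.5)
Your Steps 1--3 are the paper's proof. The paper also uses $\overline{I}_k=\sum_n p_n y_{nk}T^{\mathrm{Comp}}_{nk}$ and simply posits the granularity you flag: its two inequalities with $l$ are exactly $B+c'\le A-l$ and $c\ge l$. It then bounds the post-move potential by $2\tilde\phi^{A-l}<\tilde\phi^{A}$, as you do. Treating the joint change $(k,d_{nk})\to(k',d'_{nk'})$ in one step is cleaner than the paper's ``Case 3 by a similar argument,'' and your $k=k'$ case is its Case 2.

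\textbf{The gap is Step 4.} Your claim that $B+c'\ge A$ forces $\overline{\Psi}$ not to decrease fails in the tie case. If $B+c'=A$, the potential changes by $\tilde\phi^{A-c}-\tilde\phi^{A-c'}$, which is strictly negative whenever $c>c'$.

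Here is a concrete case, in units of $l$, with all your lattice conditions satisfied:
\begin{itemize}
\item UE $n$ (with $p_n=1$) sits alone on ES $k$ with $c=2l$.
\item ES $k'$ carries other load $B=l$, and UE $n$ would contribute $c'=l$ there.
\item The payoff stays $2l$, yet $\overline{\Psi}$ drops from $\tilde\phi^{2l}+\tilde\phi^{l}$ to $\tilde\phi^{2l}+1$.
\end{itemize}
So the reverse implication in \eqref{potential_game_def} is not merely unproved but false. No version of the swapped-roles argument can deliver it.

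What that argument does give is the strict half. If $B+c'>A$, moving back from the post-deviation profile is strictly improving. Applying Step 3 to that reverse move (which now requires $c'\ge l$) shows the forward move strictly increases $\overline{\Psi}$. Thus $\overline{\Psi}$ is only a generalized ordinal potential: improvements strictly decrease it, strict worsenings strictly increase it, and ties are unconstrained. It is not an ordinal potential in the sense of \eqref{potential_game_def}.

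The paper's proof is in the same position: it proves only the forward implication and then asserts equality of signs. You should therefore state the result in this weaker form. Your final conclusion still holds, because an NE at a minimizer of $\overline{\Psi}$ needs only the forward direction from Step 3.
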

\begin{IEEEproof}
	Please refer to Appendix D.
\end{IEEEproof}	

It is well known that in potential games, simple algorithms such as best-response \cite{potential_game} are guaranteed to converge to a pure-strategy NE.
However, these algorithms require complete information about  the strategies of all other players at each iteration. 
Moreover, they assume a static environment, i.e., fixed player set, time-invariant payoff functions, and no stochastic system dynamics during the convergence process. 
As a result, classic game-theoretic algorithms are likely to fail to identify the NE in stochastic games such as $\mathcal{G}^{\mathrm{S0}}$. 
This   motivates us design an efficient algorithm for reaching the NE of $\mathcal{G}^{\mathrm{S0}}$.

\section{Multi-Agent Stochastic Learning Algorithm Design} \label{section:algorithm}
In this section, we design a fully decentralized \textbf{Multi-Agent Stochastic Learning (MASL)} framework for finding NE in the stochastic games $\mathcal{G}^{\mathrm{S1}}$ and $\mathcal{G}^{\mathrm{S2}}$.
We begin by introducing the core idea of MASL, followed by detailed algorithmic implementations, and finally discuss convergence properties and computational complexity.

\subsection{Core Idea of MASL}
The MASL framework is a decentralized learning paradigm where agents autonomously optimize their decisions through repeated interactions with the environment. 
Each agent maintains a \emph{mixed-strategy}, which is a  probability distribution defined over its set of available actions. 
This mixed-strategy is iteratively refined over time. 
In each iteration, the agent selects an action based on its current strategy, receives stochastic reward feedback from the environment, and updates the strategy in response to the observed outcome.
This cycle of exploration (via probabilistic action selection) and exploitation (via reward-driven updates) enables agents to adaptively improve their strategies. 
Over successive iterations, the behavior of all agents converges toward a stable probability distribution, typically a Nash equilibrium \cite{SLA0,SLA_textbook}.

Let	$\boldsymbol{p}^{\tau}_{n}=(p^{\tau}_{n1},...,p^{\tau}_{n|\mathcal{A}_{n}|})$ denote the mixed strategy of agent $n$ at iteration $\tau$, where $p^{\tau}_{na}$ is the probability of selecting action $a\in \mathcal{A}_{n}$.
To begin with unbiased exploration, all actions are initially assigned equal probability:
\begin{equation}\small \label{unified_initialization}
	\begin{aligned}
		\boldsymbol{p}_n^0 = \left( \frac{1}{|\mathcal{A}_{n}|}, \dots, \frac{1}{|\mathcal{A}_{n}|} \right)_{1\times |\mathcal{A}_{n}|}, \quad \forall n \in \mathcal{N}.
	\end{aligned}
\end{equation}
In each iteration $\tau = 1, 2, \dots$, each agent performs two steps. 
First, it samples an action $a^{\tau}_{n}$ according to its current strategy vector
$\boldsymbol{p}_n^\tau$, and receives a corresponding reward $r_{n}^\tau\in [0,1]$.
Second, the agent updates its strategy vector based on the received reward according to the following rule:  
\begin{equation} \small \label{unified_update_rule}
	p_{na}^{\tau+1} = 
	\begin{cases}
		p_{na}^{\tau} + \eta r_n^\tau (1 - p_{na}^{\tau}), & \text{if action } a \text{ is selected}, \\
		p_{na}^{\tau} - \eta r_n^\tau p_{na}^{\tau}, & \text{otherwise},
	\end{cases}
\end{equation}
where $\eta \in (0,1)$  denotes the learning rate. 
Intuitively, \emph{this update rule increases the probability of the selected action if the reward is positive, while proportionally decreasing the probabilities of   other actions}. 
Meanwhile, this update rule preserves the normalization condition $\sum_{a=1}^{|\mathcal{A}_n|}p^{\tau}_{na}=1$ and serves as the mathematical foundation of the MASL dynamics analyzed in the following subsection. 

\subsection{MASL Algorithms for Subgames $\mathcal{G}^{\mathrm{S1}}$ and $\mathcal{G}^{\mathrm{S2}}$}
Building upon the unified initialization mechanism in \eqref{unified_initialization} and  the strategy 
update rule in \eqref{unified_update_rule}, we design MASL algorithms tailored for the communication association game $\mathcal{G}^{\mathrm{S1}}$ 
and the computation offloading game $\mathcal{G}^{\mathrm{S2}}$. 
They differ in terms of action spaces and reward definitions to accommodate the distinct objectives of the two subgames.
\subsubsection{MASL-Algorithm for $\G^{\mathrm{S1}}$}

\textcolor{black}{
	In $\G^{\mathrm{S1}}$, the action space of UE $n$ is the set of APs $\mathcal{M}$.
	At each iteration, each active UE $n \in \mathcal{N}_A$ selects an AP according to its mixed strategy $\mathbf{p}_n^\tau$, observes its expected transmission delay $\overline{T}_n^{\mathrm{Acc}(\tau)}$ via~\eqref{exp_acc}, and then updates its strategy based on the following normalized reward:
	\begin{equation}\small \label{reward_AP}
		\tilde{r}_n^\tau = 1 - v_n \cdot \overline{T}_n^{\mathrm{Acc}(\tau)},
	\end{equation}
	where $v_n \triangleq 1 / \widehat{T}_n^{\mathrm{Acc}}$ is the normalization factor.
	Here, $\widehat{T}_{n}^{\mathrm{Acc}} > 0$ denotes a fixed upper bound on the communication delay of UE $n$.
	It is chosen such that
	$\overline{T}_{n}^{\mathrm{Acc}(\tau)} \le \widehat{T}_{n}^{\mathrm{Acc}}$ over the considered learning horizon, so that the normalized reward remains bounded within $(0,1]$.
	It is worth noting that the normalization factor is a fixed constant and does not depend on future observations or online updates.
	Therefore, at each iteration, the reward is computed only from the currently observed delay and the pre-specified normalization constant.
}
	The mixed strategy $\mathbf{p}^{\tau}_{n}$ is then updated as follows:
	\begin{equation}\small \label{update_rule_UA}
		p_{nm}^{\tau+1} = 
		\begin{cases}
			p_{nm}^{\tau} + \alpha \tilde{r}_n^\tau (1 - p_{nm}^{\tau}), & \text{if AP } m \text{ is selected}, \\
			p_{nm}^{\tau} - \alpha \tilde{r}_n^\tau p_{nm}^{\tau}, & \text{otherwise},
		\end{cases}
	\end{equation}
	where $\alpha \in (0,1)$ is the learning rate.
	Inactive UEs retain their strategies from the previous iteration.
	The MASL algorithm for the communication association game $\mathcal{G}^{\mathrm{S1}}$ is summarized in Alg.~\ref{alg:MASL_for_UA}.

\begin{algorithm}[t]
	\renewcommand{\algorithmicrequire}{\textbf{Input:}}
	\renewcommand{\algorithmicensure}{\textbf{Output:}}
	\caption{MASL algorithm for communication association $\G^{\mathrm{S1}}$}
	\label{alg:MASL_for_UA}
	\begin{algorithmic}[1]
		\REQUIRE Network parameters, learing rate $\alpha$
		\ENSURE	Equilibrium communication association strategies of all UEs $n\in \N$
		\STATE \textbf{Initialize} $\tau=0$, $\mathbf{p}_{n}^{0}=(\frac{1}{M},\dots,\frac{1}{M}), \forall n\in \N$
		\WHILE{not converged}
		\FOR{each $n \in \N_{A}$}		
		\STATE Sample a communication association strategy (i.e., an AP) according to $\mathbf{p}^{\tau}_{n}$
		\STATE Experience and record $\overline{T}^{\mathrm{Acc}(\tau)}_{n}$
		\STATE Compute reward $\tilde{r}^{\tau}_{n}$ via \eqref{reward_AP}
		\STATE Update $\mathbf{p}^{\tau+1}_{n}$ via \eqref{update_rule_UA}
		\ENDFOR
		\STATE Inactive UEs $ n\in (\N \setminus \N_{A})$ keep $\mathbf{p}^{\mathbf{\tau}}_{n}$
		\IF{ $\|\mathbf{p}_n^{\tau+1} - \mathbf{p}_n^\tau\|_2 < \delta, \forall n\in \N$  }
		\STATE \textbf{Break}
		\ENDIF
		\STATE $\tau \Leftarrow \tau+1$  	
		\ENDWHILE
	\end{algorithmic}
	\vspace{-1mm}
\end{algorithm}

\subsubsection{MASL-Algorithm for $\G^{\mathrm{S2}}$}
\begin{algorithm}[t]
	\renewcommand{\algorithmicrequire}{\textbf{Input:}}
	\renewcommand{\algorithmicensure}{\textbf{Output:}}
	\caption{MASL algorithm for computation offloading $\G^{\mathrm{S2}}$}
	\label{alg:MASL_for_CO}
	\begin{algorithmic}[1]
		\REQUIRE Network parameters, learning rate $\beta$
		\ENSURE Equilibrium computation offloading  strategies of all UEs $n\in \N$ 
		\STATE Compute $d_{nk}^{*}$ via \eqref{optimal_IS} $\forall n,k$ 	
		\STATE \textbf{Initialize} $\tau=0$, $\mathbf{q}_{n}^{0}=(\frac{1}{K},\dots,\frac{1}{K}), \forall n\in \N$
		\WHILE{not converged}
		\FOR{each $n \in \N_{A}$}	
		\STATE Sample a computation offloading strategy (i.e., an ES) according to $\mathbf{q}^{\tau}_{n}$
		\STATE Experience and record  $\overline{T}^{\mathrm{Comp}(\tau)}_{n}$ 
		\STATE Compute reward $\tilde{\mathtt{r}}^{\tau}_{n}$ via \eqref{reward_ES}
		\STATE Update $\mathbf{q}^{\tau+1}_{n}$ via \eqref{update_rule_CO}  		
		\ENDFOR \\
		\STATE Inactive UEs $ n\in (\N \setminus \N_{A})$ keep $\mathbf{q}^{\tau}_{n}$.		
		\IF{ $\|\mathbf{q}_n^{\tau+1} - \mathbf{q}_n^\tau\|_2 < \delta, \forall n\in \N$  }
		\STATE \textbf{Break}
		\ENDIF \\		
		\STATE $\tau \Leftarrow \tau+1$ \\
		\ENDWHILE
	\end{algorithmic}
	\vspace{-1mm}
\end{algorithm}
In $\mathcal{G}^{\mathrm{S2}}$, before selecting an ES, each UE computes its optimal number of inference steps $d^{*}_{nk}$ to satisfy error constraints berfore the learning process: 
\begin{equation} \label{optimal_IS}
	\begin{aligned}
		d_{nk}^{*} = \min \bigg\lceil -\frac{1}{\gamma_{nk}} \cdot \ln \bigg(\frac{\E^0_n}{\epsilon_k^{\mathrm{fwd}}}\bigg) \bigg\rceil, \forall n\in \N, k \in \K.
	\end{aligned}
\end{equation}
\textcolor{black}{Then, at each iteration, each active UE samples an ES according to its mixed strategy, observes its expected computation delay $\overline{T}_n^{\mathrm{Comp}(\tau)}$ via~\eqref{exp_comp}, and updates its strategy based on the following normalized reward:
	\begin{equation}\small \label{reward_ES}
		\tilde{\mathtt{r}}_n^\tau = 1 - u_n \cdot \overline{T}_n^{\mathrm{Comp}(\tau)},
	\end{equation}
	where $u_n \triangleq 1 / \widehat{T}_n^{\mathrm{Comp}}$ is the normalization factor.
	Here, $\widehat{T}_{n}^{\mathrm{Comp}} > 0$ denotes a fixed upper bound on the computation delay of UE $n$.
	Similar as \eqref{update_rule_UA}, 
	$\overline{T}_{n}^{\mathrm{Comp}(\tau)} \le \widehat{T}_{n}^{\mathrm{Comp}}$ over the considered learning horizon, so that the normalized reward remains bounded within $(0,1]$.}
	The mixed strategy is then updated as:
	\begin{equation}\small \label{update_rule_CO}
		q_{nk}^{\tau+1} = 
		\begin{cases}
			q_{nk}^{\tau} + \beta \tilde{\mathtt{r}}_n^\tau (1 - q_{nk}^{\tau}), & \text{if ES } k \text{ is selected}, \\
			q_{nk}^{\tau} - \beta \tilde{\mathtt{r}}_n^\tau q_{nk}^{\tau}, & \text{otherwise},
		\end{cases}
	\end{equation}
	where $\beta \in (0,1)$ is the learning rate.
	Inactive UEs retain their strategies from the previous iteration.
	The MASL algorithm for the   game $\mathcal{G}^{\mathrm{S2}}$ is summarized in Alg.~\ref{alg:MASL_for_CO}.

It's noteworthy that the MASL algorithms for $\G^{\mathrm{S1}}$ and $\G^{\mathrm{S2}}$ are fully distributed, making them particularly suitable for large-scale, decentralized systems.
In both algorithms, the strategy updates depend solely on each UE's local action-reward feedback, and there is no need for any global knowledge of other UEs' strategies or states. 
This decentralized property allows MASL algorithms to operate efficiently in environments with non-negligible communication overhead, making them particularly suitable for large-scale systems.

\subsection{Convergence of MASL Algorithms}
To analyze the learning dynamics of Alg.~\ref{alg:MASL_for_UA} and Alg.~\ref{alg:MASL_for_CO}, 
we model the strategy evolution of the UEs using ordinary differential equations (ODEs) as   \cite{SLA_textbook}. This allows for a mathematical representation of how the action selection probabilities evolve overtime. 
In this subsection, we examine the connection between the stationary points of the ODE and the NE points of the stochastic games.
Through this analysis, we show that for sufficiently small learning rates, the MASL algorithm converges to a pure-strategy NE.
This result underlines the stability and reliability of the MASL algorithm under appropriate conditions.\footnote{Due to the procedural similarity between Alg.\ref{alg:MASL_for_UA} and Alg.\ref{alg:MASL_for_CO}, the analysis framework developed here is directly applicable to both algorithms. For conciseness, we focus our discussion on Alg. \ref{alg:MASL_for_UA}.}

In the subsequent analysis, we model the discrete-time learning dynamics as a continuous-time dynamical system and explore the relationship between the learning dynamics and the NE points of $\G^{\mathrm{S1}}$.
\begin{lemma} \label{lemma:converge_ODE}
	Let $\mathbf{P} \triangleq (\mathbf{p}_1,\dots,\mathbf{p}_N)$ denote the mixed-strategy profile of the game $\G^{\mathrm{S1}}$, with initial state $\mathbf{P}^0 = [\frac{1}{M}]_{N\times M}$. For a sufficiently small learning rate ($\alpha \rightarrow 0$), the strategy profile ${ \mathbf{P}^{\tau} }$ generated by the learning update converges weakly to the solution trajectory of the following ODE:
	\begin{equation} \small \label{ODE-xx}
		\frac{d\mathbf{P}}{d\tau} = \mathscr{H}(\mathbf{P}), \qquad \mathbf{P}(0) = \mathbf{P}^{0},
	\end{equation}
	where $\mathscr{H}(\mathbf{P})$ is the expected update direction, defined as the conditional expectation:
	\begin{equation} \small \label{expect_condition_func}
		\mathscr{H}(\mathbf{P}) = \mathbb{E}\left[\mathscr{F}\bigl(\mathbf{P}^{\tau},\mathbf{X}^{\tau},\tilde{\mathbf{r}}^{\tau}\bigr) \big| \mathbf{P}^{\tau} = \mathbf{P}\right].
	\end{equation}
	Here, $\mathbf{X}^{\tau}=(\boldsymbol{x}_1^{\tau},\dots,\boldsymbol{x}_N^{\tau})$ is the action matrix at iteration $\tau$, $\tilde{\mathbf{r}}^{\tau}=(\tilde{r}_1^{\tau},\dots,\tilde{r}_N^{\tau})$ is the corresponding reward vector, and the function $\mathscr{F}(\cdot)$ encodes the update rule given in \eqref{unified_update_rule}, i.e., $\mathscr{F}(\mathbf{P}^{\tau},\mathbf{X}^{\tau},\tilde{\mathbf{r}}^{\tau}) = \mathbf{P}^{\tau+1}$.
\end{lemma}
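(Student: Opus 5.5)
The plan is to cast the update \eqref{update_rule_UA} as a constant-step stochastic approximation scheme and invoke the weak-convergence theorem of Kushner--Yin for such schemes, in the form used for linear reward--inaction learning automata in \cite{SLA_textbook,SLA0}. Writing $\mathbf{e}_{m}$ for the $m$-th unit vector, the update of each UE $n$ reads
\begin{equation*}
\mathbf{p}_n^{\tau+1}=\mathbf{p}_n^{\tau}+\alpha\, G_n\bigl(\mathbf{P}^{\tau},\mathbf{X}^{\tau},\tilde{\mathbf{r}}^{\tau}\bigr),\qquad G_n=\omega_n^{\tau}\,\tilde r_n^{\tau}\bigl(\mathbf{e}_{a_n^{\tau}}-\mathbf{p}_n^{\tau}\bigr),
\end{equation*}
where the factor $\omega_n^{\tau}$ encodes that inactive UEs keep their strategy. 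Stacking these gives $\mathbf{P}^{\tau+1}=\mathbf{P}^{\tau}+\alpha\,G(\mathbf{P}^{\tau},\mathbf{X}^{\tau},\tilde{\mathbf{r}}^{\tau})$, so $\mathscr{F}$ is the identity plus $\alpha G$. The ODE drift is then understood as $\mathscr{H}(\mathbf{P})=\mathbb{E}[G\mid\mathbf{P}^{\tau}=\mathbf{P}]$, with time rescaled as $t=\alpha\tau$. I will define the piecewise-constant interpolation $\mathbf{P}^{\alpha}(t)=\mathbf{P}^{\lfloor t/\alpha\rfloor}$ and aim to show that $\{\mathbf{P}^{\alpha}(\cdot)\}$ converges weakly, as $\alpha\to 0$, to the solution of \eqref{ODE-xx}.

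The key steps are as follows.

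\emph{Markov structure and bounded iterates.} Conditioned on $\mathbf{P}^{\tau}$, the triple $(\boldsymbol{\omega}^{\tau},\mathbf{X}^{\tau},\tilde{\mathbf{r}}^{\tau})$ is drawn independently of the past: the activity states are i.i.d.\ Bernoulli across slots, and each action is sampled from $\mathbf{p}_n^{\tau}$. Hence $\{\mathbf{P}^{\tau}\}$ is a time-homogeneous Markov chain on the compact product of simplices. The normalization in \eqref{reward_AP} gives $\tilde r_n^{\tau}\in(0,1]$, so $\|G\|\le 2$. The increments are therefore uniformly bounded and the family $\{G\}$ is uniformly integrable. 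I will also note that the simplex is invariant, since for $\alpha\tilde r\le 1$ the update is a convex combination.

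\emph{Continuity of the drift.} Explicitly, the drift is
\begin{equation*}
\mathscr{H}_{nm}(\mathbf{P})=p_n\,p_{nm}\sum_{m'}p_{nm'}\bigl(\bar r_{nm}(\mathbf{P})-\bar r_{nm'}(\mathbf{P})\bigr),
\end{equation*}
where $\bar r_{nm}(\mathbf{P})$ is the expected reward of UE $n$ choosing AP $m$ when the others play $\mathbf{P}_{-n}$. This expected reward is a finite multilinear polynomial in the entries of $\mathbf{P}$, by \eqref{exp_acc} and \eqref{def_exp_load_AP}. Consequently $\mathscr{H}$ is Lipschitz on the compact domain, and the ODE has a unique solution from $\mathbf{P}^{0}$.

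\emph{Martingale decomposition.} I will write $G=\mathscr{H}(\mathbf{P}^{\tau})+\mathbf{M}^{\tau}$, where $\mathbf{M}^{\tau}$ is a bounded martingale difference. Tightness of $\{\mathbf{P}^{\alpha}\}$ in the Skorokhod space then follows from the bounded increments.

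\emph{Identifying the limit.} For any limit point, the accumulated noise $\alpha\sum_{\tau}\mathbf{M}^{\tau}$ over an interval $[t,t+s]$ has variance of order $\alpha s$, which vanishes as $\alpha\to0$. The drift sum converges to $\int\mathscr{H}(\mathbf{P}(u))\,du$ by continuity. This is checked through the standard martingale-problem characterization, and uniqueness of the ODE solution upgrades convergence along subsequences to convergence of the whole family.

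The main obstacle is that the payoff observed by UE $n$ is defined via the expected delay $\overline{T}_n^{\mathrm{Acc}}$, whose meaning when other UEs play mixed strategies and activity varies must be pinned down so that $\mathscr{H}$ is a genuine conditional expectation that depends only on $\mathbf{P}$. I will handle this by treating the observed reward as a bounded random variable whose conditional mean given $(\mathbf{P}^{\tau},\boldsymbol{\omega}^{\tau},a_n^{\tau})$ is a multilinear function of $\mathbf{P}_{-n}$. Averaging over $\boldsymbol{\omega}$ then yields the factor $p_n$ together with $\bar r_{nm}$, preserving Lipschitz continuity. Any residual state-dependence of the noise is covered by the Kushner--Yin condition that the noise be uniformly integrable with continuous conditional mean.
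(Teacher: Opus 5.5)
Your proposal is correct and follows the same route as the paper. The paper also writes the update as a constant-step stochastic approximation $\mathbf{P}^{\tau+1}=\mathbf{P}^{\tau}+\alpha[\mathscr{H}(\mathbf{P}^{\tau})+\mathbf{M}^{\tau+1}]$ with martingale-difference noise, checks compactness, bounded rewards, a continuous bounded update and conditionally i.i.d.\ actions and rewards, and then invokes the Kushner-type weak-convergence result (Theorem~3.1 of \cite{SLA0}), with the drift computed as in eq.~(16) there. Your reading of $\mathscr{H}$ as the conditional mean of the increment $G$ rather than of $\mathbf{P}^{\tau+1}$ itself, with time rescaled as $t=\alpha\tau$, is the interpretation the paper's proof implicitly uses, and the $\omega_n^{\tau}$ factor correctly encodes that inactive UEs keep their strategies.
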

\begin{IEEEproof}
	Please refer to Appendix E.
\end{IEEEproof}

Lemma \ref{lemma:converge_ODE} indicates that, the long-term behavior of the stochastic learning process can be precisely captured by an ODE, provided the agents’ learning rate is sufficiently small. 
Crucially, the lemma highlights that a sufficiently small step-size is a prerequisite for convergence. This theoretical requirement directly informs our parameter design and is empirically validated in the simulations of the next section, where we compare the system performance achieved under different learning rates.

Next, we will show that the strategy updating rule given in \eqref{unified_update_rule} leads to a solution (stationary point) of the corresponding ODE, which is exactly an NE of $\G^{\mathrm{S1}}$. Formally,

\begin{lemma} \label{lemma:small_learning_rate}
	With a sufficiently small step-size of learning rate  $\alpha\rightarrow 0$, Alg.1 converges to a stable stationary point of the ODE given in (\ref{ODE-xx}).
\end{lemma}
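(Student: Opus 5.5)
The plan is the standard linear reward--inaction (Sastry--Thathachar) argument. It rests on Lemma~\ref{lemma:converge_ODE} and the potential structure of $\G^{\mathrm{S1}}$ established in the preceding theorem.

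\emph{Step 1: compute the drift field.} Use \eqref{update_rule_UA} to write $\mathscr{H}(\mathbf{P})$ explicitly. Let $g_{nm}(\mathbf{P})\triangleq\mathbb{E}[\tilde r_n \mid \boldsymbol{x}_n=\boldsymbol{e}_m,\mathbf{P}_{-n}]$ be UE $n$'s expected reward for choosing AP $m$. It is multilinear in the other players' mixed strategies, and it absorbs the activity probabilities $p_n$ through $\overline{L}_m$. Conditioning on the sampled action, the drift has the replicator form
\begin{equation*}
\mathscr{H}_{nm}(\mathbf{P})=p_{nm}\sum_{m'\neq m}p_{nm'}\bigl(g_{nm}(\mathbf{P})-g_{nm'}(\mathbf{P})\bigr),
\end{equation*}
up to the constant factor $p_n$ coming from the UE's activity. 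Because $\tilde r_n\in(0,1]$, the simplex is forward invariant and the field is Lipschitz. Lemma~\ref{lemma:converge_ODE} then says that, as $\alpha\to0$, the interpolated iterates track solutions of \eqref{ODE-xx}.

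\emph{Step 2: Lyapunov function.} I would extend the potential to mixed strategies. Since $g_{nm}=1-v_n\overline{T}^{\mathrm{Acc}}_n$ and the game is potential, take the multilinear extension $\mathcal{V}(\mathbf{P})=\mathbb{E}_{\mathbf{X}\sim\mathbf{P}}[\Lambda(\mathbf{X})]$ of an exact (weighted) potential $\Lambda$ for the normalized rewards. The natural candidate is $\Lambda(\mathbf{X})=-\tfrac12\sum_m v\,\overline{L}_m(\mathbf{X})^2$-type, with weights $v_n$. The goal is to show $\partial\mathcal{V}/\partial p_{nm}=w_n g_{nm}$ with $w_n>0$. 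Then
\begin{equation*}
\frac{d\mathcal{V}}{d\tau}=\sum_n w_n\sum_{m<m'}p_{nm}p_{nm'}(g_{nm}-g_{nm'})^2\ge0 ,
\end{equation*}
with equality exactly at stationary points. By LaSalle's invariance principle, every trajectory of \eqref{ODE-xx} converges to the set of stationary points.

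\emph{Step 3: identify the stable points.} Stationary points are the profiles in which each player mixes only among actions with equal payoff; this includes all pure profiles. Linearize at a pure profile that is not an NE. There is a profitable deviation $m'$ with $g_{nm'}>g_{nm}$, and the Jacobian entry in the direction of $p_{nm'}$ equals $g_{nm'}-g_{nm}>0$, so the point is unstable. At interior or mixed stationary points that are not NE, the same positive-eigenvalue argument applies. Hence the only stable stationary points are NE, and, generically, pure NE. The initial condition $\mathbf{P}^0$ is interior, and unstable sets have measure zero, so the limit is a stable stationary point. This proves the lemma.

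The main obstacle is Step 2. The exponential potential $\overline{\Phi}$ is only an ordinal potential: it agrees in sign with the payoff changes but not in magnitude. An ordinal potential does not directly give $\partial\mathcal{V}/\partial p_{nm}\propto g_{nm}$. My first attempt would be to verify that the expected-load game is in fact a weighted potential game. This looks plausible because $\overline{T}^{\mathrm{Acc}}_n=\overline{L}_m$ and the cost is linear in the loads, so a quadratic Rosenthal-type function in the $\overline{L}_m$ may work, given that the costs here are player-specific. If that fails, the fallback is to bound $d\mathcal{V}/d\tau$ directly using the ordinal sign agreement, restricted to a neighborhood of the pure profiles.
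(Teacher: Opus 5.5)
Your route is the one the paper intends. Its proof is only a pointer to Lemma~5 of \cite{SLA3}, which is exactly the reward--inaction argument of your Steps~1--2: the expected potential serves as a Lyapunov function, and LaSalle finishes. However, the step you flag as the main obstacle is a genuine gap, and your proposed repair does not work.

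The identity you need is $\partial\mathcal{V}/\partial p_{nm}=w_n g_{nm}$, modulo terms independent of $m$, with constants $w_n>0$. Integrating along an edge on which only UE~$n$'s mixture moves shows that this identity forces the pure game to be a weighted potential game. In general, $\G^{\mathrm{S1}}$ is not one. To see this, write $a_{nm}\triangleq p_nT^{\mathrm{Acc}}_{nm}$ and take two UEs and three APs, holding all other UEs fixed. Put both UEs on AP~1 and run the cycle in which UE~1 moves $1\to2$, UE~2 moves $1\to3$, UE~1 moves $2\to1$, and UE~2 moves $3\to1$. The background loads cancel. UE~1's cost changes sum to $-a_{21}$ and UE~2's sum to $+a_{11}$, so the Monderer--Shapley cycle condition forces $w_1v_1/(w_2v_2)=a_{11}/a_{21}$. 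The same cycle with AP~2 as the shared AP forces $w_1v_1/(w_2v_2)=a_{12}/a_{22}$. You would therefore need $T^{\mathrm{Acc}}_{11}T^{\mathrm{Acc}}_{22}=T^{\mathrm{Acc}}_{12}T^{\mathrm{Acc}}_{21}$, i.e.\ $R_{11}R_{22}=R_{12}R_{21}$. This fails for generic UE/AP positions as soon as $M\ge 3$. (It does hold for $M=2$, or when $T^{\mathrm{Acc}}_{nm}$ factorizes into a UE term times an AP term; the first reduction to a weighted potential game, from which your Steps~1--2 go through, applies only then.) So no Rosenthal-type quadratic, and no other function, can supply your Step~2. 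The obstruction is structural: with AP-dependent weights $a_{nm}$ this is load balancing on unrelated machines, which in general admits only ordinal potentials such as $\overline{\Phi}$.

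Your fallback does not close the gap either. The sign agreement in Theorem~3 holds only for pure $\mathbf{X}_{-n}$. After averaging over $\mathbf{X}_{-n}\sim\mathbf{P}_{-n}$, the differences $g_{nm}-g_{nm'}$ and $\mathbb{E}[\overline{\Phi}\mid x_n=m]-\mathbb{E}[\overline{\Phi}\mid x_n=m']$ can have the ``wrong'' relative sign, because two random variables that agree in sign pointwise need not have expectations of matching sign. Hence $d\mathcal{V}/d\tau$ can have the wrong sign in the interior, which is exactly where the trajectory starts ($\mathbf{P}^0$ is uniform). Estimates near pure profiles give only the local stability information of Lemma~\ref{lemma:NE_and_stationary}, not global convergence. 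Deferring to \cite{SLA3}, as the paper does, does not resolve this either: the cited Lyapunov argument needs precisely the exact/weighted identity you are missing, while Theorem~3 is only an ordinal statement.

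There is also a secondary problem in Step~3. LaSalle gives convergence only to the stationary set, and that set contains every pure profile, including the non-NE ones. Since $\mathbf{P}^0$ is a single fixed point, ``unstable sets have measure zero'' says nothing about this particular trajectory. Getting a \emph{stable} limit requires a separate argument, for example that the noise in the stochastic iterates repels them from linearly unstable equilibria, or else a weaker conclusion.
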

\begin{IEEEproof}
	Please refer to Appendix F.
\end{IEEEproof}

\begin{lemma} \label{lemma:NE_and_stationary}
	The following two statements are both true, and they are equivalent:
	\begin{enumerate}
		\item All the stable stationary points of the ODE in (\ref{ODE-xx}) are NE points of $\G^{\mathrm{S1}}$.
		\item All the NE points of $\G^{\mathrm{S1}}$ are stable stationary points of the ODE in (\ref{ODE-xx}).
	\end{enumerate}
\end{lemma}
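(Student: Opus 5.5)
Our plan is the standard linear reward--inaction argument of \cite{SLA0,SLA_textbook}, applied to the expected-delay reward of $\G^{\mathrm{S1}}$. The first step is to compute $\mathscr{H}(\mathbf{P})$ in \eqref{expect_condition_func} explicitly. Let $g_{nm}(\mathbf{P}) \triangleq \mathbb{E}[\tilde{r}_n \mid \boldsymbol{x}_n = \boldsymbol{e}_m, \mathbf{P}]$ be the expected normalized reward of UE $n$ when it selects AP $m$ and the others play $\mathbf{P}_{-n}$. This quantity is multilinear in $\mathbf{P}_{-n}$, and for pure profiles it equals $1 - v_n \overline{T}_n^{\mathrm{Acc}}$. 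Averaging the update rule \eqref{update_rule_UA} over the selected action gives the replicator-type field
\begin{equation*}
\mathscr{H}_{nm}(\mathbf{P}) = \alpha\, p_{nm} \sum_{m'\in\M} p_{nm'} \big( g_{nm}(\mathbf{P}) - g_{nm'}(\mathbf{P}) \big),
\end{equation*}
up to the constant activity factor $p_n$ from inactive rounds, which does not affect the sign. Since $g_{nm}$ is a decreasing affine function of $\overline{T}^{\mathrm{Acc}}$, ordering actions by $g$ is the same as ordering them by expected delay. From this we read off the stationary points: $\mathscr{H}(\mathbf{P}) = 0$ exactly when, for every $n$, $g_{nm}$ is constant on the support of $\mathbf{p}_n$. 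In particular every pure profile (every vertex) is stationary.

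Next we prove statement 2 (NE $\Rightarrow$ stable). Take a pure NE $\mathbf{X}^*$ of $\G^{\mathrm{S1}}$, which exists by the potential property of $\G^{\mathrm{S1}}$ proved earlier. Linearize $\mathscr{H}$ at the corresponding vertex. The off-support coordinates satisfy $\dot p_{nm} \approx \alpha\, p_{nm}\big(g_{nm}(\mathbf{X}^*) - g_{nm^*}(\mathbf{X}^*)\big)$. By the NE inequality this coefficient is $\le 0$, and it is strictly negative when the NE is strict. Here the threshold $\tilde{\varphi}_0$ and the minimum time scale $\epsilon$ are invoked to rule out ties, so that a delay difference cannot be positive yet smaller than the resolution. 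All eigenvalues are therefore negative, and Lyapunov's indirect method gives asymptotic stability.

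For statement 1 (stable $\Rightarrow$ NE) we argue by contraposition. Suppose a stationary $\mathbf{P}$ is not an NE. Then some $n$ and some $m$ outside the support of $\mathbf{p}_n$ satisfy $g_{nm} > \sum_{m'} p_{nm'} g_{nm'}$. Perturb $p_{nm}$ to a small $\delta > 0$. Its dynamics $\dot p_{nm} = \alpha p_{nm}(g_{nm} - \bar g_n) > 0$ push the state away, so $\mathbf{P}$ is unstable. If instead $\mathbf{P}$ is a non-pure stationary point, we show it is not stable using the potential $\overline{\Phi}$ extended multilinearly to mixed profiles: $\overline{\Phi}$ is nonincreasing along the ODE, and mixed interior stationary points are saddles of it. Equivalently, the replicator Jacobian restricted to the support has trace zero with a nonzero eigenvalue, so it has an eigenvalue with positive real part. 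Hence the stable stationary points are pure, and they are exactly the pure NE. The two statements are converses of each other, so together they establish the claimed equivalence.

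The main obstacle is this last part: excluding stable mixed stationary points. The planned route is to show that along the ODE the extended potential satisfies $\tfrac{d}{d\tau}\mathbb{E}_{\mathbf{P}}[\overline{\Phi}] \le 0$. This is a delicate step, because $\overline{\Phi}$ is exponential rather than an exact potential, so we would instead use the exact potential $\sum_m \overline{L}_m^2$-type form, or the ordinal property together with the strictness margin. The Lyapunov argument then confines stable limits to vertices.
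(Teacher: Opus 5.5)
The paper gives no argument of its own here: its proof is a pointer to Theorem 3.2 of \cite{SLA0}. Three of your steps are the standard linear reward--inaction analysis behind such a citation, and they are fine: the drift computation, the contraposition (a stationary point with a profitable off-support action is unstable because that coordinate grows), and the diagonal linearization at a vertex. The trouble starts where you try to get more than that analysis gives. In statement 2 the vertex Jacobian has eigenvalues proportional to $g_{nm}(\mathbf{X}^*)-g_{nm^*}(\mathbf{X}^*)$, so you only obtain asymptotic stability of \emph{strict} pure NE. The minimum-time-scale assumption from the proof of Theorem 3 only says that a strictly negative delay difference is at most $-\epsilon$. It does not exclude an exact tie, and $\tilde{\varphi}_0$ is a parameter of the potential that never enters the ODE. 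Suppose UE $n$ has tied best replies $m^*\neq m$ at $\mathbf{X}^*$. Then every point of the edge on which $n$ mixes over $\{m^*,m\}$, while all other UEs stay at $\mathbf{X}^*_{-n}$, is a rest point, so the vertex is not asymptotically stable. Mixed NE are also rest points, and your own statement-1 step is trying to declare them unstable. So what you prove for statement 2 requires an explicit strictness hypothesis, and that hypothesis cannot be derived from $\epsilon$.

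The mixed branch of statement 1 is the real gap, and neither of your routes closes it. ``Trace zero with a nonzero eigenvalue'' does not force an eigenvalue with positive real part; the spectrum $\{\pm i\}$ has trace zero. The restricted Jacobian may also be nilpotent or zero, and in such non-hyperbolic cases linearization decides nothing; you give no reason why $\G^{\mathrm{S1}}$ avoids them. The potential route is unavailable as well. $\overline{\Phi}$ is only a generalized ordinal potential, and the weights $p_nT^{\mathrm{Acc}}_{nm}$ depend on both UE and AP. Consequently the four-cycle of unilateral moves of UEs $i,k$ between APs $m,m'$ has total cost change $p_i(T^{\mathrm{Acc}}_{im}+T^{\mathrm{Acc}}_{im'})-p_k(T^{\mathrm{Acc}}_{km}+T^{\mathrm{Acc}}_{km'})$, which is generically nonzero. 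Hence there is no exact potential of $\sum_m\overline{L}_m^2$ type, and generically no weighted one, whose multilinear extension could serve as a Lyapunov function. Note also that with Lyapunov stability the claim that stable rest points are pure is false. Take one UE and two APs with equal $T^{\mathrm{Acc}}$: every mixed strategy is then a stable rest point, so ``stable'' must mean asymptotically stable. With that reading, the missing idea is a volume argument. On the relative interior of any face, use coordinates $u_{nm}=\ln(p_{nm}/p_{nm_0})$; then $\dot u_{nm}=c_n\,(g_{nm}-g_{nm_0})$, where $c_n>0$ is the UE-dependent speed. This does not depend on $\mathbf{p}_n$, so the flow is divergence-free, and by Liouville's theorem no point that is mixed for some UE can be asymptotically stable. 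Together with your vertex analysis, this shows that the asymptotically stable rest points are exactly the strict pure NE, with no potential needed. If instead NE is read in mixed strategies, your contraposition alone already proves statement 1, but then statement 2 is false for mixed NE.
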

\begin{IEEEproof}
	Please refer to Appendix G.
\end{IEEEproof}
Lemma \ref{lemma:NE_and_stationary} establishes a formal equivalence between the stable attractors of the learning dynamics (i.e., the stationary points of the ODE) and the NE of $\mathcal{G}^{\mathrm{S1}}$. 
This implies that the algorithm does not simply converge to a stationary point, but also reaches a strategically stable configuration where no agent has an incentive to deviate unilaterally.

\subsection{Complexity Analysis}
\label{subsec:complexity}

We analyze the computational and communication complexity of the MASL algorithms for both subgames.

\textbf{Time complexity.} In each iteration, an active UE performs three operations: (i) samples an action according to its probability vector ($\mathcal{O}(1)$), (ii) observes the corresponding delay and computes the normalized reward ($\mathcal{O}(1)$), and (iii) updates all entries in its probability vector. The update step dominates the per-iteration cost, yielding $\mathcal{O}(M)$ complexity per UE for $\mathcal{G}^{\mathrm{S1}}$ and $\mathcal{O}(K)$ for $\mathcal{G}^{\mathrm{S2}}$. The aggregate system complexity is therefore $\mathcal{O}(|\mathcal{N}_A| \cdot \max\{M, K\})$ per iteration.

\textbf{Space complexity.} Each UE maintains only local state: a probability vector of size $M$ (for $\mathcal{G}^{\mathrm{S1}}$) or $K$ (for $\mathcal{G}^{\mathrm{S2}}$), and for $\mathcal{G}^{\mathrm{S2}}$ additionally the precomputed optimal inference steps $\{d_{nk}^*\}_{k \in \mathcal{K}}$ requiring $\mathcal{O}(K)$ storage. Hence, the per-UE memory requirement is $\mathcal{O}(M)$ for $\mathcal{G}^{\mathrm{S1}}$ and $\mathcal{O}(K)$ for $\mathcal{G}^{\mathrm{S2}}$, independent of $N$.

\textbf{Communication overhead.} MASL requires no explicit exchange of strategies, channel states, or performance metrics among UEs. Delay measurements for reward computation are obtained through existing control-plane mechanisms (e.g., ACK/RTT feedback), incurring no algorithm-specific communication overhead. Convergence can be detected locally via $\|\mathbf{p}_n^{\tau+1} - \mathbf{p}_n^\tau\|_1 < \epsilon$, eliminating the need for global coordination. Consequently, the per-UE communication complexity is $\mathcal{O}(1)$ per iteration.

\section{Simulation Results} \label{section:simulations}
\begin{table}[t]
	\caption{Main Simulation Parameters}
	\label{table:parameters}
	\vspace{-1.0em}
	\begin{tabular}{lll}
		
		\toprule
		
		\textbf{Parameters} & \textbf{Value} \\
		
		\midrule
		
		Number of APs $M$ & [2,10] &    \\
		
		Number of ES $K$ & [2,10] &    \\
		
		Number of UEs $N$ & [10,30] &    \\
		
		UE's activeness probability $p_n$ & $ (0, 1] $ & \\
		
		FLOPS used by one inference step  $\xi_k$   & [0.1,0.5] TeraFLOPs/step & \\
		
		Computation capacity of ES $f_k$ & [2,10] TeraFLOPs/s &		\\
		
		UE transmission data volume $D_{n}$ & [2,10] MB &	\\
		
		Wireless bandwidth of AP $W$ & [2,10] MHz &	\\
		
		Transmission power of each UE $\rho_{nm}$ & 0.2 W & \\
		
		Path loss factor $\theta$ & 4 & \\
		
		The back-ground Noise Power $N_0$ & -174 dBm/Hz & \\
		
		\bottomrule
		\vspace{-2em}
	\end{tabular}
\end{table}

In this section, we illustate the performance of the proposed MASL algorithms for $\mathcal{G}^{\mathrm{S1}}$ and $\mathcal{G}^{\mathrm{S2}}$ through extensive simulations.
We investigate the performance of the algorithms across a variety of scenarios to demonstrate their effectiveness.
Specifically, we consider a $1km \times 1km$ square area, where APs are distributed evenly in the area.
Meanwile, the UEs follow a random walk model, where in each iteration round, each UE randomly moves to a location based on a specific probability distribution.
The other simulation parameters are summarized in Table \ref{table:parameters}. 
Similar experimental configurations are adopted in  \cite{xu_twc2025}, \cite{zhuang_infocom2025}, and \cite{liu_wiopt2024}.
Notably, we test various parameter settings across different scenarios to ensure the robustness and reliability of our results.

\begin{figure}[t]
	\centering
	\includegraphics[width=0.8\linewidth]{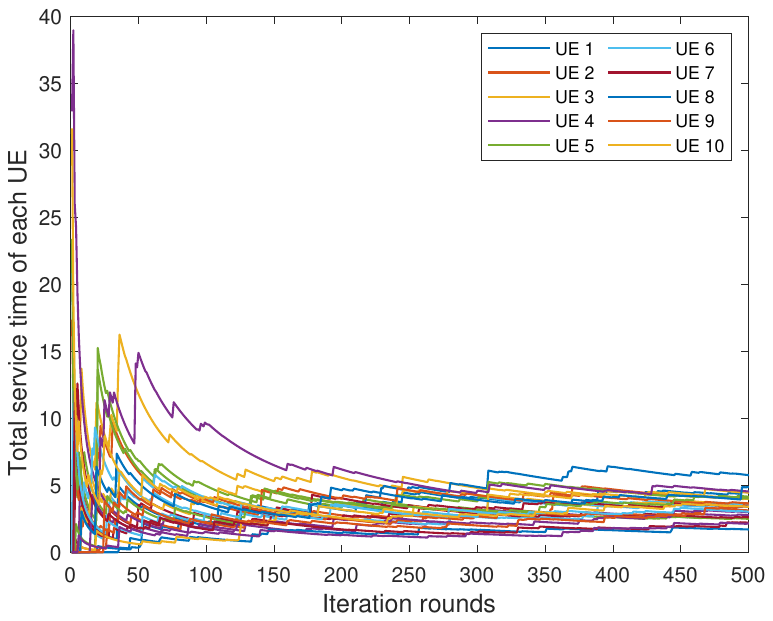}
	\vspace{-3mm}
	\caption{Dynamics of UE's service completion time.}
\label{fig:service_time_of_UE}
\vspace{-3mm}
\end{figure}

\begin{figure}[t]
	\centering
	\includegraphics[width=0.8\linewidth]{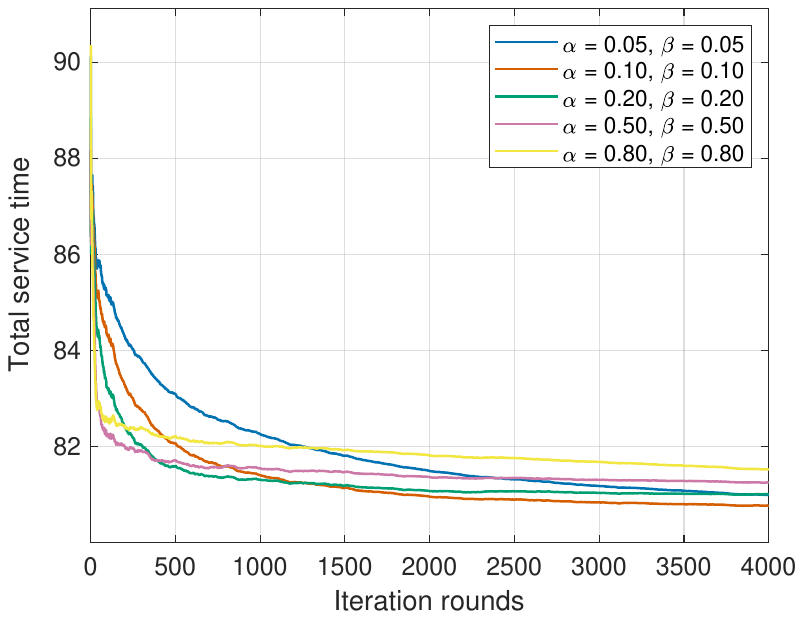}
	\vspace{-3mm}
	\caption{Impact of different of learning rate settings on MASL convergence.}
\label{fig:learning_rate_impact}
\vspace{-3mm}
\end{figure}

\begin{figure}[t]
	\centering
	\includegraphics[width=0.8\linewidth]{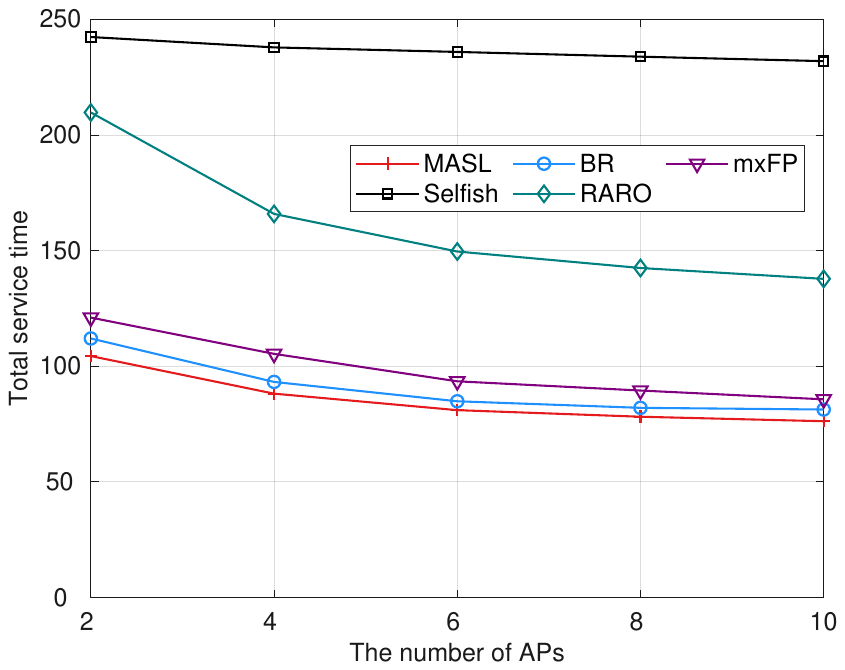}
	\vspace{-3mm}
	\caption{Total service time vs. the number of APs.}
\label{fig:Num_AP}
\vspace{-3mm}
\end{figure}

\begin{figure}[t]
	\centering
	\includegraphics[width=0.8\linewidth]{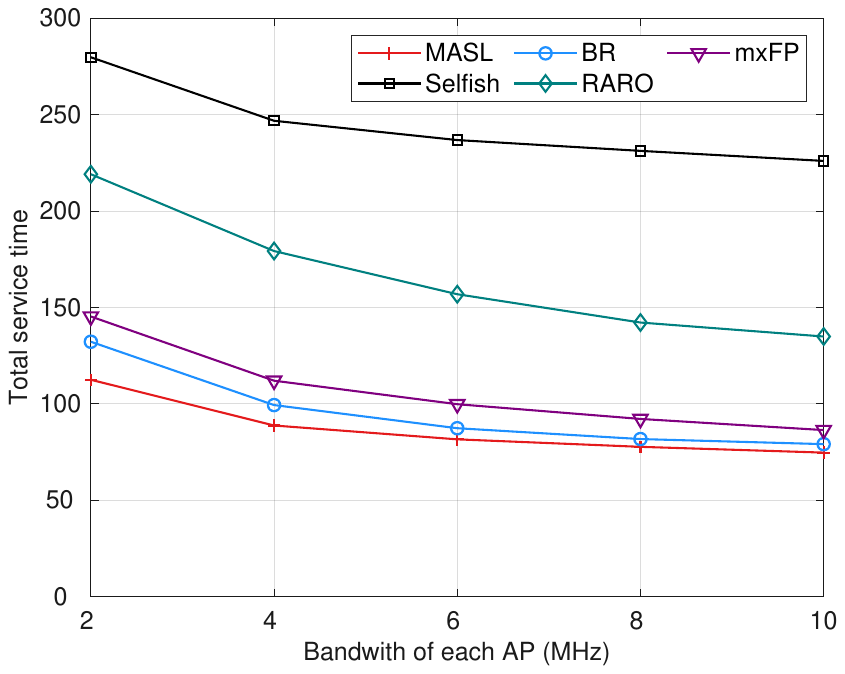}
	\vspace{-3mm}
	\caption{Total service time vs. wireless bandwidth of APs.}
	\label{fig:BW_of_AP}
	\vspace{-3mm}
\end{figure}

\begin{figure}[t]
	\centering
	\includegraphics[width=0.8\linewidth]{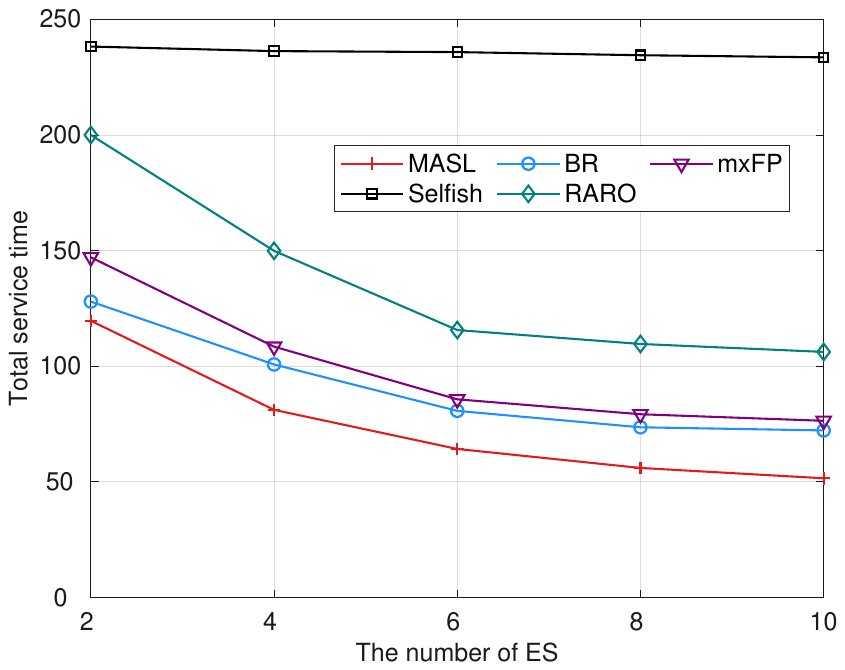}
	\vspace{-3mm}
	\caption{Total service time vs. the number of ES.}
	\label{fig:Num_ES}
	\vspace{-3mm}
\end{figure}

\begin{figure}[t]
	\centering
	\includegraphics[width=0.8\linewidth]{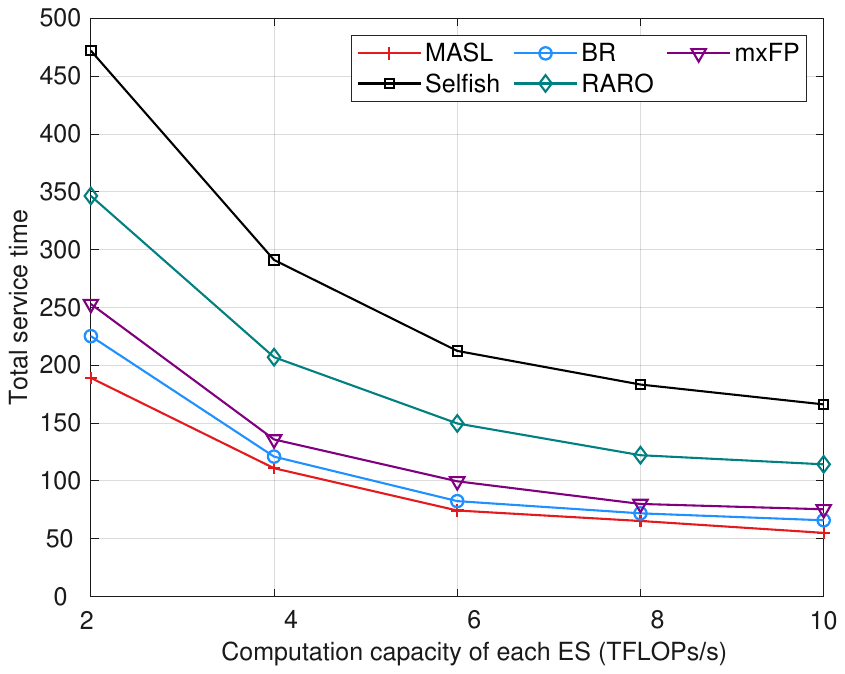}
	\vspace{-3mm}
	\caption{Total service time vs. computation capacity of ES.}
	\label{fig:comp_cap_of_ES}
	\vspace{-3mm}
\end{figure}

\begin{figure}[t]
	\centering
	\includegraphics[width=0.8\linewidth]{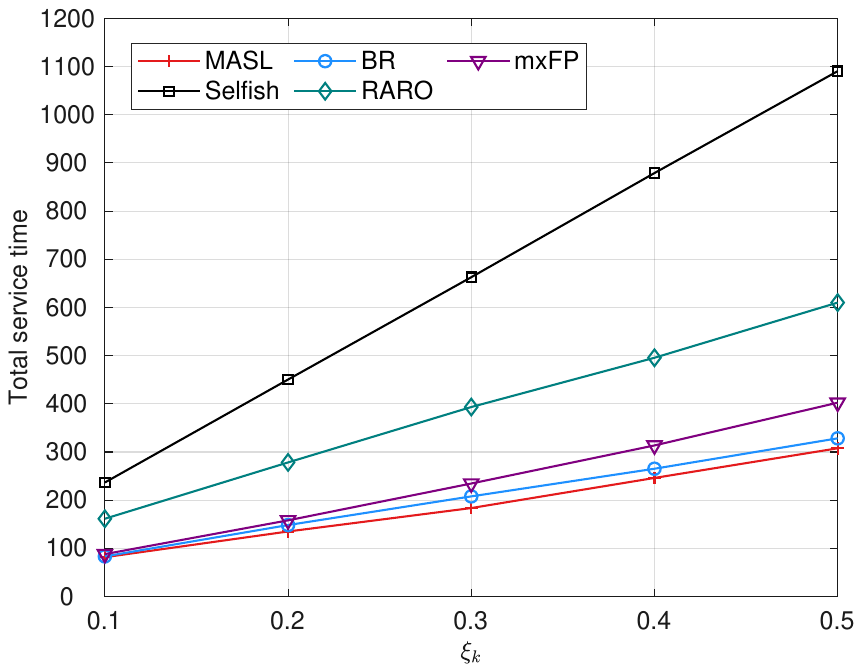}
	\vspace{-3mm}
	\caption{Total service time vs. TFLOPs per inference step}
	\label{fig:TFLOPs_each_IFstep}
	\vspace{-3mm}
\end{figure}

\begin{figure}[t]
	\centering
	\includegraphics[width=0.8\linewidth]{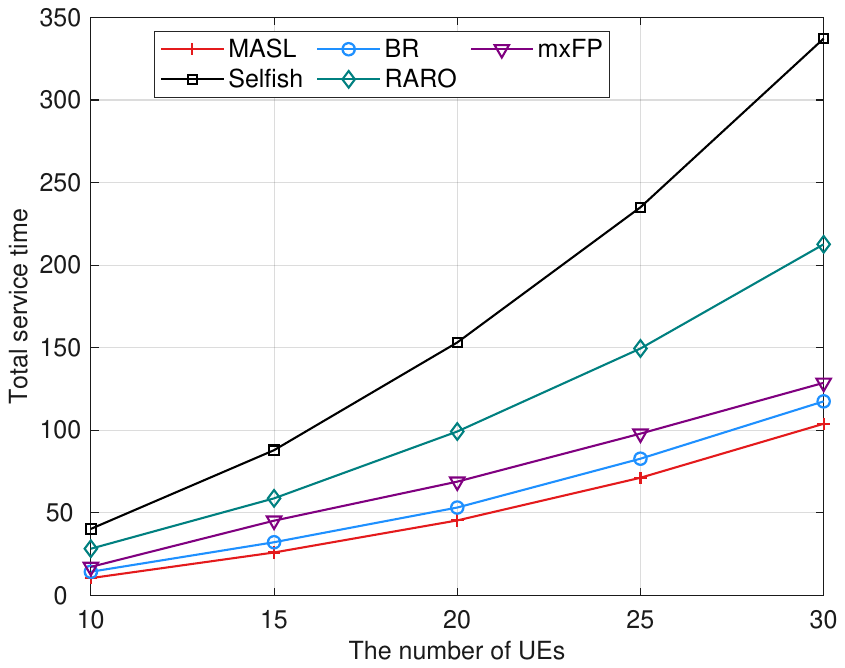}
	\vspace{-3mm}
	\caption{Total service time vs. the number of UEs.}
	\label{fig:Num_UE}
	\vspace{-3mm}
\end{figure}

\subsection{Benchmark Solutions}
To highlight the superiority of the MASL algorithms, we compare them against several state-of-the-art solutions. This comparison allows us to assess the relative performance of the proposed approach across different strategies and environments.
The following benchmark algorithms are used for comparison:
\begin{itemize}
	\item \textit{Best Response (BR)} \cite{game_theory}: 
	In the BR algorithm, each active UE adjusts its strategy to optimize its payoff, considering the strategies of other UEs. The BR algorithm iterates this process, with UEs updating their strategies until no UE can improve its payoff by unilaterally changing its strategy. A key assumption in the BR algorithm is that each UE has complete information about the strategies of the other UEs.	

	\item \textit{Fictitious Play with Mixed Strategy Estimation (mxFP)} \cite{fictitious_play}:
	Each UE updates its strategy by responding to the estimated mixed strategies of the other UEs, assuming these strategies are uniformly distributed across the available actions.

	\item \textit{Selfish UEs (Selfish)} \cite{greedy1}: In this approach, each UE independently selects the nearest AP for data transmission and chooses the ES with the largest computation capacity for task processing, ignoring the strategies of other UEs.
	
	\item \textit{Random UE-AP Association and Random computation Offloading (RARO)}: In this algorithm, UEs make random associations with APs and randomly select an ES, without considering any system information.
\end{itemize}

The primary distinction among the four benchmark algorithms lies in their information utilization and the type of strategic decision-making required. This difference underpins the varying performance of these solutions.
In the MASL algorithms, each UE updates its strategy based on local feedback (i.e., the reward signal) alone. In contrast, the BR algorithm requires UEs to know the number of active UEs and the strategies chosen by those UEs. The mxFP algorithm only requires knowledge of the total number of UEs in the system. The RARO algorithm makes decisions without any system information, while the Selfish algorithm relies on the nearest AP and the ES with the largest computational capacity.

Since the strategies of UEs are significantly influenced by APs, ES, and the stochastic activeness of the UEs, we conduct three sets of numerical experiments to examine how the total service time of UEs varies with each of these factors.

\subsection{Convergence}
We first investigate the dynamic evolution of service completion times for each UE over multiple iterations. With a total of 30 UEs, we observe that the service completion time initially decreases and then gradually stabilizes, as shown in Fig.~\ref{fig:service_time_of_UE}. This behavior indicates that the MASL algorithms converge to a stable point, suggesting the achievement of the NE for both 
$\mathcal{G}^{\mathrm{S1}}$ and $\mathcal{G}^{\mathrm{S2}}$. 
Since $\mathcal{G}^{\mathrm{S0}}$ is decomposed into these two subgames, the NE of $\mathcal{G}^{\mathrm{S0}}$ is also attained.

\textcolor{black}{Next, we investigate the impact of the learning rates $\alpha$ and $\beta$ in Alg.~\ref{alg:MASL_for_UA} and Alg.~\ref{alg:MASL_for_CO}, respectively. 
	We vary $\alpha$ and $\beta$ over the values $\{0.05, 0.1, 0.2, 0.5, 0.8\}$. 
	Fig.~\ref{fig:learning_rate_impact} shows that, for all tested learning-rate settings, the total service time of all UEs first decreases and then gradually stabilizes, which confirms the convergence behavior of the MASL algorithms. 
	Moreover, Fig.\ref{fig:learning_rate_impact} reveals a clear tradeoff between convergence speed and final performance. 
	Specifically, when the learning rates are relatively large (e.g., $\alpha=\beta=0.8$), 
	accelerate initial convergence, leading to a suboptimal final service time. Conversely, moderate learning rates (e.g., $0.1$ or $0.2$) yield superior final performance. However, excessively small rates (e.g., $0.05$) result in overly conservative updates, which may hinder the algorithm's ability to reach the optimal solution within a finite number of iterations.
	Based on these results, the learning rates can be adjusted according to different practical requirements.}


\subsection{Performance Comparison}
In this section, we present a detailed performance comparison of the MASL algorithm with the benchmark solutions. This comparison is conducted under different system configurations, such as varying the number of APs, bandwidth, ES, and other factors.
\subsubsection{Performance vs Number of APs} 
Fig.~\ref{fig:Num_AP} shows the comparative performance of the baseline solutions as the number of APs increases. As the number of APs grows, the total service time for MASL, BR, mxFP, Selfish, and RARO decreases, due to reduced communication resource competition and improved channel load density. Our MASL algorithm consistently outperforms the other solutions, achieving the lowest total service time in all cases. Specifically, compared with the BR, mxFP, Selfish, and the RARO solution, the MASL solution can reduce the total service time by up to $6.23\%$, $11.06\%$, $67.13\%$, and $44.68\%$ respectively.

\subsubsection{Performance vs APs Bandwidth}
Fig.~\ref{fig:BW_of_AP} illustrates the impact of AP bandwidth on the total service time. The results show that the total service time decreases as the bandwidth of APs increases. A larger bandwidth reduces UE data transmission time, leading to a shorter total service time. 
Compared with the baselines BR, mxFP, Selfish, and RARO, 
MASL can reduce the total service time by up to $5.57\%$, $13.53\%$, $66.96\%$, and $44.65\%$, respectively.

\subsubsection{Performance vs Number of ES}
Fig.~\ref{fig:Num_ES} presents the performance of the baseline solutions as the number of ESs increases. As expected, a higher number of ESs reduces the total service completion time for all solutions by enabling more efficient load distribution. The MASL algorithm achieves the best performance, reducing total service time by up to $28.63\%$, $32.52\%$, $77.92\%$, and $51.44\%$ respectively.

\subsubsection{Performance vs ES Computation Capacity}
Fig.~\ref{fig:comp_cap_of_ES} demonstrates the total service time for different ES computation capacities. Increasing the computation capacity of ES reduces the total service time for all solutions, as higher capacity speeds up computation processing. 
Compared with the BR, mxFP, Selfish, and the RARO solution, the MASL solution can reduce the total service time by up to $16.50\%$, $27.05\%$, $66.87\%$, and $51.85\%$ respectively.

\subsubsection{Performance vs Computation Intensity of Inference Step}
Fig.~\ref{fig:TFLOPs_each_IFstep} shows how total service time varies with the computation intensity of one inference step. As the computational load per inference step increases, the total service time also increases for all solutions. 
MASL consistently achieves the lowest total service time, reducing it by up to $6.29\%$, $23.52\%$, $71.78\%$, and $49.56\%$ respectively.

\subsubsection{Performance vs Number of UEs}
Fig.~\ref{fig:Num_UE} illustrates the total service completion time under varying numbers of UEs. 
The total service completion time increases with the number of UEs, as more UEs lead to heightened resource contention and higher per-task latency under fixed computational capacity. 
Compared with BR, mxFP, Selfish, and RARO, the MASL algorithm reduces the total service completion time by up to $11.54\%$, $19.24\%$, $69.16\%$, and $51.09\%$, respectively.

\begin{figure}[t]
	\centering
	\includegraphics[width=0.83\linewidth]{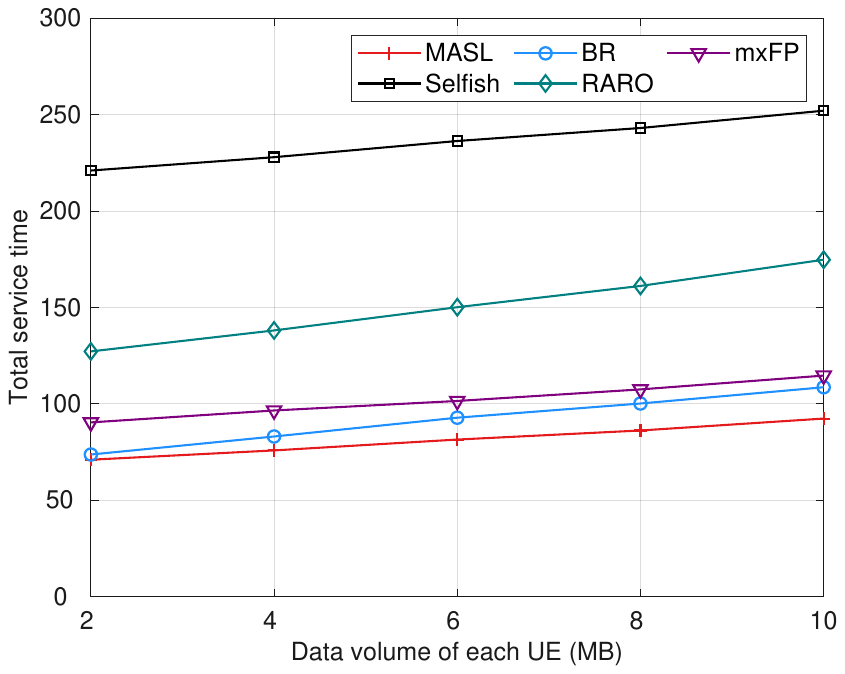}
	\vspace{-3mm}
	\caption{Total service time vs. transmission data volume of UEs.}
	\label{fig:Dn_of_UE}
	\vspace{-3mm}
\end{figure}

\subsubsection{Performance vs UE Transmission Data}
Fig.~\ref{fig:Dn_of_UE} depicts the total service time under varying UE transmission data volumes. 
As the data volume per UE increases, the total service time grows for all five schemes, primarily due to longer transmission durations. 
Compared with BR, mxFP, Selfish, and RARO, the MASL algorithm reduces the total service time by up to $14.96\%$, $19.45\%$, $63.37\%$, and $47.16\%$, respectively.

\subsubsection{The Impacts of Information and Strategies}
Across the scenarios in Fig.~\ref{fig:Num_AP}-\ref{fig:Dn_of_UE}, the proposed MASL framework consistently achieves the lowest total service time. 
This stems from its joint optimization of transmission and computation delays while respecting inference accuracy constraints.

In contrast, the Selfish scheme---where each UE independently selects the nearest AP and the most powerful ES, leads to severe congestion and load imbalance, degrading system-wide performance. 
RARO mitigates this by distributing load more evenly across resources, achieving better statistical balance despite lacking strategic coordination.

BR outperforms mxFP because it directly optimizes strategies using real-time knowledge of others’ actions, whereas mxFP relies on estimated mixed strategies, resulting in slower adaptation. 
However, BR  is often unrealistic in dynamic environments and can induce oscillations due to its deterministic updates and lack of exploration.

By contrast, MASL operates solely on local environmental feedback, using probabilistic strategy updates that enable persistent exploration and ensure robust convergence in stochastic settings. 
Remarkably, even without global information, MASL outperforms BR---demonstrating that the latter’s access to complete information not only fails to guarantee superiority but may exacerbate instability through myopic best-response dynamics. 
Thus, MASL achieves both lower latency and greater convergence stability across   benchmark solutions.

\section{Conclusion}\label{section:Conclusion}
In this work, we investigated the joint communication association and computation offloading problem in MEC-enabled AIGC networks, where GDMs were deployed at ES to satisfy the stringent latency and accuracy requirements of heterogeneous mobile UEs. 
We formulated the UE decision-making process as a Joint Communication Association and Computation Offloading (JCACO) game, in which each UE independently selects its AP, ES, and number of inference steps to minimize service completion time while satisfying accuracy constraints.
We proved that the JCACO game constituted a potential game under both complete and stochastic information scenarios, which guaranteed the existence of an NE.
We developed a distributed Multi-Agent Stochastic Learning (MASL) algorithm and proved its convergence to the NE with strict performance guarantees. 
Extensive simulations demonstrated that MASL significantly outperformed existing benchmark methods, reducing total service completion time by up to $77.92\%$ while satisfying stringent inference accuracy requirements. These results verified the effectiveness, adaptability, and practicality of the proposed approach in real-world MEC-supported AIGC systems.

\begin{IEEEbiography}[{\includegraphics[width=1in,height=1.25in,clip,keepaspectratio]{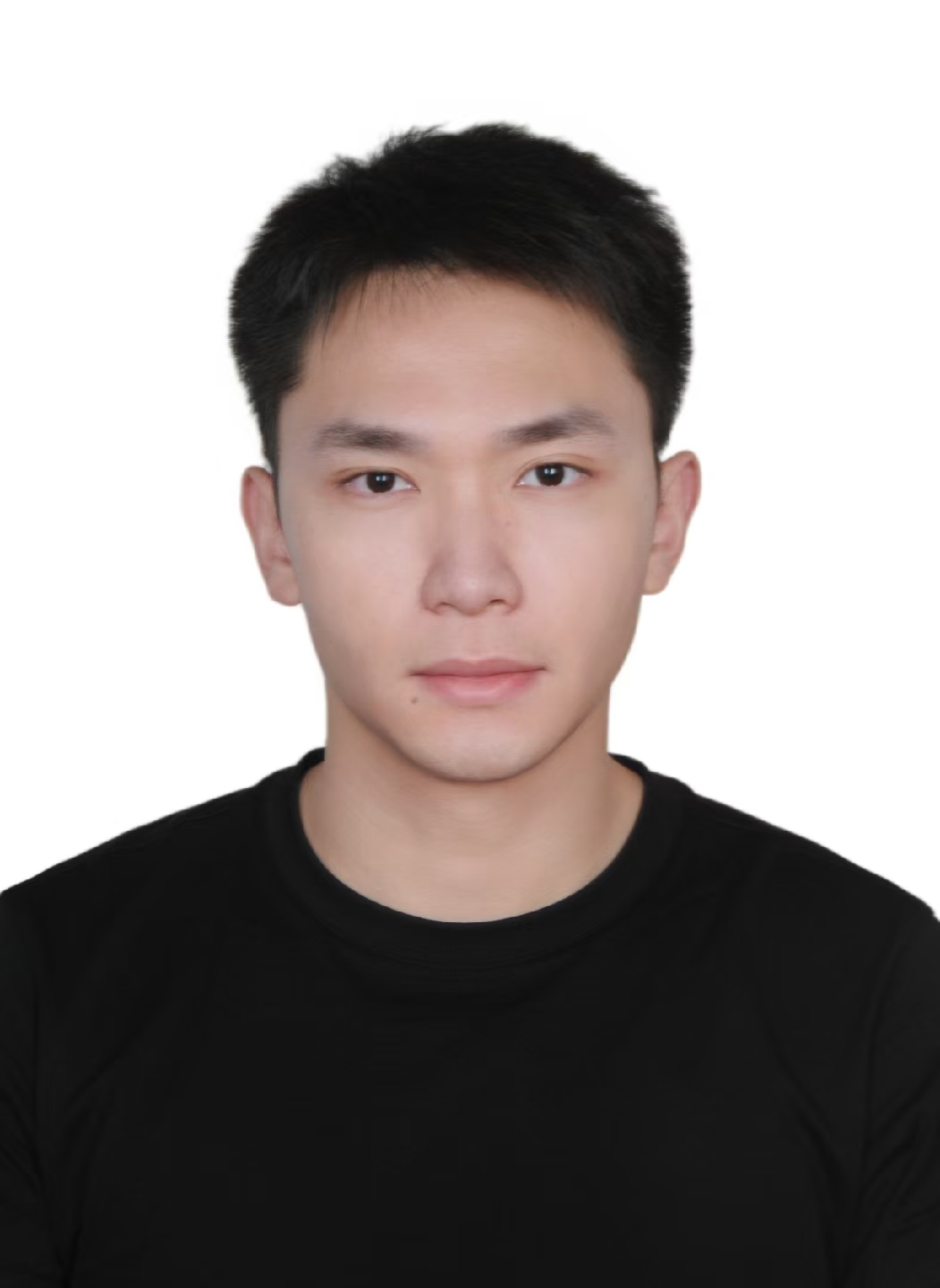}}]{Huaizhe Liu} (Graduate Student Member, IEEE) is currently pursuing the Ph.D. degree with the School of Information Science and Technology, Harbin Institute of Technology, Shenzhen, China. 
He received the M.S. degree in Information and Communication Engineering from Harbin Institute of Technology, Shenzhen, China, in 2020. 
His main research interests are in the interdisciplinary area between stochastic network optimization, algorithmic game theory, and artificial intelligence.
\end{IEEEbiography}

\begin{IEEEbiography}[{\includegraphics[width=1in,height=1.25in,clip,keepaspectratio]{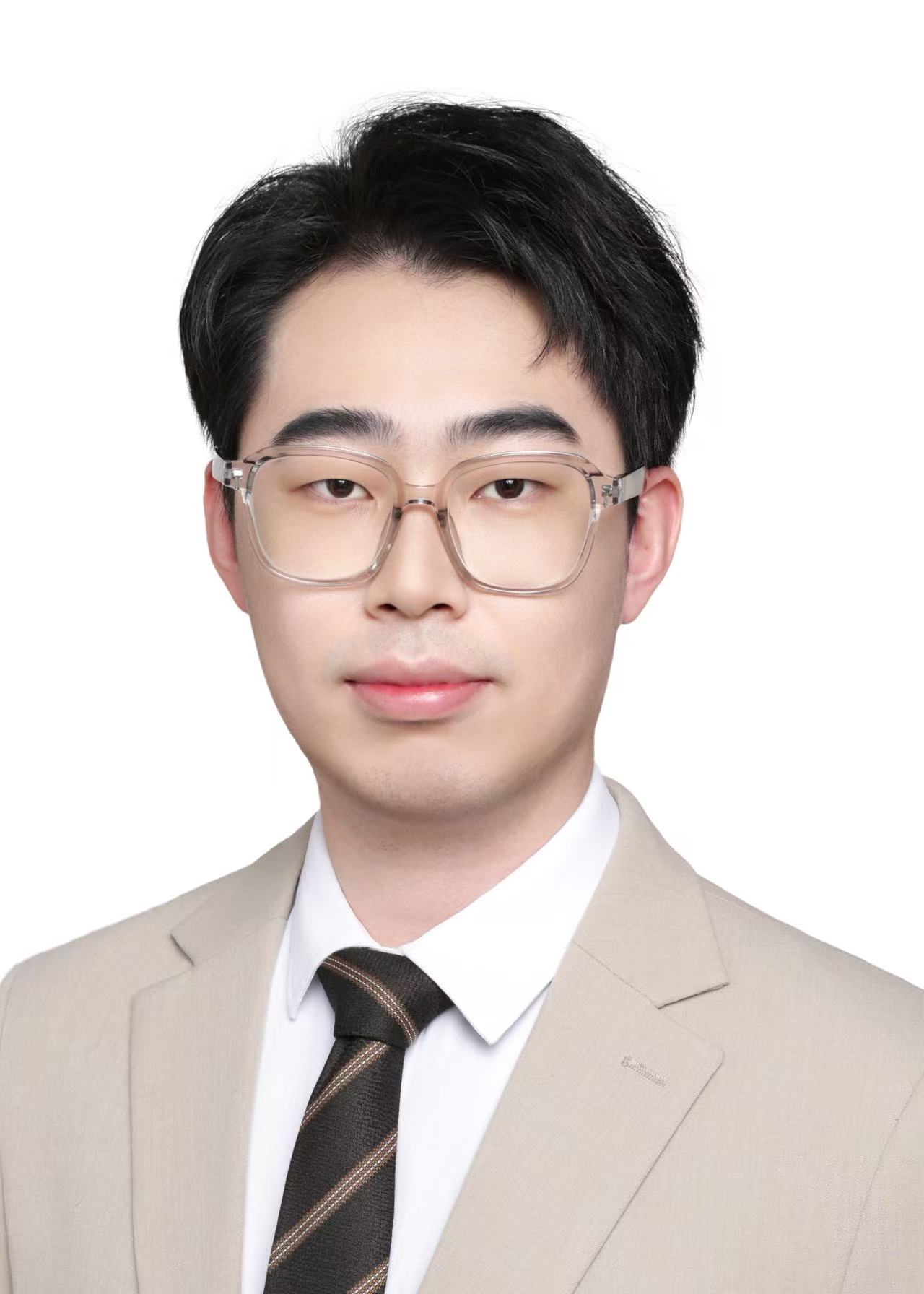}}]{Xinyi Zhuang} (Graduate Student Member, IEEE) is pursuing his Ph.D. degree with the School of Information Science and Technology, Harbin Institute of Technology, Shenzhen, China. He received his B.Eng. degree in Communication Engineering from Northwestern Polytechnical University in 2023. His main research interests include distributed training and inference of large AI models, mobile edge computing, network optimization, and multi-agent reinforcement learning.
\end{IEEEbiography}

\begin{IEEEbiography}[{\includegraphics[width=1in,height=1.25in,clip,keepaspectratio]{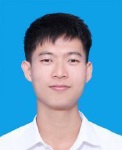}}]{Jiaqi Wu} (Member, IEEE) 
is currently a Lecturer with the School of Computer Science, Guangdong University of Finance, Guangzhou, China. 
He received the Ph.D. degree in Information and Communication Engineering from Harbin Institute of Technology, Shenzhen, China, in 2025.  
His main research interests include network optimization, deep reinforcement learning, edge intelligence, generative AI, and large language model. 
\end{IEEEbiography}

\begin{IEEEbiography}[{\includegraphics[width=1in,clip,keepaspectratio]{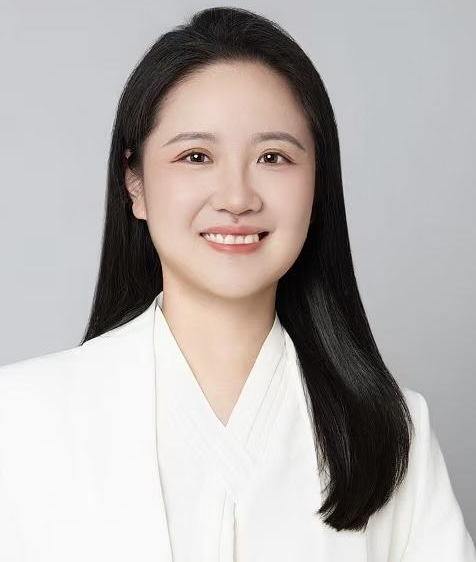}}]{Yuan Luo} (Senior Member, IEEE) received the Ph.D. degree from The Chinese University of Hong Kong, Shenzhen, China, in 2015. She was a Visiting Scholar with the University of California, Berkeley, Berkeley, CA, USA, from 2014 to 2015, a Post-Doctoral Researcher with The Chinese University of Hong Kong, from 2015 to 2017, and a Research Fellow with Imperial College London, London, U.K., from 2017 to 2021. She is currently an Assistant Professor and a Presidential Young Fellow with the School of Science and Engineering, The Chinese University of Hong Kong, Shenzhen, China. Her research interests include artificial intelligence, mechanism design, and low-latency transmission. She was also a recipient of the Best Paper Award at the IEEE WiOpt 2014.
\end{IEEEbiography}

\begin{IEEEbiography}[{\includegraphics[width=1in,clip,keepaspectratio]{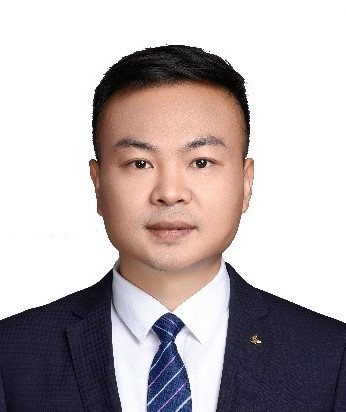}}]{Bin Cao}
(Member, IEEE)  is currently an Associate Professor with the School of Information Science and Technology, Harbin Institute of Technology, Shenzhen, China. 
He received the Ph.D. degree in Information and Communication Engineering from Harbin Institute of Technology, Shenzhen, China, in 2013.
From 2010 to 2012, he was a Visiting Scholar with the University of Waterloo, Canada. 
His research interests include signal processing for wireless communications, cognitive radio networking, and resource allocation for wireless networks.
\end{IEEEbiography}

\begin{IEEEbiography}[{\includegraphics[width=1in,height=1.25in,clip,keepaspectratio]{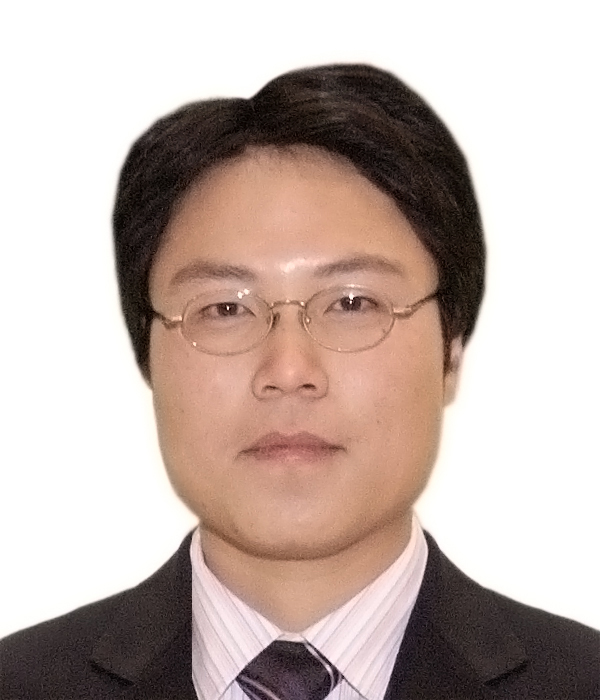}}]{Lin Gao} (Senior Member, IEEE) is a Professor at the School of Information Science and Technology, Harbin Institute of Technology, Shenzhen, China. He received the Ph.D. degree in Electronic Engineering from Shanghai Jiao Tong University, Shanghai, China, in 2010. His main research interests are in the interdisciplinary area between optimization, game theory, and artificial intelligence, with particular focuses on reinforcement learning, federated learning, crowd/edge intelligence, mobile edge computing,  and cognitive networking. He is the co-recipient of 5 Best Paper Awards from leading conference proceedings on wireless communications and networking. He received the IEEE ComSoc Asia-Pacific Outstanding Young Researcher Award in 2016.
\end{IEEEbiography}

\newpage

\appendix

\section{Appendix}
\subsection{Proof for Theorem 1}
\textbf{Theorem 1.} 
	The game $\G^{\mathrm{C1}}$ is a potential game with the following potential function:
	\begin{equation} \small \label{potential_func_AP}
		\begin{aligned}
			\Phi(\mathbf{X};\boldsymbol{\omega}) =\sum_{m=1}^{M} \varphi^{L_{m}(\mathbf{X};\boldsymbol{\omega})},
		\end{aligned}
	\end{equation}
	where $\varphi$ is any value larger than a threshold $\varphi_0 =\sqrt[\epsilon]{2}$, where $\epsilon$ is the minimum time scale of communication system.

\begin{IEEEproof}
	Suppose that an active UE $i\in \N_{A}$ updates its decision $\boldsymbol{x}_{i}$ to $\boldsymbol{x}'_{i}$ (i.e., $x_{ij}=1 \rightarrow x_{ij'}=1$), that is, UE $i$ switches from an AP $j$ to a new one $j'$.
	Let $\mathbf{X}$ and $\mathbf{X'}=(\boldsymbol{x}_{i'},\mathbf{X}_{-i})$ denote the strategy profiles before and after the strategy update of UE $i$, respectively.
	It is easy to see that the load of AP $j$ changes from $L_{j}(\mathbf{X};\boldsymbol{\omega})$ to $L_{j}(\mathbf{X'};\boldsymbol{\omega})=L_{j}(\mathbf{X};\boldsymbol{\omega})-T_{ij}^{\mathrm{Acc}}$ due to the leaving of UE $i$, while the load of AP $j'$ changes from $L_{j'}(\mathbf{X};\boldsymbol{\omega })$ to $L_{j'}(\mathbf{X'};\boldsymbol{\omega})=L_{j'}(\mathbf{X};\boldsymbol{\omega})+T_{ij}^{\mathrm{Acc}}$ due to the  joining of UE $i$.
	
	According to (\ref{transmission_delay}), (\ref{load_AP}), and (\ref{acc_delay}), the change of UE $i$'s utility can be derived as:
	\begin{equation} \small \label{change_utility}
		\begin{aligned}
			\Delta T_{i}^{\mathrm{Acc}}(\mathbf{X} \rightarrow \mathbf{X'};\boldsymbol{\omega}) = L_{j'}(\mathbf{X};\boldsymbol{\omega})+T_{ij'}^{\mathrm{Acc}}-L_{j}(\mathbf{X};\boldsymbol{\omega}).
		\end{aligned}
	\end{equation}
	The change of potential function $\Phi(\mathbf{X};\boldsymbol{\omega})$ can be derived as:
	\begin{equation} \small \label{change_phi_comm}
		\begin{aligned}
			&\Delta\Phi(\mathbf{X \rightarrow X'};\boldsymbol{\omega})  \\ &=\varphi^{L_{j}(\mathbf{X;\boldsymbol{\omega}})-T_{ij}^{\mathrm{Acc}}} + \varphi^{L_{j'}(\mathbf{X};\boldsymbol{\omega})+T_{ij'}^{\mathrm{Acc}}} - \varphi^{L_{j}(\mathbf{X};\boldsymbol{\omega})} - \varphi^{L_{j'}(\mathbf{X};\boldsymbol{\omega})}.
		\end{aligned}
	\end{equation}
	Suppose that $\Delta T_{i}^{\mathrm{Acc}}(\mathbf{X} \rightarrow \mathbf{X'};\boldsymbol{\omega})<0$, that is, 
	\begin{equation} \small 
		\begin{aligned}
			L_{j'}(\mathbf{X};\boldsymbol{\omega})+T_{ij'}^{\mathrm{Acc}} < L_{j}(\mathbf{X};\boldsymbol{\omega}).
		\end{aligned}
	\end{equation}
	Let $\epsilon$ denote the minimum time scale (for resource scheduling) of wireless communication system, e.g., 1 millisecond for 5G NR system, such that: 
	\begin{subequations}
		\begin{numcases}{}
			L_{j'}(\mathbf{X};\boldsymbol{\omega})+T_{ij'}^{\mathrm{Acc}} + \epsilon \leq L_{j}(\mathbf{X};\boldsymbol{\omega}), &\\
			L_{j}(\mathbf{X};\boldsymbol{\omega}) - T_{ij}^{\mathrm{Acc}} + \epsilon \leq L_{j}(\mathbf{X};\boldsymbol{\omega}). &
		\end{numcases}
	\end{subequations}
	
	This implies that: 	
	\begin{subequations} \label{phi_comm_change}
		\begin{numcases}{}
			\varphi^{L_{j'}(\mathbf{X};\boldsymbol{\omega}) + T_{ij'}^{\mathrm{Acc}}} \leq \varphi^{L_{j}(\mathbf{X};\boldsymbol{\omega}) - \epsilon}, &\\
			\varphi^{L_{j}(\mathbf{X};\boldsymbol{\omega}) - T_{ij}^{\mathrm{Acc}}} \leq \varphi^{L_{j}(\mathbf{X};\boldsymbol{\omega}) - \epsilon}. &
		\end{numcases}
	\end{subequations}
	
	According to (\ref{change_phi_comm}) and (\ref{phi_comm_change}), we further have:
	\begin{equation} \small 
		\begin{aligned}
			\Delta\Phi(\mathbf{X} \rightarrow \mathbf{X'};\boldsymbol{\omega})&\leq 2\cdot\varphi^{L_{j}(\mathbf{X};\boldsymbol{\omega}) - \epsilon} - \varphi^{L_{j}(\mathbf{X};\boldsymbol{\omega})} - \varphi^{L_{j'}(\mathbf{X};\boldsymbol{\omega}) }, \\
			&= \varphi^{L_{j}(\mathbf{X};\boldsymbol{\omega})} \cdot (2 \cdot \varphi^{-\epsilon}-1) - \varphi^{L_{j'}(\mathbf{X};\boldsymbol{\omega})}.
		\end{aligned}
	\end{equation}
	
	Obviously, when $2 \cdot \varphi^{-\epsilon}-1 \leq 0$, we have $\Delta\Phi(\mathbf{X} \rightarrow \mathbf{X'};\boldsymbol{\omega})<0$.
	That is, when $\varphi\ge \sqrt[\epsilon]{2}$, we have:
	$\mathrm{sgn}(\Delta T_{i}^{\mathrm{Acc}}(\mathbf{X} \rightarrow \mathbf{X'};\boldsymbol{\omega})) = \mathrm{sgn}(\Delta\Phi(\mathbf{X} \rightarrow \mathbf{X'};\boldsymbol{\omega}))$, which implies that the communication association game is a potential game with potential function $\Phi(\mathbf{X};\boldsymbol{\omega}) =\sum_{m=1}^{M}\varphi^{L_{m}(\mathbf{X};\boldsymbol{\omega})}$, for any $\varphi\ge \sqrt[\epsilon]{2}$. \par
	\end{IEEEproof}
		
\subsection{Proof for Theorem 2}
\textbf{Theorem 2.}
	The game $\G^{\mathrm{C2}}$ is a potential game with the following potential function:
	\begin{equation} \small \label{potential_func_offload}
		\begin{aligned}
			\Psi(\mathbf{Y,D};\boldsymbol{\omega}) =\sum_{k=1}^{K}\phi^{I_{k}(\mathbf{Y,D};\boldsymbol{\omega})}, 			
		\end{aligned}
	\end{equation}
	where $\phi$ is any value larger than a threshold $\phi_0=\sqrt[l]{2}$, where $l$ is the minimum time scale of computation system.

\begin{IEEEproof}
	We proof this theorem through three cases:
	\begin{enumerate}
		\item Case1: Active UE $i \in \N_A$ changes its computation offloading strategy while keeps the inference steps unchanged. 
		\item Case2: Active UE $i \in \N_A$ change its inference steps while keeps the computation offloading strategy unchanged. 
		\item Case3: Active UE $i \in \N_A$ changes both the computation offloading strategy and the inference steps.
	\end{enumerate}	
	
	\emph{Case1}: 
	\\
	Suppose that an active UE $i \in \N_{A}$ updates its decision $\boldsymbol{y}_{i}$ to $\boldsymbol{y}'_{i}$ (i.e., $y_{ik}=1\rightarrow y_{ik'}=1$), that is, UE $i$ switches from an ES $k$ to a new one $k'$.
	Let $\mathbf{Y}$ and $\mathbf{Y}'=(\boldsymbol{y}_{i'},\mathbf{Y}_{-i})$ denote the strategy profiles before and after the strategy update of UE $i$, respectively.
	It is easy to see that the load of ES $k$ changes from $I_{k}(\mathbf{Y,D};\boldsymbol{\omega})$ to $I_{k}(\mathbf{Y,D}';\boldsymbol{\omega})-T_{ik'}^{\mathrm{Comp}}$ due to the leaving of UE $i$, while the load of ES $k'$ changes from $I_{k'}(\mathbf{Y,D};\boldsymbol{\omega})$ to $I_{k'}(\mathbf{Y,D};\boldsymbol{\omega}) + T_{ik'}^{\mathrm{Comp}}$ due to the joining of UE $i$.
	
	According to (\ref{compute_delay}), (\ref{load_ES}), and (\ref{comp_delay}), the change of UE $i$'s utility can de derived as:
	\begin{equation} \small \label{change_H}
		\begin{aligned}
			\Delta T_{i}^{\mathrm{Comp}}(\mathbf{Y\rightarrow Y',D;\boldsymbol{\omega}}) 
			=I_{k'}(\mathbf{Y,D};\boldsymbol{\omega}) + T_{ik'}^{\mathrm{Comp}} - I_{k}(\mathbf{Y,D};\boldsymbol{\omega}).
		\end{aligned}
	\end{equation}	
	The change of potential function $\Psi(\mathbf{Y,D};\boldsymbol{\omega})$ can be drived as:
	\begin{equation} \small \label{change_Psi}
		\begin{aligned}
			&\Delta \Psi(\mathbf{Y\rightarrow Y',D;\boldsymbol{\omega}}) \\
			&=  \phi^{ I_{k'}(\mathbf{Y,D}; \boldsymbol{\omega}) + T_{ik'}^{\mathrm{Comp}} } + \phi^{I_{k}(\mathbf{Y,D};\boldsymbol{\omega}) - T_{ik'}^{\mathrm{Comp}} } \\ &-\phi^{I_{k}(\mathbf{Y,D};\boldsymbol{\omega})} -\phi^{I_{k'}(\mathbf{Y,D};\boldsymbol{\omega})}.
		\end{aligned}
	\end{equation}
	Suppose that $\Delta T_{i}^{\mathrm{Comp}}(\mathbf{Y \rightarrow Y',D;\boldsymbol{\omega}}) <0$. That is,
	\begin{equation} \small 
		\begin{aligned}
			I_{k'}(\mathbf{Y,D};\boldsymbol{\omega}) + T_{ik'}^{\mathrm{Comp}} < I_{k}(\mathbf{Y,D};\boldsymbol{\omega}),
		\end{aligned}
	\end{equation}
	Let $l$ denote the minimum time scale (for resource scheduling) of computation system, e.g., 1 millisecond, such that:	
	\begin{subequations} \small
		\begin{numcases}{}
			I_{k'}(\mathbf{Y,D};\boldsymbol{\omega}) + T_{ik'}^{\mathrm{Comp}} + l\leq I_{k}(\mathbf{Y,D};\boldsymbol{\omega}), &\\
			I_{k}(\mathbf{Y,D};\boldsymbol{\omega})-T_{ik}^{\mathrm{Comp}} + l \leq I_{k}(\mathbf{Y,D};\boldsymbol{\omega}). &
		\end{numcases} 
	\end{subequations}	
	
	This implies that:
	\begin{subequations} \small \label{phi_comp_exp_change}
		\begin{numcases}{}
			\phi^{I_{k'}(\mathbf{Y,D};\boldsymbol{\omega}) + T_{ik'}^{\mathrm{Comp}} } \leq \phi^{I_{k}(\mathbf{Y,D};\boldsymbol{\omega}) - l}, &\\
			\phi^{ I_{k}(\mathbf{Y,D};\boldsymbol{\omega}) - T_{ik}^{\mathrm{Comp}} } \leq \phi^{ I_{k}(\mathbf{Y,D};\boldsymbol{\omega}) - l}. &
		\end{numcases}
	\end{subequations}	
	According to \eqref{change_Psi} and \eqref{phi_comp_exp_change}, we further have:
	\begin{equation} \small 
		\begin{aligned}
			&\Delta \Psi(\mathbf{Y\rightarrow Y',D;\boldsymbol{\omega}}) \\ 
			&\leq 2\cdot \phi^{I_{k}(\mathbf{Y,D};\boldsymbol{\omega}) - l} - 
			\phi^{I_{k}(\mathbf{Y,D};\boldsymbol{\omega})} -\phi^{I_{k'}(\mathbf{Y,D};\boldsymbol{\omega})}				\\	
			&= \phi^{I_k(\mathbf{Y,D};\boldsymbol{\omega})} \cdot (2\cdot\phi^{-l}-1) - \phi^{I_{k'}(\mathbf{Y,D};\boldsymbol{\omega})}.
		\end{aligned}
	\end{equation}
	
	Obviously,  when $2 \cdot\phi^{-l} -1 \leq 0$, we have $ \Psi(\mathbf{Y\rightarrow Y',D;\boldsymbol{\omega}}) <0$. That is, when $\phi\ge \sqrt[l]{2}$, we have: $\mathrm{sgn}(\Delta T_{i}^{\mathrm{Comp}}(\mathbf{Y\rightarrow Y',D;\boldsymbol{\omega}})) = \mathrm{sgn}(\Delta\Psi(\mathbf{Y\rightarrow Y',D;\boldsymbol{\omega}}))$, which implies that the computation offloading game is a potential game with potential function $\Psi(\mathbf{Y,D};\boldsymbol{\omega})=\sum_{k=1}^{K}\phi^{I_{k}(\mathbf{Y,D};\boldsymbol{\omega})}$, for any $\phi\ge \sqrt[l]{2}$. 
	
	\emph{Case2}:
	\\ 
	Suppose that an active UE $i \in \N_{A}$ updates its decision $\boldsymbol{d}_{i}$ to $\boldsymbol{d}'_{i}$.
	Let $\mathbf{D}$ and $\mathbf{D'}=(\boldsymbol{d}_{i'},\mathbf{D}_{-i})$ denote the strategy profiles before and after the strategy update of UE $i$, respectively. According to (\ref{compute_delay}), (\ref{load_ES}), and (\ref{comp_delay}), the change of UE $i$'s utility can de derived as:
	\begin{equation} \small 
		\begin{aligned}
			\Delta T_{i}^{\mathrm{Comp}}(\mathbf{Y,D\rightarrow D'};\boldsymbol{\omega}) = I_{k}(\mathbf{Y,D'};\boldsymbol{\omega}) - I_{k}(\mathbf{Y,D};\boldsymbol{\omega}),
		\end{aligned}
	\end{equation}
	The change of potential function $\Psi(\mathbf{Y,D};\boldsymbol{\omega})$ can be drived as:
	\begin{equation} \small 
		\begin{aligned}
			\Delta \Psi(\mathbf{Y,D\rightarrow D'};\boldsymbol{\omega}) = \phi^{I_{k}(\mathbf{Y,D'};\boldsymbol{\omega})} - \phi^{I_{k}(\mathbf{Y,D};\boldsymbol{\omega})}.
		\end{aligned}
	\end{equation}	
	
	Obviously, we have $\mathrm{sgn}(\Delta T_{i}^{\mathrm{Comp}}(\mathbf{Y,D\rightarrow D';\boldsymbol{\omega}})) = \mathrm{sgn}(\Delta\Psi(\mathbf{Y,D\rightarrow D';\boldsymbol{\omega}}))$ and hence conclude that the computation offloading game is a potential game in Case2.\par
	For Case3, by the similar argument in \emph{Case 1} and \emph{Case 2}, we can also demonstrate that:
	\begin{equation} \small 
		\begin{aligned}
			&\mathrm{sgn}(\Delta T_{i}^{\mathrm{Comp}}(\mathbf{Y\rightarrow Y',D\rightarrow D';\boldsymbol{\omega}})) \\ &=\mathrm{sgn}(\Delta\Psi(\mathbf{Y\rightarrow Y',D\rightarrow D';\boldsymbol{\omega}})) 
		\end{aligned}
	\end{equation}
	Combining the results from the three cases above, we can conclude that the computation offloading game $\G^{\mathrm{C2}}$ is a potential game 
	with potential function $\Psi(\mathbf{Y,D};\boldsymbol{\omega}) =\sum_{k=1}^{K}\phi^{I_{k}(\mathbf{Y,D};\boldsymbol{\omega})}$ for any $\phi \ge \sqrt[l]{2}$.	
\end{IEEEproof}

\subsection{Proof for Theorem 3}

\textbf{Theorem 3.}
	The game $\G^{\mathrm{S1}}$ is a stochastic potential game with the following potential function:
	\begin{align}\label{exp_potential_func_trans}
		\overline{\Phi}(\mathbf{X}) = \sum_{m=1}^{M}\tilde{\varphi}^{\overline{L}_{m}(\mathbf{X})},
	\end{align}
	where $\tilde{\varphi}$ is any value larger than a threshold $\tilde{\varphi}=\sqrt[\epsilon]{2}$, where $\epsilon$ is the minimum time scale of communication system.

\begin{IEEEproof}
	Suppose that an UE $i$ updates its decision $\boldsymbol{x}_{i}$ to $\boldsymbol{x}'_{i}$ (i.e., $x_{ij}=1 \rightarrow x_{ij'}=1$), that is, UE $i$ switches from an AP $j$ to a new one $j'$.
	Let $\mathbf{X}$ and $\mathbf{X'}=(\boldsymbol{x}_{i'},\mathbf{X}_{-i})$ denote the strategy profiles before and after the strategy update of UE $i$, respectively. 
	It is easy to see that the expected load of AP $j$ changes from $\overline{L}_{j}(\mathbf{X})$ to $\overline{L}_{j}(\mathbf{X'})=\overline{L}_{j}(\mathbf{X})-p_{i}\cdot T_{ij}^{\mathrm{Acc}}$ due to the leaving of UE $i$, while the load of AP $j'$ changes from $\overline{L}_{j'}(\mathbf{X})$ to $\overline{L}_{j'}(\mathbf{X'})=\overline{L}_{j'}(\mathbf{X})+p_{i}\cdot T_{ij}^{\mathrm{Acc}}$ due to the  joining of UE $i$.
	
	According to \eqref{transmission_delay}, \eqref{exp_acc}, and \eqref{def_exp_load_AP}, the change of UE $i$'s utility can be derived as:
	\begin{equation} \small \label{exp_change_acc}
		\begin{aligned}
			\Delta \overline{T}_{i}^{\mathrm{Acc}}(\mathbf{X\rightarrow X'}) = \overline{L}_{j'}(\mathbf{X}) + p_{i}\cdot T_{ij'}^{\mathrm{Acc}}- \overline{L}_{j}(\mathbf{X}).
		\end{aligned}
	\end{equation}
	The change of the expected potential function $\overline{\Phi}(\mathbf{X})$ can be derived as:
	\begin{equation} \small \label{exp_change_phi}
		\begin{aligned}
			&\Delta \overline{\Phi}(\mathbf{X}) \\ &=\tilde{\varphi}^{\overline{L}_{j}(\mathbf{X})- p_{i}\cdot T_{ij}^{\mathrm{Acc}}} + \tilde{\varphi}^{\overline{L}_{j'}(\mathbf{X})+ p_{i}\cdot T_{ij'}^{\mathrm{Acc}}} - \tilde{\varphi}^{\overline{L}_{j}(\mathbf{X})} - \tilde{\varphi}^{\overline{L}_{j'}(\mathbf{X})}.
		\end{aligned}
	\end{equation}
	Suppose that 
	$\Delta \overline{T}_{i}^{\mathrm{Acc}}(\mathbf{X\rightarrow X'})<0$, that is,
	\begin{equation} \small 
		\begin{aligned}
			\overline{L}_{j'}(\mathbf{X})+ p_{i}\cdot T_{ij'}^{\mathrm{Acc}} < \overline{L}_{j}(\mathbf{X}).
		\end{aligned}
	\end{equation}
	Let $\epsilon$ denote the minimum time scale (for resource scheduling) of wireless communication system, e.g., 1 millisecond for 5G NR system, such that:
	\begin{subequations} \small \label{small_constant_trans_exp}
		\begin{numcases}{}
			\overline{L}_{j'}(\mathbf{X}) + p_{i}\cdot T_{ij'}^{\mathrm{Acc}}+ \epsilon \leq \overline{L}_{j}(\mathbf{X}), & \label{small_constant_trans_exp_1}\\
			\overline{L}_{j}(\mathbf{X}) - p_{i}\cdot T_{ij}^{\mathrm{Acc}} + \epsilon \leq \overline{L}_{j}(\mathbf{X}). &\label{small_constant_trans_exp_2}
		\end{numcases} 
	\end{subequations}
	This implies that,
	\begin{subequations} \small \label{phi_comm_change_exp}
		\begin{numcases}{}
			\tilde{\varphi}^{\overline{L}_{j'}(\mathbf{X}) + p_{i}\cdot T_{ij'}^{\mathrm{Acc}}} \leq \tilde{\varphi}^{\overline{L}_{j}(\mathbf{X}) - \epsilon}, \\
			\tilde{\varphi}^{\overline{L}_{j}(\mathbf{X}) - p_{i}\cdot T_{ij}^{\mathrm{Acc}}} \leq \tilde{\varphi}^{\overline{L}_{j}(\mathbf{X}) - \epsilon}. 
		\end{numcases}
	\end{subequations}
	According to \eqref{exp_change_acc} and \eqref{small_constant_trans_exp}, we further have:
	\begin{equation} \small \label{change_bar_phi}
		\begin{aligned}
			\Delta\overline{\Phi}(\mathbf{X})&\leq 2\cdot\tilde{\varphi}^{\overline{L}_{j}(\mathbf{X}) - \epsilon} - \tilde{\varphi}^{\overline{L}_{j}(\mathbf{X})} - \tilde{\varphi}^{\overline{L}_{j'}(\mathbf{X}) }, \\
			&= \tilde{\varphi}^{\overline{L}_{j}(\mathbf{X})} \cdot (2 \cdot \tilde{\varphi}^{-\epsilon}-1) - \tilde{\varphi}^{\overline{L}_{j'}(\mathbf{X})}. 
		\end{aligned}
	\end{equation}
	
	Obviously, when $2 \cdot \tilde{\varphi}^{-\epsilon}-1 \leq 0$, we have $\Delta\overline{\Phi}(\mathbf{X} \rightarrow \mathbf{X'})<0$.
	That is, when $\tilde{\varphi}\ge \sqrt[\epsilon]{2}$, we have:
	$\mathrm{sgn}(\Delta \overline{T}_{i}^{\mathrm{Acc}}(\mathbf{X} \rightarrow \mathbf{X'})) = \mathrm{sgn}(\Delta\overline{\Phi}(\mathbf{X} \rightarrow \mathbf{X'}))$, which implies that the communication association game is a potential game with expected potential function $\overline{\Phi}(\mathbf{X}) =\sum_{m=1}^{M}\tilde{\varphi}^{\overline{L}_{m}(\mathbf{X})}$, for any $\tilde{\varphi}\ge \sqrt[\epsilon]{2}$. \par
\end{IEEEproof}

\subsection{Proof for Theorem 4}
\textbf{Theorem 4.}
	The game $\G^{\mathrm{S2}}$ is a stochastic potential game with the expected potential function $\overline{\Psi}(\mathbf{Y,D})$, as follows:
	\begin{equation} \small \label{exp_potential_func_offload}
		\begin{aligned}
			\overline{\Psi}(\mathbf{Y,D}) =\sum_{k=1}^{K}\tilde{\phi}^{\overline{I}_{k}(\mathbf{Y,D})}, 			
		\end{aligned}
	\end{equation}
	where $\tilde{\phi}$ is any value larger than a threshold $\tilde{\phi_0}=\sqrt[l]{2}$, where $l$ is the minimum time scale of computation system.

\begin{IEEEproof}
	We proof this theorem through three cases:
	\begin{enumerate}
		\item Case1: UE $i \in \N$ changes its computation offloading strategy while keeps the inference steps unchanged. 
		\item Case2: UE $i \in \N$ changes its inference steps while keeps the computation offloading strategy unchanged. 
		\item Case3: UE $i \in \N$ changes both the computation offloading strategy and the inference step strategy.
	\end{enumerate}	
	
	\emph{Case1}: 
	\\
	Suppose that an UE $i$ updates its decision $\boldsymbol{y}_{i}$ to $\boldsymbol{y}'_{i}$ (i.e., $y_{ik}=1\rightarrow y_{ik'}=1$), that is, UE $i$ switches from an ES $k$ to a new one $k'$. 
	Let $\mathbf{Y}$ and $\mathbf{Y'}=(\boldsymbol{y}_{i'},\mathbf{Y}_{i})$ denote the strategy profiles before and after the strategy update of UE $i$, respectively.
	It is easy to see that the expected load of ES $k$ changes from $\overline{I}_{k}(\mathbf{Y,D})$ to $\overline{I}_{k}(\mathbf{Y',D})- p_{i}\cdot T_{ik'}^{\mathrm{Comp}}$ due to the leaving of UE $i$, while the expected load of ES $k'$ changes from $\overline{I}_{k'}(\mathbf{Y,D})$ to $\overline{I}_{k'}(\mathbf{Y,D}) + p_{i}\cdot T_{ik'}^{\mathrm{Comp}}$ due to the joining of UE $i$.
	
	According to (\ref{compute_delay}), (\ref{exp_comp}), and (\ref{exp_load_ES}), the change of UE $i$'s utility can be derived as:
	\begin{equation} \small \label{change_exp_comp_delay}
		\begin{aligned}
			\Delta \overline{T}_{i}^{\mathrm{Comp}}(\mathbf{Y\rightarrow Y',D}) 
			=\overline{I}_{k'}(\mathbf{Y,D}) + p_{i}\cdot T_{ik'}^{\mathrm{Comp}} - \overline{I}_{k}(\mathbf{Y,D}).
		\end{aligned}
	\end{equation}
	The change of expected potential function $\overline{\Psi}(\mathbf{Y,D})$ can be derived as:
	\begin{multline} \small \label{change_exp_psi}
			\Delta \overline{\Psi}(\mathbf{Y\rightarrow Y',D}) =  \tilde{\phi}^{ \overline{I}_{k'}(\mathbf{Y,D}) + p_{i}\cdot T_{ik'}^{\mathrm{Comp}} } + \tilde{\phi}^{\overline{I}_{k}(\mathbf{Y,D}) - T_{ik'}^{\mathrm{Comp}} } \\ - \tilde{\phi}^{\overline{I}_{k}(\mathbf{Y,D})} -\tilde{\phi}^{\overline{I}_{k'}(\mathbf{Y,D})}.
	\end{multline}
	Suppose that $\Delta \overline{T}_{i}^{\mathrm{Comp}}(\mathbf{Y\rightarrow Y',D})<0$. That is,
	\begin{equation} \small 
		\begin{aligned}
			\overline{I}_{k'}(\mathbf{Y,D}) + p_{i}\cdot T_{ik'}^{\mathrm{Comp}} < \overline{I}_{k}(\mathbf{Y,D}),
		\end{aligned}
	\end{equation}
	Let $l$ denote the minimum time scale (for resource scheduling) of computation system, e.g., 1 millisecond, such that:
	\begin{subequations}  \label{small_constant_comp}
		\begin{numcases}{}
			\overline{I}_{k'}(\mathbf{Y,D}) + p_{i}\cdot T_{ik'}^{\mathrm{Comp}} + l\leq \overline{I}_{k}(\mathbf{Y,D}), \label{small_constant3} &\\
			\overline{I}_{k}(\mathbf{Y,D}) - p_{i}\cdot T_{ik}^{\mathrm{Comp}} + l \leq \overline{I}_{k}(\mathbf{Y,D}). \label{small_constant4} &
		\end{numcases}
	\end{subequations}
	This implies that:
	\begin{subequations}\label{tilde_phi_comp_exp_change}
		\begin{numcases}{}
			\tilde{\phi}^{ \overline{I}_{k'}(\mathbf{Y,D}) + p_{i}\cdot T_{ik}^{\mathrm{Comp}} } \leq \tilde{\phi}^{\overline{I}_{k}(\mathbf{Y,D}) - l}, &\\
			\tilde{\phi}^{ \overline{I}_{k}(\mathbf{Y,D}) - p_{i}\cdot T_{ik}^{\mathrm{Comp}} } \leq \tilde{\phi}^{ \overline{I}_{k}(\mathbf{Y,D}) - l}. &
		\end{numcases}
	\end{subequations}
	According to \eqref{change_exp_psi} and \eqref{tilde_phi_comp_exp_change}, we further have:
	\begin{multline} \small \label{change_bar_psi}
			\Delta \overline{\Psi}(\mathbf{Y\rightarrow Y',D}) \\
			\leq 2\cdot \tilde{\phi}^{\overline{I}_{k}(\mathbf{Y,D})-l}-\tilde{\phi}^{\overline{I}_{k}(\mathbf{Y,D})-l} - \tilde{\phi}^{\overline{I}_{k'}(\mathbf{Y,D})} \\
			= \tilde{\phi}^{\overline{I}_k(\mathbf{Y,D})} \cdot (2\cdot \tilde{\phi}^{-l}-1) - \tilde{\phi}^{\overline{I}_{k'}(\mathbf{Y,D})}.
	\end{multline}
	
	Obviously, when $2\cdot \tilde{\phi}^{-l}-1 \leq 0$, we have $\Delta \overline{\Psi}(\mathbf{Y\rightarrow Y',D})<0$. 
	That is, when $\tilde{\phi}\geq \sqrt[l]{2}$, we have:
	$\mathrm{sgn}(\Delta \overline{T}_{i}^{\mathrm{Comp}}(\mathbf{Y\rightarrow Y',D})) = \mathrm{sgn}(\Delta \overline{\Psi}(\mathbf{Y\rightarrow Y',D}))$,
	which implies that the stochastic computation offloading game is a potential game with expected potential function $\overline{\Psi}(\mathbf{Y,D}) =\sum_{k=1}^{K}\tilde{\phi}^{\overline{I}_{k}(\mathbf{Y,D})}$, for any $\tilde{\phi_0}=\sqrt[l]{2}$.
	
	\emph{Case2}: 
	\\
	Suppose that an UE $i$ updates its inference step $\boldsymbol{d}_{i}$ to $\boldsymbol{d}'_{i}$.
	Let $\mathbf{D}$ and $\mathbf{D'}=(\boldsymbol{d}'_{i},\mathbf{D}_{-i})$ denote the strategy profiles before and after the strategy update of UE $i$, respectively.
	According to (\ref{compute_delay}), (\ref{exp_comp}), and (\ref{exp_load_ES}), the change of UE $i$'s utility can be derived as: 
	\begin{equation} \small \label{change_exp_comp_delay}
		\begin{aligned}
			\Delta \overline{T}_{i}^{\mathrm{Comp}}(\mathbf{Y,D\rightarrow D'}) 
			=\overline{I}_{k}(\mathbf{Y,D'}) - \overline{I}_{k}(\mathbf{Y,D}).
		\end{aligned}
	\end{equation}
	The change of expected potential function $\overline{\Psi}(\mathbf{Y,D})$ can be derived as:
	\begin{equation} \small 
		\begin{aligned}
			\Delta \overline{\Psi}(\mathbf{Y,D\rightarrow D'}) = \tilde{\phi}^{\overline{I}_{k}(\mathbf{Y,D'})} - \tilde{\phi}^{\overline{I}_{k}(\mathbf{Y,D})}.
		\end{aligned}
	\end{equation}
	Obviously, we have $\mathrm{sgn}(\Delta \overline{T}_{i}^{\mathrm{Comp}}(\mathbf{Y,D\rightarrow D'})) = \mathrm{sgn}(\Delta \overline{\Psi}(\mathbf{Y,D\rightarrow D'}))$ and hence the computation offloading game is a potential game in Case2.
	
	For \emph{Case3}, by the similar argument in \emph{Case1} and \emph{Case2}, we can also demonstrate that:
	\begin{equation} \small 
		\begin{aligned}
			\mathrm{sgn}(\Delta \overline{T}_{i}^{\mathrm{Comp}}(\mathbf{Y,D\rightarrow D'})) = \mathrm{sgn}(\Delta\overline{\Psi}(\mathbf{Y,D\rightarrow D'})). 
		\end{aligned}
	\end{equation}
	Combining the results from the three cases above, we can conclude that the computation offloading game $\G^{\mathrm{C2}}$ is a potential game with expected potential function  $\overline{\Psi}(\mathbf{Y,D}) =\sum_{k=1}^{K}\phi^{\overline{I_{k}}(\mathbf{Y,D})}$, for any $\tilde{\phi_0}\geq \sqrt[l]{2}$.
	
\end{IEEEproof}

\subsection{Proof for Lemma 2}

\textbf{Lemma 2.}
	Let $\mathbf{P} \triangleq (\mathbf{p}_1,\dots,\mathbf{p}_N)$ denote the mixed-strategy profile of the game $\G^{\mathrm{S1}}$, with initial state $\mathbf{P}^0 = [\frac{1}{M}]_{N\times M}$. For a sufficiently small learning rate ($\alpha \rightarrow 0$), the sequence of strategy profiles ${ \mathbf{P}^{\tau} }$ generated by the learning update converges weakly to the solution trajectory of the following ODE:
	\begin{equation} \small \label{ODE}
		\frac{d\mathbf{P}}{d\tau} = \mathscr{H}(\mathbf{P}), \qquad \mathbf{P}(0) = \mathbf{P}^{0},
	\end{equation}
	where $\mathscr{H}(\mathbf{P})$ is the expected update direction, defined as the conditional expectation:
	\begin{equation} \small \label{expect_condition_func}
		\mathscr{H}(\mathbf{P}) = \mathbb{E}\left[\mathscr{F}\bigl(\mathbf{P}^{\tau},\mathbf{X}^{\tau},\tilde{\mathbf{r}}^{\tau}\bigr) \big| \mathbf{P}^{\tau} = \mathbf{P}\right].
	\end{equation}
	Here, $\mathbf{X}^{\tau}=(\boldsymbol{x}_1^{\tau},\dots,\boldsymbol{x}_N^{\tau})$ is the action matrix at iteration $\tau$, $\tilde{\mathbf{r}}^{\tau}=(\tilde{r}_1^{\tau},\dots,\tilde{r}_N^{\tau})$ is the corresponding reward vector, and the function $\mathscr{F}(\cdot)$ encodes the update rule given in \eqref{unified_update_rule}, i.e., $\mathscr{F}(\mathbf{P}^{\tau},\mathbf{X}^{\tau},\tilde{\mathbf{r}}^{\tau}) = \mathbf{P}^{\tau+1}$.

\begin{IEEEproof}
	The proof follows the established framework of weak convergence for stochastic approximation algorithms, as detailed in Theorem 3.1 of \cite{SLA0}. 
	The procedure consists of three main steps.
	
	First, the decentralized update rule \eqref{unified_update_rule} is cast into the canonical stochastic approximation form: $\mathbf{P}^{\tau+1} = \mathbf{P}^{\tau} + \alpha [ \mathscr{H}(\mathbf{P}^{\tau}) + \mathbf{M}^{\tau+1} ]$, where $\mathscr{H}(\mathbf{P}^{\tau})$ is the conditional expected update given by \eqref{expect_condition_func} and $\mathbf{M}^{\tau+1}$ is a martingale difference noise term with zero conditional mean.
	 
	Then, the convergence of the algorithm is then established by verifying the key conditions required by the fundamental weak convergence theorem \cite{weak_converge}. These conditions are satisfied in our setting: the joint strategy space is compact, the rewards are bounded, the update function is continuous and bounded, and given a fixed strategy profile, the actions and rewards are conditionally independent and identically distributed across iterations. Consequently, as the learning rate $\alpha \rightarrow 0$, the interpolated process of the strategy sequence ${\mathbf{P}^{\tau}}$ converges weakly to the solution of the mean ordinary differential equation (ODE) \eqref{ODE}. 
	
	Finally, the specific expression for the drift function $\mathscr{H}(\mathbf{P})$ in \eqref{expect_condition_func} is derived directly by taking the conditional expectation of the update rule \eqref{unified_update_rule}, following the same computation as presented in eq. (16) of \cite{SLA0}. This completes the proof.
	
\end{IEEEproof}

\subsection{Proof for Lemma 3}

\textbf{Lemma 3.}
	With a sufficiently small step-size of learning rate  $\alpha\rightarrow 0$, Alg.1 converges to a stable stationary point of the ODE given in (\ref{ODE-xx}).
		
\begin{IEEEproof}
	Please refer to Lemma 5 in \cite{SLA3}
\end{IEEEproof}

\subsection{Proof for Lemma 4}

\textbf{Lemma 4.}
The following two statements are both true, and they are equivalent:
\begin{enumerate}
	\item All the stable stationary points of the ODE in (\ref{ODE-xx}) are NE points of $\G^{\mathrm{S1}}$.
	\item All the NE points of $\G^{\mathrm{S1}}$ are stable stationary points of the ODE in (\ref{ODE-xx}).
\end{enumerate}

\begin{IEEEproof}
	Please refer to Theorem 3.2 in \cite{SLA0}.
\end{IEEEproof}

\end{document}